\renewcommand{\backref}[1]{}
\renewcommand{\backrefalt}[4]{%
\ifcase #1 %
\or 
[p.\ #2]%
\else 
[pp.\ #2]%
\fi}
\declaretheorem[numberwithin=section]{theorem}
\declaretheorem[sibling=theorem]{lemma}
\declaretheorem[sibling=theorem]{corollary}
\declaretheorem[sibling=theorem,name=Proposition]{prop}
\theoremstyle{definition}
\declaretheorem[sibling=theorem]{definition}
\declaretheorem[sibling=theorem]{remark}
\newcommand{\para}{%
  \@startsection{paragraph}{4}%
  {\z@}{2ex \@plus 3.3ex \@minus .2ex}{-1em}%
  {\normalfont\normalsize\bfseries}%
}
\DeclarePairedDelimiter{\abs}{\lvert}{\rvert}
\DeclarePairedDelimiter{\norm}{\lVert}{\lVert}
\DeclarePairedDelimiter{\angles}{\langle}{\rangle}
\definecolor{todopink}{HTML}{ffbae6}
\definecolor{diffcolor}{HTML}{cf327b}
\newcommand{\inorm}[1]{\norm{{#1}}_{\infty}}
\newcommand{\iinorm}[1]{\norm{{#1}}_{\infty \to \infty}}
\DeclareMathOperator{\poly}{poly}
\DeclareMathOperator{\polylog}{polylog}
\DeclareMathOperator{\Tr}{Tr}
\DeclareMathOperator{\Supp}{Supp}
\DeclareMathOperator{\Imag}{Im}
\DeclareMathOperator{\D}{D_{KL}}
\DeclareMathOperator*{\E}{E}
\newcommand{\dist}{{\mathrm{dist}}}
\DeclareMathOperator{\gr}{\mathrm{Gra}}
\newcommand{\eps}{\varepsilon}
\newcommand{\diff}{\partial}
\newcommand{\mdiff}{{\mathbf{\mathcal D}}}
\newcommand{\ZZ}{{\mathbb{Z}}}
\newcommand{\RR}{{\mathbb{R}}}
\newcommand{\CC}{{\mathbb{C}}}
\newcommand{\calZ}{{\mathcal{Z}}} 
\newcommand{\symm}{\operatorname{S}} 
\newcommand{\fun}{{\mathcal{F}}}
\newcommand{\est}{\hat{E}}
\newcommand{\rd}{{\mathrm{d}}}
\newcommand{\truncM}{{\mathfrak{m}}}
\newcommand{\textdiff}[1]{\textcolor{diffcolor}{#1}}
\newcommand{\fulldimension}{{\mathsf{D}}} 
\newcommand{\betaF}{{\mathcal{L}}} 
\newcommand{\numTerms}{M}
\newcommand{\numQubits}{N}
\newcommand{\graph}{{\mathfrak{G}}}
\newcommand{\degree}{{\mathfrak{d}}}
\newcommand{\cluster}[1]{{\mathbf{#1}}} 
\newcommand{\partition}[1]{{\mathsf{#1}}} 
\DeclareMathOperator{\marked}{Mar} 
\newcommand{\gpart}[1]{{\mathfrak{#1}}} 
\DeclareMathOperator{\gpartcon}{{PaC}} 
\newcommand{\vct}[1]{{{#1}}}
\newcommand{\multisetnumber}[2]{{\left(\!\!\middle(\genfrac{}{}{0ex}{}{#1}{#2}\middle)\!\!\right)}}
\begin{document}

\title{Optimal learning of quantum Hamiltonians\\from high-temperature Gibbs states}

\author{
    Jeongwan Haah\footnote{Microsoft Quantum and Microsoft Research, Redmond, WA, USA.}
    \and 
    Robin Kothari\footnote{Microsoft Quantum and Microsoft Research, Redmond, WA, USA.} 
    \and 
    Ewin Tang\footnote{University of Washington, Seattle, WA, USA. Some of this work was performed while E.T.\ was a research intern at Microsoft Quantum.} 
}
\maketitle

\begin{abstract}
    We study the problem of learning a Hamiltonian $H$ to precision $\eps$, supposing we are given copies of its Gibbs state $\rho=\exp(-\beta H)/\Tr(\exp(-\beta H))$ at a known inverse temperature $\beta$.
    Anshu, Arunachalam, Kuwahara, and Soleimanifar~\cite{AAKS21} recently studied the sample complexity (number of copies of $\rho$ needed) of this problem for geometrically local $\numQubits$-qubit Hamiltonians.
    In the high-temperature (low $\beta$) regime, their algorithm has sample complexity poly$(\numQubits, 1/\beta,1/\eps)$ and can be implemented with polynomial, but suboptimal, time complexity.

    In this paper, we study the same question for a more general class of Hamiltonians. We show how to learn the coefficients of a Hamiltonian to error $\eps$ with sample complexity $S = O(\log N/(\beta\eps)^{2})$ and time complexity linear in the sample size, $O(S \numQubits)$.
    Furthermore, we prove a matching lower bound showing that our algorithm's sample complexity is optimal, and hence our time complexity is also optimal.
    
    In the appendix, we show that virtually the same algorithm can be used to learn $H$ from a real-time evolution unitary $e^{-it H}$ in a small $t$ regime with similar sample and time complexity.
\end{abstract}

\tableofcontents

\listoftodos

\section{Introduction}

In this paper we study a problem that is at the intersection of quantum many-body physics and machine learning: learning the Hamiltonian of a quantum system from copies of its Gibbs state. 
This problem has recently received much attention in the quantum community~\cite{BAL19,QR19,EHF19,BGP+20}, and the classical analogue of this task, learning undirected graphical models or Markov random fields (MRFs), is well studied in the machine learning community~\cite{KS01,AKN06,sw12,bms13,Bresler2015,VMLC16,km17}.

\para{Motivation.}
This problem has a straightforward physical motivation.
The Hamiltonian $H$ of a quantum system is an operator that tells us how the constituents of the system interact with each other and how the system evolves in time, which is governed by the Schr\"odinger equation. 
The Hamiltonian also tells us what the equilibrium state of the quantum system will be if it is in contact with the environment at a particular temperature and reaches thermal equilibrium. 
This state, which is a function of the temperature and the Hamiltonian, is called the Gibbs state. Formally, for a Hamiltonian $H$ and inverse temperature $\beta$ (i.e., temperature $1/\beta$), the Gibbs state is $\rho = \exp(-\beta H)/\Tr(\exp(-\beta H))$.

In the Hamiltonian learning problem, we imagine that we have a system governed by an unknown Hamiltonian $H$ from a known class of physically reasonable Hamiltonians, such as geometrically local Hamiltonians, and we have access to copies of its Gibbs state at a known inverse temperature $\beta$.
These copies, for example, result from leaving the system to interact with the environment at a known temperature and stabilize: eventually, the system is described by the Gibbs state. 
Our goal is to learn the Hamiltonian $H$, from the assumed class of Hamiltonians, while minimizing the number of copies of $\rho$ required and the running time of the algorithm.
These are called the {sample complexity} and {time complexity} of the algorithm.

The classical analogue of Hamiltonian learning is the problem of learning undirected graphical models or Markov random fields.
This is, in fact, a special case of Hamiltonian learning where everything is classical, which means that the Hamiltonian is a diagonal operator, and consequently the Gibbs state is a diagonal density operator, which is just a sample from a classical probability distribution.
The goal is again to learn the parameters of the classical Hamiltonian from these samples.
This classical problem has been studied for over 50 years, usually in the harder setting of learning the terms and parameters of the classical Hamiltonian, starting with the work of \cite{cl68}, to more recent works that provide nearly-sample-optimal algorithms, with time-efficient implementations \cite{sw12,bms13,Bresler2015,VMLC16,km17}.
Markov random fields find applications in a variety of areas including computer graphics, vision, economics, sociology, and biology~\cite{KS80,Cli90,Lau96,JEMF06,KF09,Li09}, so studying its quantum generalization is well-motivated independently from its physical motivation.

\para{Problem statement.}
The formal statement of the Hamiltonian learning problem is as follows.
Consider a quantum system of $\numQubits$ qubits and a Hamiltonian $H=\sum_{a=1}^\numTerms \lambda_a E_a \in \CC^{2^\numQubits \times 2^\numQubits}$ consisting of $\numTerms$ terms, where the operators $E_a \in \CC^{2^\numQubits \times 2^\numQubits}$ are known, distinct, non-identity Pauli operators%
\footnote{
    It is not essential that these are Pauli operators, but it is convenient for us that the entries of Pauli matrices are small integers, which allows us compute quantities of interest exactly and not worry about numerical precision.
}
and the coefficients satisfy $\lambda_a \in [-1,1]$ for all $a \in [\numTerms] = \{1,\ldots, \numTerms\}$.
We assume the Hamiltonian has no identity term since the Gibbs state is invariant under adding multiples of the identity matrix to the Hamiltonian.
We assume terms are distinct because identical terms can be merged.
Further suppose that this Hamiltonian $H$ is low-intersection (a constraint defined below).

Given copies of the Gibbs state of this unknown low-intersection Hamiltonian $H$ and known inverse temperature $\beta$, our goal is to learn the coefficients $\lambda_a$ to additive error $\eps$, or equivalently, to learn the vector of coefficients to error $\eps$ in $\ell_\infty$ norm.
Previous work on the problem has also considered the goal of learning this vector to $\ell_2$ norm, so we study this version of the problem as well. 

We define the class of \emph{low-intersection Hamiltonians} to be the set of Hamiltonians where each operator $E_a$ is supported on a constant number of qubits, meaning that it acts as the identity operator on all but a constant number of qubits (which are its support), and for each operator $E_a$, there are only a constant number of other operators $E_b$ such that $E_a$ and $E_b$'s supports have nontrivial intersection.

Notice that this definition has no geometric constraints.
Instead, it generalizes geometrically local Hamiltonians in fixed-dimensional Euclidean spaces, which is the class of physically motivated Hamiltonians with geometric constraints considered in prior work~\cite{AAKS21}.
In such Hamiltonians, each operator $E_a$ is only supported on a constant number of qubits that are adjacent in the underlying geometry (e.g., a 2-dimensional grid).
Since the dimension is fixed and interactions must be local, each operator $E_a$ can only act nontrivially on a constant number of qubits, and furthermore each qubit can only be nontrivially involved in a constant number of operators $E_a$.
So, a geometrically local Hamiltonian in any constant-dimensional space is always low-intersection.

The converse is not true, though.
For example, if we arrange qubits on the vertices of a constant-degree expander graph, and let edges denote $2$-qubit interaction terms, such a Hamiltonian would be a low-intersection Hamiltonian, but not a geometrically local Hamiltonian in any constant-dimensional Euclidean space.
In this introduction we will assume that we have a low-intersection Hamiltonian whose degree is a constant independent of other parameters, although our general algorithm can also handle growing degree. 

\para{Prior work.}
We first discuss the complexity of the classical problem to understand the best we could do, since classical Hamiltonians, also known as Markov random fields, are a special case of quantum Hamiltonians.
The classical problem is then the parameter learning of low-intersection MRFs to $\ell_\infty$ error $\eps$.
The sample complexity and time complexity of this problem are
\begin{equation}\label{eq:classical}
    \frac{2^{O(\beta)}\log \numQubits}{\beta^2\eps^2} \quad\text{and}\quad \frac{2^{O(\beta)} \numQubits \log \numQubits}{\beta^2\eps^2},
\end{equation}
respectively.
The sample complexity is optimal up to the constant in the exponent~\cite{sw12}, so the time complexity, which is the time needed to read all of the samples, is also optimal.
This result appears to be folklore, so in \cref{sec:mrfs}, we give a simple algorithm demonstrating this result.

Most of the classical literature focuses on the harder task of \emph{structure learning}, which is learning the terms of the Hamiltonian in addition to the coefficients, for Ising models, which are classical Hamiltonians with only pairwise interactions.
For structure learning in the Ising model, the same sample complexity bound can be achieved in time only polynomially worse than the sample complexity times the size of each sample~\cite{VMLC16,km17}.

Notice that the problem becomes harder as $\beta \to 0$ and as $\beta \to \infty$. This is intuitive because the state at $\beta = 0$ is the maximally mixed state (or the uniform distribution in the classical case), which contains no information about the Hamiltonian. When $\beta$ tends to $\infty$, the state tends to the ground state of the Hamiltonian, which does not have enough information to reconstruct the entire Hamiltonian.

The quantum version of this problem for geometrically local Hamiltonians was 
recently studied.
The algorithm in \cite{BAL19} allows us to 
learn Hamiltonians from stationary states of Hamiltonian dynamics 
(which include Gibbs states)
or from the dynamics itself
by measuring local observables and solving a system of linear equations;
however, it was unclear how the algorithm would perform in the worst-case.
More recently, Anshu, Arunachalam, Kuwahara, and Soleimanifar~\cite{AAKS21} was the first to rigorously establish sample complexity upper bounds for this problem in the full range of parameters, and in particular, for all inverse temperatures $\beta$.
They showed that a geometrically local Hamiltonian in a constant-dimensional space can be learned to $\ell_\infty$ error $\eps$ using 
\begin{equation}\label{eq:AAKS}
    O\left(\frac{2^{\poly(\beta)}\numQubits^2\log \numQubits}{\beta^c\eps^2}\right)
\end{equation}
samples\footnote{
    Actually, they claim a slightly weaker statement: learning to $\ell_2$ error $\eps$ using $\numQubits$ times the expression in \cref{eq:AAKS} many samples.
    We derive the version stated here in \cref{rm:aaks-inf}.
}, for some constant $c>4$.
Note that for a geometrically local Hamiltonian, the number of terms $N = \Theta(M)$, so we have expressed the bound in terms of $\numQubits$.

This upper bound is worse than the classical sample complexity in \cref{eq:classical} in several regards.
First, it has worse dependence on $\beta$ both in the numerator and the denominator, which means it is worse in the high-temperature and low-temperature regime.
Second, the dependence on $\numQubits$ is quadratic, whereas the dependence on $\numQubits$ is logarithmic in the classical upper bound\footnote{In the $\ell_2$ error setting, though, this classical upper bound has a factor of $\numQubits$, and so the quantum bound is polynomially close to the classical bound in the high-temperature setting.}.
This leaves two natural open questions: Can we solve the quantum problem with sample complexity matching \cref{eq:classical}?
And what about time complexity?

The question of time complexity is not explicitly addressed in \cite{AAKS21}, but they note that the problem can be solved in polynomial time in the high-temperature regime, by combining their algorithm with the polynomial-time algorithm for computing partition functions at high temperatures due to~\cite{kkb20}.
We discuss this approach further in the ``Comparison with previous quantum algorithms'' section, but in brief, this approach leaves significant room for improvement in both sample complexity and time complexity.

\para{Our results.}
We study the Hamiltonian learning problem in the high-temperature regime, and we are able to obtain an algorithm with optimal sample complexity and optimal time complexity. The high-temperature regime is where we know $\beta$ is smaller than some fixed constant called the critical inverse temperature, $\beta_c$. This constant $\beta_c$ depends only on the constant in the definition of a low-intersection Hamiltonian, and not on $\numQubits$ or $\numTerms$.

Our main algorithmic result is the following. A more precise version can be found in \Cref{sec:algorithm}.

\begin{theorem}[Algorithm]\label{thm:algorithm}
    Let $H$ be a low-intersection Hamiltonian on $\numQubits$ qubits, $\eps>0$, and $\beta < \beta_c$. 
    Then we can learn the coefficients of $H$ with $\ell_\infty$ error $\eps$ and failure probability $\delta$ using $O\bigl(\frac{1}{\beta^2\eps^2}\log\frac{\numQubits}{\delta}\bigr)$ samples.
    Consequently, we can learn the coefficients of $H$ with $\ell_2$ error $\eps$ and failure probability $\delta$ using $O\bigl(\frac{\numQubits}{\beta^2\eps^2}\log\frac{\numQubits}{\delta}\bigr)$ samples. 
    In both cases, the time complexity is linear in the sample size, which is the sample complexity multiplied by $N$, the size of each sample.
\end{theorem}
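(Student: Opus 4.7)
The plan is to reduce Hamiltonian learning to estimating the Pauli marginals $\mu_a := \Tr(\rho E_a)$ for each $a\in[\numTerms]$, and then to invert the map $\lambda \mapsto \mu$ locally, exploiting the low-intersection structure and the high-temperature condition $\beta<\beta_c$. Note that since $M = O(\numQubits)$ for low-intersection Hamiltonians, the sample bound $S=O(\log(\numTerms/\delta)/(\beta\eps)^2)$ matches the claimed $O(\log(\numQubits/\delta)/(\beta\eps)^2)$, and the $\ell_2$ statement follows from $\|v\|_2\le\sqrt{\numTerms}\,\|v\|_\infty$ by setting the $\ell_\infty$ target to $\eps/\sqrt{\numTerms}$.

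For Step 1 (marginal estimation), I would draw $S$ copies of $\rho$ and extract, from every copy, a $\pm 1$ unbiased estimator of each $\mu_a$. Since every $E_a$ acts on $O(1)$ qubits and the intersection graph has bounded degree, this is achieved with a constant number of measurement settings—for example, by partitioning the $E_a$ into $O(1)$ classes of commuting Paulis, or by uniformly random local Pauli measurements (classical shadows) with constant-factor overhead. Hoeffding's inequality together with a union bound over the $\numTerms$ terms yields $\|\hat\mu-\mu\|_\infty\le c\beta\eps$ with probability $\ge 1-\delta$, for a suitable constant $c$. Each $\hat\mu_a$ is a running mean over its $O(S)$ relevant outcomes, so all $\numTerms$ estimates are formed in $O(\numTerms S)=O(\numQubits S)$ arithmetic operations, which is linear in the sample size. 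For Step 2 (local inversion), the power series
\[
    \mu_a \;=\; -\beta\lambda_a \;+\; \sum_{k\ge 2}\beta^k P_a^{(k)}(\lambda),
\]
obtained by expanding $\Tr(E_a e^{-\beta H})/Z$ around $\beta=0$, has the property that each $P_a^{(k)}$ depends only on those $\lambda_b$ with $E_b$ in the radius-$k$ neighborhood of $E_a$ in the intersection graph. At high temperature this cluster-expansion-type series converges geometrically, which implies that $\lambda\mapsto\mu$ is a bi-Lipschitz diffeomorphism on $[-1,1]^{\numTerms}$ whose inverse has $\ell_\infty\to\ell_\infty$ Lipschitz constant $O(1/\beta)$. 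Consequently $\|\hat\mu-\mu\|_\infty\le c\beta\eps$ gives $\|\hat\lambda-\lambda\|_\infty\le\eps$. The inverse can be evaluated in $O(\numQubits)$ additional time by a local Newton or contraction iteration initialized at $\hat\lambda_a=-\hat\mu_a/\beta$, since each update touches only the $O(1)$ neighbors of $a$ and $O(1)$ iterations suffice by the contraction estimate.

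The main obstacle is Step 2: establishing the quantitative convergence of the high-temperature expansion in the \emph{non-commuting} quantum setting, and extracting the explicit $O(1/\beta)$ bound on the condition number of $\lambda\mapsto\mu$ in terms of the low-intersection degree alone, with no geometric embedding hypothesis. This is a quantum analyticity/cluster-expansion statement and is where the critical inverse temperature $\beta_c$ enters; pinning it down in terms of the intersection degree is the technical crux. Once in hand, Step 1 is standard concentration, and the $O(\numQubits S)$ runtime follows because every ingredient—measurement, empirical averaging, and local inversion—is entirely local in the constant-degree intersection graph.
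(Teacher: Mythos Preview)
Your proposal matches the paper's approach: estimate the marginals to $O(\beta\eps)$ accuracy via simultaneous Pauli measurements on color classes of the dual graph (the paper's \cref{lem:stepone}), then invert $\lambda\mapsto\mu$ using the cluster-expansion series (\cref{thm:expectation-value}) together with a Newton-type iteration (\cref{thm:newton}), with your $\ell_\infty\!\to\!\ell_\infty$ bound $\|J^{-1}\|_{\infty\to\infty}=O(1/\beta)$ being exactly \cref{lem:jac-bound}. One correction to your runtime sketch: a contraction or Newton iteration needs $\Theta(\log(1/\beta\eps))$ rounds rather than $O(1)$, and the truncated series for each $\mu_a$ must retain terms at graph distance up to $\truncM=O(\log(1/\beta\eps))$ rather than just immediate neighbors---both contribute $\poly(1/\eps)$ factors to the inversion time, which is nonetheless dominated by the $O(\numQubits S)$ sample-reading cost, so your headline claim survives.
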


Our upper bound improves on the sample complexity of \cite{AAKS21} and indeed matches the sample complexity of the classical algorithm in the high-temperature regime, where the $2^{O(\beta)}$ term can be dropped since it is constant.
Furthermore, our algorithm has optimal time complexity. 

Along the way, we show that the log-partition function is $(\frac{\beta^2}{2})$-strongly convex in the high-temperature regime; this is the main quantity bounded by \cite{AAKS21} to achieve their sample complexity result, and our analysis improves this strong convexity parameter to within a constant factor of its true value.
As observed in \cite{AAKS21}, this strong convexity bound implies a lower bound on the variance of macroscopic observables in thermal equilibrium.
Specifically, for a local operator $\sum_{a=1}^\numTerms v_a E_a$, its variance with respect to the Gibbs state is $v^\dagger (\nabla^{\otimes 2}  \betaF) v = \Omega(\beta^2 \norm{v}_2^2)$, where $\nabla^{\otimes 2}\betaF$ is the Hessian of the log-partition function, improving on the bound $\Omega(\beta^{c}\norm{v}_2^2/\numQubits)$ implied by~\cite{AAKS21}.

We also prove a matching lower bound on the sample complexity showing that our algorithm's sample complexity cannot be improved. Our lower bound significantly improves on the lower bound shown in \cite{AAKS21} (displayed below in \cref{eq:AAKSlb}), holds for the full range of $\beta$, and matches our algorithm's complexity in the high-temperature regime. A more formal version appears as \Cref{thm:lowerboundinfty} and \Cref{thm:lowerboundtwo}.

\begin{theorem}[Lower bound]\label{thm:lowerbound}
    For any $\eps \in (0,1/2]$, any $\beta>0$, and any $\numQubits$, there exists a 2-local Hamiltonian on $\numQubits$ qubits such that the sample complexity of learning its coefficients to $\ell_\infty$ error $\eps$ and failure probability $\delta$ is $\Omega\left(\frac{\exp(\beta)}{\beta^2\eps^2}\log\frac{\numQubits}{\delta}\right)$, and the sample complexity of learning its coefficients to $\ell_2$ error $\eps$ and constant failure probability is $\Omega\left(\frac{\exp(\beta)\numQubits}{\beta^2\eps^2}\right)$.
\end{theorem}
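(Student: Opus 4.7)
My plan is to reduce the quantum lower bound to a purely classical one by using a non-interacting Hamiltonian of the form $H = \sum_{i=1}^\numQubits \lambda_i Z_i$ (or, to obtain genuinely $2$-local terms, $H = \sum_{i=1}^{\lfloor \numQubits/2 \rfloor} \lambda_i Z_{2i-1}Z_{2i}$ on disjoint pairs). Since every Pauli in $H$ is diagonal in the computational basis, so is every $\rho^{\otimes n}$, and any POVM's outcome statistics depend only on the diagonal of $\rho^{\otimes n}$, namely the product distribution $\prod_i \mathrm{Ber}(p_i)^{\otimes n}$ with $p_i = 1/(1+e^{2\beta\lambda_i})$. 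Entangled measurements across copies and qubits thus give no advantage, and it suffices to prove a lower bound for classical estimation of $\numQubits$ independent Bernoulli parameters from product samples.

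Restricted to a single coordinate, a direct calculation gives Fisher information $I(\lambda) = \beta^2/\cosh^2(\beta\lambda)$; by picking the base value $\lambda_0$ appropriately (say $\lambda_0 = 0$ when $\beta = O(1)$ and $\lambda_0 \asymp 1$ --- e.g.\ $\lambda_0 = 1/2$ --- when $\beta$ is large, so that $\cosh^2(\beta\lambda_0) \asymp e^\beta$), the per-sample KL divergence between the single-coordinate Gibbs states at $\lambda_0 + \eps$ and $\lambda_0 - \eps$ is $\Theta(\beta^2\eps^2 e^{-\beta})$ uniformly in $\beta$. A standard two-point argument (Le Cam's method, or Bretagnolle--Huber to pick up the $\log(1/\delta)$ dependence cleanly) then produces a single-parameter lower bound of $\Omega\bigl(e^\beta \log(1/\delta)/(\beta^2\eps^2)\bigr)$ samples to estimate any $\lambda_i$ to additive error $\eps$ with failure probability $\delta$.

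For the $\ell_\infty$ bound, I would upgrade this to a multi-hypothesis test by considering the $(\numQubits+1)$-element family $\{\lambda^\star\} \cup \{\lambda^\star + 2\eps e_i : i \in [\numQubits]\}$ with $\lambda^\star_i \equiv \lambda_0$: any algorithm with $\ell_\infty$ error $\eps$ must identify which hypothesis holds, yet every pair of members' product distributions differs in only one coordinate, so the per-sample KL remains $\Theta(\beta^2 \eps^2 e^{-\beta})$. Fano's inequality on this family contributes the $\log \numQubits$ factor, which together with the single-parameter $\log(1/\delta)$ factor combines (by taking the maximum of the two lower bounds) to the claimed $\Omega(e^\beta\log(\numQubits/\delta)/(\beta^2\eps^2))$ bound. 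For the $\ell_2$ case, I would instead apply Assouad's lemma to the hypercube $\{\lambda^\star + \alpha \sigma : \sigma \in \{\pm 1\}^\numQubits\}$ with $\alpha = \Theta(\eps/\sqrt{\numQubits})$; this produces nearest-neighbor KL $\Theta(\beta^2 \eps^2 e^{-\beta}/\numQubits)$ against an $\ell_2$ packing radius of $\Omega(\eps)$, directly giving the $\Omega(\numQubits e^\beta/(\beta^2\eps^2))$ bound against constant failure probability.

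I expect the main technical subtlety to be tuning $\lambda_0$ uniformly in $\beta$ --- in particular ensuring $\lambda_0 \pm \eps \in [-1,1]$ in the high-$\beta$ regime where $\lambda_0$ sits far from zero --- and verifying that the Fano and Le Cam contributions combine into the clean $\log(\numQubits/\delta)$ form stated. The quantum-to-classical reduction itself is essentially free once the Hamiltonian is chosen to be diagonal, so the genuine work is the classical estimation lower bound, where the Le Cam/Fano/Assouad machinery does the heavy lifting at each fixed $\beta$.
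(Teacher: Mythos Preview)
Your proposal is correct and follows the same high-level strategy as the paper: choose a diagonal (classical) Hamiltonian so the quantum problem collapses to a classical estimation problem, then apply standard information-theoretic lower bounds. The details differ in several places, though. First, the paper uses a three-term $2$-qubit Hamiltonian on each pair (with coefficients $-1, -\tfrac12, -\tfrac12$ on $Z\otimes I$, $I\otimes Z$, $Z\otimes Z$) rather than a single $Z\otimes Z$ term; this more elaborate construction drives the per-sample KL down to $O(\beta^2\eps^2 e^{-2\beta})$ rather than your $O(\beta^2\eps^2 e^{-\beta})$, so the paper actually proves the stronger $\exp(2\beta)$ bound stated in Theorems~\ref{thm:lowerboundinfty} and~\ref{thm:lowerboundtwo}, while your approach matches only the $\exp(\beta)$ claimed in Theorem~\ref{thm:lowerbound}. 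Second, for the $\ell_\infty$ case the paper obtains the $\log(\numQubits/\delta)$ dependence not by combining Bretagnolle--Huber with Fano, but by Fano alone for $\log \numQubits$ followed by a size-reduction argument (split an $\numQubits/(3\delta)$-qubit instance into $1/(3\delta)$ blocks of size $\numQubits$) to extract $\log(1/\delta)$; your route via the maximum of the two bounds is equally valid. Third, for the $\ell_2$ case the paper uses a constant-rate, linear-distance error-correcting code to build a packing and then applies Fano, whereas you use Assouad's hypercube; both are standard and give the same rate. Your simpler single-term Hamiltonian makes the KL computation cleaner, at the cost of the weaker exponent in $\beta$; the paper's construction is what buys the extra factor of $e^\beta$.
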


The Hamiltonians used in our lower bound are extremely simple $2$-local Hamiltonians, where each term acts nontrivially only on 2 qubits and each qubit is involved in only 1 term. This shows that although our algorithms apply to a more general class of Hamiltonians than considered by \cite{AAKS21}, restricting our attention to a simpler class of Hamiltonians will not allow us to improve on the sample complexity compared to our algorithm.

This improves significantly on the lower bound given in \cite{AAKS21}, which states that any algorithm that learns a Hamiltonian to $\ell_2$ error $\eps$ has sample complexity
\begin{equation}\label{eq:AAKSlb}
    \Omega\left(\frac{\sqrt{N}+\log(1-\delta)}{\beta\eps}\right).
\end{equation}

In addition, we observe that virtually the same algorithm can be used to learn a low-intersection Hamiltonian $H$, given black-box access to its real-time evolution unitary $e^{-itH}$, provided $t$ is known and smaller than some critical time that is a constant in the definition of a low-intersection Hamiltonian, which does not depend on $\numQubits$ or $\numTerms$.

\begin{restatable}[Real-time dynamics]{theorem}{algodynamics} \label{thm:algo-dynamics}
    Let $H$ be a low-intersection Hamiltonian on $\numQubits$ qubits
    and let $U = e^{-it H}$ be a blackbox unitary with $t < t_c$.
    Then we can learn the coefficients of $H$ to $\ell_\infty$ error $\eps$ with success probability $1-\delta$, using~$U$ $O\bigl(\frac{1}{t \eps^2}\log \frac \numQubits \delta\bigr)$ times, with time complexity $O\bigl(\frac{\numQubits}{t \eps^2}\log \frac \numQubits \delta\bigr)$.
\end{restatable}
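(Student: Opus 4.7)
The plan is to transplant the high-temperature Gibbs-state algorithm of \cref{thm:algorithm} into the real-time setting. The key observation is that the cluster expansion used to invert observable statistics back to Hamiltonian coefficients is an expansion in the activity parameter (namely $-\beta$ for Gibbs states), and it continues to converge when this activity is analytically continued to the purely imaginary value $it$ used in real-time evolution, provided $|t|$ is below a constant critical time $t_c$.

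\textbf{Measurement protocol.} For each pair of Paulis $(E_a, E_b)$ with $E_b$ in a local neighborhood of $E_a$, we estimate $m_{ab}(t) := \frac{1}{2^\numQubits}\Tr(U^\dagger E_a U E_b)$. This quantity equals the expectation of $E_a \otimes E_b^T$ on the Choi state $|\Phi_U\rangle := (U\otimes I)|\Omega\rangle$, where $|\Omega\rangle = \frac{1}{\sqrt{2^\numQubits}}\sum_i |i\rangle|i\rangle$ is the maximally entangled state on $2\numQubits$ qubits; preparing $|\Phi_U\rangle$ uses $U$ exactly once. Each Pauli measurement yields a $\pm 1$ outcome in $O(\numQubits)$ time.

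\textbf{Cluster expansion and inversion.} Duhamel's formula gives $m_{ab}(t) = \delta_{ab} + \sum_{k\ge 1} (it)^k q_{k,a,b}(\lambda)$, where $q_{k,a,b}$ is a polynomial in $\lambda$ supported on terms within commutator-graph distance $k$ of $E_a$. The polymer-expansion bounds that underlie \cref{thm:algorithm} depend only on the modulus of the activity, so the series converges absolutely for $|t| < t_c$ with $t_c$ a constant of the same order as $\beta_c$; inverting the series yields a locally computable polynomial $\Lambda_a$ such that $\lambda_a = \Lambda_a(\{m_{cd}(t)\})$, evaluated on a constant-size neighborhood of $a$.

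\textbf{Complexity and main obstacle.} Each $m_{cd}(t)$ is estimated by averaging Pauli outcomes, and a Hoeffding-plus-union-bound argument over the $O(\numQubits)$ coefficients — identical in structure to that of \cref{thm:algorithm} — yields sample complexity $O(\frac{1}{t\eps^2}\log(\numQubits/\delta))$; since each sample uses $U$ once and $O(\numQubits)$ further time, the time complexity is a factor $\numQubits$ larger. The delicate step is verifying that the polymer-expansion bounds from the Gibbs analysis tolerate a complex activity $it$ in place of the real $-\beta$: the standard convergence criteria depend only on weighted sums of activity magnitudes and so go through verbatim, giving a constant $t_c > 0$. Once this is secured, the remainder — the construction of $\Lambda_a$, the statistical concentration analysis, and the time accounting — is a direct translation of the Gibbs-state proof.
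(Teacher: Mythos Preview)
Your overall strategy is sound and close in spirit to the paper's, but there is a genuine gap in the sample complexity accounting. The observable you estimate satisfies $m_{ab}(t) = \delta_{ab} + it\cdot(\text{linear in }\lambda) + O(t^2)$, so resolving $\lambda_a$ to additive error $\eps$ requires resolving each $m_{ab}(t)$ to error $\Theta(t\eps)$. A Hoeffding-plus-union-bound over $O(\numQubits)$ quantities then gives $O\bigl(\frac{1}{t^2\eps^2}\log\frac{\numQubits}{\delta}\bigr)$ applications of $U$, not $O\bigl(\frac{1}{t\eps^2}\log\frac{\numQubits}{\delta}\bigr)$ as you claim. The paper obtains the stated $1/t$ scaling via an additional \emph{amplification} step (\cref{rem:tlinear}): one first concatenates $n=\lfloor t_c/t\rfloor$ applications of $U$ to form $V=U^n=e^{-iH(nt)}$ with $nt=\Theta(t_c)$ constant, runs the learning algorithm on $V$ with $O\bigl(\frac{1}{\eps^2}\log\frac{\numQubits}{\delta}\bigr)$ samples, and then counts the $n=\Theta(1/t)$ uses of $U$ per sample. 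Without this trick your protocol only achieves $1/t^2$.

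As secondary remarks: the paper's measurement scheme differs from yours. Rather than preparing the $2\numQubits$-qubit Choi state, it prepares a random Pauli-eigenstate product state on $\numQubits$ qubits, applies $U$, measures $E_a$, and post-selects on the input qubit at $\Supp(P_a)$ having been an eigenstate of a fixed single-qubit Pauli $P_a$ anticommuting with $E_a$; this yields an unbiased estimator of $\frac{1}{\fulldimension}\Tr(Q_a U P_a U^\dagger)$ with $Q_a=i[P_a,E_a]$. The paper also does not analytically continue the log-partition cluster expansion to imaginary $\beta$; instead it uses the nested-commutator formula $U P U^\dagger=\sum_n\frac{(-it)^n}{n!}[H,P]_n$ directly, which yields a simpler series with cleaner bounds (coefficient norm $\le 2^m$). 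Your analytic-continuation route would work in principle, but requires you to redo the cluster-derivative bounds for a different function than $\betaF$, so it is not quite a ``verbatim'' transfer. Finally, the inversion is done by Newton--Raphson as in \cref{sec:newton-raphson}, not by a closed-form polynomial $\Lambda_a$; your claim that inverting the series yields a polynomial is imprecise, though the Jacobian-based argument of \cref{lem:jac-bound} does carry over once you identify the leading term $\fun_a = 4t\lambda_a + O(t^2)$.
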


We report this observation in \cref{sec:realtime}.
Prior work on this task \cite{BAL19,ZYLB21} uses measurements of short-time evolutions, i.e.\ time resolution $t = O(\eps)$, to estimate time derivatives, which gives a sample complexity scaling as $1/\eps^4$.
We improve this quadratically, and since we only apply $U$ for $t$ as large as constant, we improve the time resolution to constant.

\para{High-level overview of techniques.}
Our algorithm in \cref{thm:algorithm} proceeds in two steps.
First, we notice that, for sufficiently small (but constant) $\beta$, the Taylor series expansion of the expectation $\Tr(E_a \rho)$ in $\beta$ converges.
This observation follows from the cluster expansion techniques from \cite{KS19}, which essentially describes and bounds the coefficients in the Taylor series expansion of the log-partition function, $\log \Tr \exp(-\beta H)$.
We reproduce these (lengthy but elementary) calculations here, amending some minor issues in their presentation.
The log-partition function is related to the expectation $\Tr(E_a \rho)$ (in fact, $\tfrac{\diff}{\diff \lambda_a} \log \Tr \exp(-\beta H) = -\beta \Tr(E_a \rho)$), so we can use these results on convergence of the log-partition function to get convergence of the expectation.

One notable difference from prior work is our results showing how to efficiently compute the Taylor series expansion described above (\cref{statement:algo-W-deriv}).
Prior work asserted such computation was possible~\cite{kkb20}, but did not provide an explicit algorithm.
We provide an algorithm (\cref{alg:cluster-derivative}), and because the $E_a$'s are Pauli operators in our setting, the matrices are composed of small integers and so this algorithm works in exact arithmetic.

This shows that we can approximate $\Tr(E_a \rho)$, an expression that can be easily estimated from copies of $\rho$, by a polynomial in $\{\lambda_b\}$, the parameters we wish to estimate.
This polynomial is the one we get from truncating the Taylor series expansion of $\Tr(E_a \rho)$.
Once we approximate these $\Tr(E_a \rho)$'s, we are left with the task of solving the system of polynomial equations defined by these truncated Taylor series expansions.
By bounding the $\infty \to \infty$ norm of the inverse Jacobian of this system, we immediately get a bound on the sample complexity (\cref{thm:sample-complexity}).
By performing the Newton--Raphson method for root-finding, we can invert this system efficiently, only needing to compute the (first-order) Jacobian for $O(\log\frac{1}{\beta\eps})$ iterations.
In fact, the Newton--Raphson method performs so efficiently that its runtime is dominated by the runtime of simply reading in the input, making the algorithm as a whole run in linear time (\cref{thm:newton}).

For our lower bound, we use information-theoretic techniques to show that without sufficiently many Gibbs states, the coefficient vector cannot be determined to $\eps$ error.
In particular, we use Fano's lemma to establish a lower bound from a KL-divergence computation, similarly to prior classical work for lower bounds of learning undirected graphical models~\cite{sw12}.
This immediately gives the lower bound in the $\ell_\infty$ case (\cref{thm:lowerboundinfty}), and a simple argument with error correcting codes bootstraps this to an $\ell_2$ bound (\cref{thm:lowerboundtwo}).

\para{Comparison with previous quantum algorithms.}
We now provide a comparison of our algorithm's sample and time complexity bounds compared to that of prior work.
We consider the task of learning a geometrically local Hamiltonian for sufficiently small $\beta$ with success probability $0.9$.
One algorithm to compare to is a naive ``state tomography'' strategy that one can derive from \cite[Theorems~2~and~11]{kkb20}.
These results show that to estimate a Hamiltonian coefficient $\lambda_a$ to $\eps$ error, for sufficiently high temperature, it suffices to know the Gibbs state $\rho$ on a ``patch'', a ball of radius $O(\log\frac{1}{\beta\eps})$ around the support of $E_a$.
So, one can perform state tomography to learn the patch of $\rho$ in time exponential in the number of qubits in the ball, giving an algorithm for Hamiltonian learning with quasi-polynomial sample and time complexity
\begin{align}
    \tilde{O}\Big(e^{\log^c(\frac{1}{\beta \eps})}\log(\numQubits)\Big) \text{ and }
    \tilde{O}\Big(e^{\log^c(\frac{1}{\beta \eps})}\numQubits\log(\numQubits)\Big),
\end{align}
where the factor of $c$ comes from the ambient dimension, i.e.\ a ball of radius $r$ on the lattice of qubits is size $O(r)^c$.

As for \cite{AAKS21}, we already gave the sample complexity bound in \cref{eq:AAKS}.
As for time complexity, \cite{AAKS21} describe an stochastic gradient descent (SGD) approach that solves the task of Hamiltonian learning, assuming one can evaluate the log partition function.
This subroutine is hard in general, but \cite{AAKS21} acknowledge that, for the high-temperature regime, it can be done efficiently~\cite{kkb20} to get a time-efficient algorithm for Hamiltonian learning.
This resulting algorithm will inherit the sub-optimal sample complexity of \cite{AAKS21}, and though \cite{AAKS21} don't give a precise time complexity, we can conclude that at best its time complexity will be linear in the sample size, which is the sample complexity times $N$.
So, the sample and time complexity of \cite{AAKS21} in the high temperature regime is something like
\begin{align}
    O\Big(\frac{\numQubits^2\log(\numQubits)}{\beta^c\eps^2}\Big) \text{ and }
    O\Big(\frac{\numQubits^3\log(\numQubits)}{\beta^c\eps^2}\Big).
\end{align}
The prior work achieves optimal scaling in either $\numQubits$ or $1/\eps$.
We get a sample and time complexity,
\begin{align}
    O\Big(\frac{\log(\numQubits)}{\beta^2\eps^2}\Big) \text{ and }
    O\Big(\frac{\numQubits \log(\numQubits)}{\beta^2\eps^2}\Big),
\end{align}
that is simultaneously optimal in all parameters and improves at least polynomially over prior results.
In fact, in certain regimes (like when $\beta\eps \eqsim \exp(-\log^{1/c}(\numQubits))$), we give a super-polynomial improvement in sample complexity over prior work.

Our algorithm can be viewed as a refinement of the patching argument described in \cite{kkb20}; cluster expansion is the technique used to prove the results there, and we use it in a similar way to argue that it suffices to only consider $O(\frac{1}{\beta\eps})$ terms that are within $\log(\frac{1}{\beta\eps})$ of the support of $E_a$.
Our contribution is that we use this expansion algorithmically via the Newton--Raphson method to improve the quasi-polynomial time from the naive algorithm to polynomial time.

\cite{AAKS21} proceeds by establishing that the log partition function is strongly convex.
This actually immediately gives a sample complexity bound, but \cite{AAKS21} instead provide a concrete algorithm, stochastic gradient descent (SGD), that solves the task of Hamiltonian learning, assuming one can evaluate the log partition function (a hard problem in general).
Specifically, this means that they need to lower bound the smallest eigenvalue of the Hessian of the log partition function, or equivalently, upper bound the spectral norm of the inverse of this matrix.

At its core, our strategy is similar to that of \cite{AAKS21}.
Like in \cite{AAKS21}, we also work with the inverse of the Hessian of the log-partition function (or an approximation of it).
Since we want to solve the problem with $\ell_\infty$ error $\eps$, instead of upper bounding the spectral norm of this matrix, we upper bound its $\infty \to \infty$ norm.
Our bound also yields an upper bound on the spectral norm that is tighter than the bound in \cite{AAKS21}.
The improvement is due to the more precise characterization of this matrix via the series expansion described above.
To get a time-efficient algorithm, we then use the Newton--Raphson method, whose analysis also requires us to understand a higher order derivative of the log partition function than is needed for bounding the sample complexity.
Our characterization through the series expansion is able to provide this higher order information, which allows us to bound the running time of the Newton--Raphson method and show it to be time efficient.

\para{Comparision with previous classical algorithms.}
One might wonder why classical techniques for solving the Hamiltonian learning problem do not apply to quantum Hamiltonians. Our algorithm and the \cite{AAKS21} algorithm do not use strategies that are common in the classical literature. 

The reason is that classical algorithms for learning Hamiltonians rely on a property of the classical Gibbs state called the Markov property. To understand this property, partition the set of bits into $3$ disjoint parts $A$, $B$, and $C$, such that there is no term in the Hamiltonian that has a bit from $A$ and $C$. The sets $A$ and $C$ only interact through $B$. Now it is not hard to show that if we condition the classical Gibbs distribution of this Hamiltonian on the values taken by bits in $B$, the resulting distribution on $A$ and $C$ is independent. In fact, the Hammersley--Clifford theorem shows that this is not just a property of Gibbs states, but this property characterizes Gibbs states~\cite{HC71}.
This property fails to hold in general for quantum Gibbs states, although it can hold for special classes of Hamiltonians, such as commuting Hamiltonians~\cite{BP12}.
For high-temperature Gibbs states, this holds only approximately, as cluster expansion formalizes: roughly, in this setting, $A$ and $C$ can be treated as independent subsystems provided $B$ is ``wide'' enough.

Classical algorithms can efficiently perform structure learning by treating it as parameter learning on the full space of $k$-local Pauli matrices.
This would naively take exponential time, but algorithms are still able to use the low-intersection guarantee, despite not knowing anything else about the terms~\cite{km17}.
It's not clear how to show a similar statement in the quantum setting; we can apply our algorithm, but it only works for $\beta$ smaller than $1/\poly(\numQubits)$.

\section{Preliminaries}

Throughout the paper all the exponential and logarithm functions ($\exp,\log$)
are with natural base $e = \sum_{k=0}^\infty \frac{1}{k!} \approx 2.718$. For a vector $v$, $\norm{v}=\sum_i |v_i|^2$ denotes the Euclidean (or $\ell_2$) norm. For a matrix $M$, we use $\|M\|=\max_{v\neq 0} \frac{\|Mv\|}{\|v\|}$ to denote the operator norm (also known as the spectral norm, $2\to 2$ norm, or the Schatten $\infty$-norm).

\subsection{Notations and conventions} \label{sec:notation}

\begin{definition}[Hamiltonian] \label{defn:hamiltonian}
    A \emph{Hamiltonian} is a collection of tuples $(a,E_a,\lambda_a)$,
    where $a$ is an index ranging over some finite set of $\numTerms$ elements,
    which we usually take to be $[\numTerms] = \{1,2,\ldots,\numTerms\}$;
    the \emph{Hamiltonian term} $E_a \in \CC^{\fulldimension \times \fulldimension}$ is a Hermitian operator with $\norm{E_a}\leq 1$ acting on a Hilbert space of dimension $\fulldimension$;
    and the \emph{Hamiltonian term coefficient} $\lambda_a \in [-1,1]$ is a real number.
    We use the notation $\lambda = (\lambda_1,\ldots,\lambda_M)$ for the vector of coefficients.
    The associated Hamiltonian operator $H$ is defined to be $H = \sum_a \lambda_a E_a$.
\end{definition}

Our full algorithm will require that $E_a$ are distinct, non-identity Pauli matrices.
This assumption is neither essential nor too constraining.
Since a Hamiltonian is Hermitian, we definitely want $E_a$ to be Hermitian.
Requiring that $\Tr E_a = 0$ is simply a shift in the eigenspectrum of the Hamiltonian.
Since Pauli operators form an orthonormal basis for operators,
we can always write any Hamiltonian term as a sum of Pauli operators.
The dual interaction graph degree may increase in this rewriting by a factor 
that is at most the exponential of the number of qubits in the support of the Hamitonian terms;
however, in the arguably most important scenario where a term acts on a constant number of qubits,
this blowup is a constant multiplicative factor.

We always assume that some system of $\numQubits$ qubits comprises the Hilbert space,
so $\fulldimension = 2^\numQubits$ is some power of two.
Upon introducing this decomposition of the Hilbert space into qubits,
we can define the support $\Supp(P)$ of an operator $P$.
The support is the minimal set of qubits such that $P$ can be written as $P = O_{\Supp(P)} \otimes I_{\Supp(P)^c}$ for some operator $O$.
(The superscript $c$ here means the complement.)

\begin{definition}[Dual interaction graph]
    For any Hamiltonian $\{(a, E_a, \lambda_a) : a \in [\numTerms] \}$,
    there is an associated undirected \emph{dual interaction graph} $\graph$
    with vertex set $[\numTerms]$ and an edge between $a$ and $b$ if and only if $a \neq b$ and
    \begin{align}
        \Supp(E_a) \cap \Supp(E_b) \neq \varnothing.
    \end{align}
    We denote by $\degree$ the maximum degree of the graph $\graph$ over all vertices.
\end{definition}

Note that we have defined a Hamiltonian in such a way that it is possible that $E_a = E_b$ for $a \neq b$.
If this is the case, then there is an edge in $\graph$ between $a$ and $b$.
In our learning algorithm we will require that $E_a$'s are distinct nonidentity Pauli operators,
but this definition is sufficient for our series expansion of log-partition functions.

Although we do not specify how $\degree$ depends on~$\numTerms$,
we focus on the case when $\degree$ is a constant independent of~$\numTerms$.
This case encompasses most Hamiltonians classes discussed in the literature.
For example, if every Hamiltonian term $E_a$ acts on a constant number of qubits
and every qubit is involved in a constant number of terms, all with respect to $\numTerms$,
then $\degree$ will be constant as well.
More concretely, if we have a directed graph $G$ with a qubit on each vertex
and a two-qubit Hermitian operator for every edge (which may require direction on each edge),
then the vertices of $\graph$ correspond to the edges of $G$
and the dual interaction graph has $\degree \le 2(d-1)$,
where $d$ is the degree (in-degree plus out-degree) of $G$.

As another example, an important class of Hamiltonians is the class of geometrically local Hamiltonians on, say, Euclidean space $\RR^d$.
There are some constant number of qubits on each point of the lattice $\ZZ^d \subset \RR^d$,
and a Hermitian operator is defined for each unit hypercube.
Here, the dual interaction graph has $\degree \le 3^d - 1$, which is again independent of $\numTerms$.

\begin{definition}[Gibbs state]
    The \emph{Gibbs state} of the Hamiltonian $\{(a,E_a,\lambda_a)\}$ at inverse temperature $\beta > 0$ is given by
    \begin{align}
        \frac{\exp(-\beta H)}{\Tr \exp (-\beta H)} = 
        \exp\biggl(-\beta \sum_a \lambda_a E_a\biggr) \biggr/ \Tr \exp\biggl(-\beta \sum_a    \lambda_a E_a\biggr).
    \end{align}
\end{definition}

It is a trivial but important fact that
the exponential of the Hamiltonian operator always makes sense for any $\beta \in \CC$,
not just positive~$\beta$, 
since the norm of the Hamiltonian is upper bounded by~$\numTerms$.

\subsection{Time complexity}

When we discuss the time complexity of our algorithms, we will usually do so in the standard \emph{word RAM model}, where operations on \emph{words}, integers of $w$ bits with $w \geq \log_2(\numQubits + \numTerms + \eps_{\text{machine}}^{-1})$, take unit time.
This word size is defined so that an index into qubits, an index into terms, and $\beta$ can all be stored in one word.
With this model, the input to the Hamiltonian learning problem (the terms $\{E_a\}_a$ and $\beta$) can be given in $O(L\numTerms)$ words, where $L$ is the maximum support of all the terms $E_a$.
This requires representing a term $E_a$ by its support ($\leq L$ words) and the non-identity Pauli operator that $E_a$ performs on each qubit in its support ($\leq 2L$ bits).

\begin{remark}\label{rm:ham-repr}
    Our algorithms assume that the input to the Hamiltonian learning problem also contains an adjacency-list representation of the dual graph $\graph$ corresponding to the input Hamiltonian.
    That is, we want to query any node $b \in [\numTerms]$ to receive a list of its neighbors in $\graph$ in unit time, where the list is given as a random-access array.

    Producing this adjacency-list representation requires only $O(L\numTerms \degree\log\degree)$ time: first, for each qubit $i$, produce a list of the terms that have that qubit in its support, $S_i = \{a \in [M] : i \in \Supp(a)\}$; second, sort the $S_i$'s; third, for every term $a$, produce a list that is the sorted concatenation of every qubit in its support, $\cup_{i \in \Supp(a)} S_i$.
    After removing $a$ itself, this list is the set of neighbors of $a$ in $\graph$.

    The first step takes time linear in the number of edges, so $O(L\numTerms)$ time.
    The second step takes $O(\numQubits\degree\log\degree)$ time, since $\abs{S_i} \leq \degree+1$.
    The third step takes $O(L\degree\numTerms)$ time, since we can merge sorted lists in linear time, removing duplicates as we find them so that we never merge lists of length larger than $\degree + 1$.
    So, the total time complexity is $O(N\degree \log \degree + L\numTerms\degree) = O(L\numTerms \degree\log\degree)$, since $N \leq L\numTerms$.
\end{remark}

As for operations on the quantum computer, we will use the standard model of time complexity (or, rather, gate complexity).
However, since the only quantum operations our algorithm will ever perform is measuring a single-qubit Pauli operator on a qubit of an input Gibbs state, it suffices to just assume that this operation takes unit time.

Finally, in this paper we will ignore issues of numerical stability.
We can do this comfortably because the only portion of our algorithm that is not exact arithmetic is the Newton's method iterations in \cref{alg:newton}.
This algorithm accounts for per-iteration error already, so issues with numerical instability do not arise here.

\subsection{Analytic functions and series expansions}

We will extensively use infinite series expansions of analytic functions,
and here we discuss general principles of handling infinite series.
The material here is all standard in complex analysis.

A complex function $f: D \to \CC$ is defined to be (complex) \emph{analytic} at $a \in D \subseteq \CC$ 
if, for some $\eps > 0$, the function agrees with a power series
\begin{align}
    f(x) = \sum_{k=0}^{\infty} c_k (x-a)^k
\end{align}
for all $x \in D$ such that $\abs{x-a} < \eps$.
Here, the coefficients $c_k \in \CC$ and $\eps \in \RR_{>0}$ may depend on~$a$.
Since such a power series converges uniformly (at least on a small neighborhood of~$a$),
the infinite sum commutes with taking derivatives,
and therefore the coefficients $c_k$ must be those of the Taylor expansion:
\begin{align}
    c_k = \frac{(\diff_x^k f)(a)}{k!}.
\end{align}
A complex function is said to be analytic on an open set $D$ 
if it is analytic at every point in $D$.
A basic theorem in complex analysis is that 
a complex function is complex differentiable (holomorphic) at a point
if and only if it is analytic at that point~\cite[10.14, 10.16]{Rudin}.

Functions with a power series expansion are ``rigid'' in the following sense.
A power series expansion at $a \in D$ is identically zero
if and only if there is an infinite sequence of distinct points $x_1,x_2,\ldots \in D$
such that $\lim_{n \to \infty} x_n = a$ and that the power series is zero at every $x_n$.
This implies a uniqueness theorem of analytic functions~\cite[10.18]{Rudin}:
if two analytic functions $f,g$ on a connected open domain $D$ agree on a subset that has a limit point within $D$,
then $f=g$ on $D$. 
In particular,
\begin{lemma}\label{statement:analyticFunctionUnique}
    Suppose a complex function $f$ is complex differentiable on $E$, an open neighborhood of the origin.
    If its Taylor series at the origin
    \begin{align}
        \sum_{k=0}^{\infty} \frac{f^{(k)}(0)}{k!} z^k
    \end{align}
    uniformly converges on $D = \{z \in \CC: \abs{z} < r\}$ for some $r$,
    then it must converge to $f(z)$ for all $z$ on the connected component of the origin in $D \cap E$.
\end{lemma}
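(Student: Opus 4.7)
The plan is to show that the Taylor series defines a holomorphic function on $D$ that agrees with $f$ in a small neighborhood of the origin, and then to propagate this agreement to the whole connected component using the identity theorem cited in the paragraph preceding the lemma.

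First I would define $g : D \to \CC$ by $g(z) = \sum_{k=0}^\infty \frac{f^{(k)}(0)}{k!} z^k$. The hypothesis is that this series converges uniformly on the open disk $D = \{|z| < r\}$. Since each partial sum is a polynomial (hence holomorphic on $D$), the uniform limit $g$ is holomorphic on $D$ by the standard Weierstrass theorem, and in particular $g$ is analytic on $D$.

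Next I would compare $f$ and $g$ near the origin. Because $f$ is complex differentiable on the open neighborhood $E$ of the origin, it is analytic at $0$, so there exists $\rho > 0$ with $\{|z| < \rho\} \subseteq D \cap E$ such that $f$ equals its own Taylor series at $0$ on $\{|z| < \rho\}$. By the uniqueness of Taylor coefficients the coefficients of that expansion are precisely $f^{(k)}(0)/k!$, so $f(z) = g(z)$ for all $|z| < \rho$.

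Finally I would invoke the identity theorem (Rudin 10.18) on the connected component $U$ of the origin in the open set $D \cap E$. Both $f$ and $g$ are holomorphic on $U$, and they coincide on the nonempty open subset $\{|z| < \rho\} \subseteq U$, which trivially has a limit point inside $U$. The identity theorem therefore gives $f = g$ on all of $U$, which is exactly the conclusion of the lemma. The only step requiring any care is verifying that the ingredients to apply the identity theorem are present, namely connectedness of $U$ (true by construction) and that the set of agreement has a limit point in $U$ (true because it contains an open disk around $0$); there is no serious obstacle beyond assembling these observations.
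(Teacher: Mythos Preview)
Your proof is correct and follows essentially the same route as the paper's: define $g$ by the series, use uniform convergence to conclude $g$ is holomorphic on $D$, observe $f=g$ on a small disk about the origin from analyticity of $f$, and then apply the identity theorem on the connected component of $D\cap E$ containing the origin. The only cosmetic difference is that you cite Weierstrass's theorem directly while the paper phrases it via term-by-term differentiation.
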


This basically says
that the Taylor series of a holomorphic function can only converge to its function value.
Our typical use of the lemma will be as follows.
We will show that $f$ is complex differentiable on an open set $E$ containing $\RR$
and consider its Taylor expansion at the origin and lower bound the radius of convergence.
Then, the lemma will imply that the series converges to the function value
in the real domain of convergence of the Taylor series. The way we show a function is complex differentiable is simply by noting that it is a composition of functions (such as addition, multiplication, exponentiation, etc.) that are themselves complex differentiable.

\begin{proof}
    Let $g$ be the function on $D$ defined by the series.
    By uniform convergence, the series is differentiable term by term,
    and hence $g$ is complex differentiable everywhere in $D$, 
    and hence is analytic on $D$.
    Since $f$ is analytic on $E$, 
    on some tiny open neighborhood $U \subset D \cap E$ of the origin
    it is represented by its Taylor series,
    which is the same as $g$.
    Since $U$ has a limit point within $D \cap E$, 
    we must have $f = g$ on the connected component of $D\cap E$ containing the origin.
\end{proof}

\cref{statement:analyticFunctionUnique} is not true if we only assume that $f$ is \emph{real} infinitely differentiable,
as witnessed by the well-known function
\begin{align}
    h(x) = \begin{cases}
        \exp(-\frac{1}{x^2}) & (x \neq 0)\\
        0 & (x = 0)
    \end{cases}.
\end{align}
Its Taylor series at the origin is identically zero, and hence converges everywhere on $\RR$
but the function is zero only at the origin.
Note that if we extend the domain of definition of $h$ from $\RR$ to $\CC$ by the same formula,
then $h$ is not complex differentiable at the origin;
$h$ is divergent at the origin along the imaginary axis.

\section{Series expansions of expectation values} \label{sec:cluster}

The main goal of this section is to prove the following theorem.
A direct consequence of this theorem is that the Taylor series expansion for expectation values of local operators converges when $2e^2(\degree + 1)^2\beta < 1$ (see the discussion above \cref{eq:tau}).
A reader who wishes to understand just our learning algorithm
may skip the rest of this section on their first reading, since
all notions and properties needed for our algorithm and its analysis
are contained in the statement of \cref{thm:expectation-value}.

\begin{theorem}\label{thm:expectation-value}
    Consider a Hamiltonian $\{(a,E_a,\lambda_a):a\in[\numTerms]\}$.
    Then, for every $a \in[\numTerms]$ we have a Taylor series expansion
    \begin{align}
    \frac{\Tr( E_a \exp(-\beta H))}{\Tr \exp(-\beta H)}
    =\frac{\Tr(E_a)}{\fulldimension} +
    \sum_{m = 1}^\infty \beta^mp_m(\lambda_1,\ldots,\lambda_M), \label{eq:mainseries}
    \end{align}
    where equality holds whenever the series converges absolutely.
    For any $m \in \ZZ_{>0}$, the following hold:
    \begin{enumerate}[ref=\thetheorem (\arabic*)]
        \item $p_m \in \RR[\lambda_1,\ldots,\lambda_\numTerms]$ is a degree $m$ homogeneous polynomial in the Hamiltonian term coefficients.\label{thm:expectation-value1}
        \item $p_m$ involves $\lambda_b$ only if the distance between $a$ and $b$ on $\graph$, $\dist_\graph(a,b)$, is at most $m$.\label{thm:expectation-value2}
        \item $p_m$ consists of at most $e\degree(1+e(\degree-1))^{m}$ monomials.\label{thm:expectation-value3}
        \item The coefficient in front of any monomial of $p_m$ is at most $(2e(\degree +1))^{m+1}(m+1)$ in magnitude.\label{thm:expectation-value4}
    \end{enumerate}
    Suppose further that every $E_a$ is a tensor product of Pauli matrices, supported on at most $L$ qubits.
    Then, after $O(L\numTerms\degree\log\degree)$ pre-processing time (see \cref{rm:ham-repr}), the following are true for every $m \in \ZZ_{> 0}$.
    \begin{enumerate}[label=\Alph*.,ref=\thetheorem (\Alph*)]
    \item The list of monomials that appear in $p_m$ can be enumerated 
    in time $O(m \degree C)$, where $C$ is the number of monomials (so, in particular, in time $O(m \degree^2 (1+e(\degree - 1))^m)$). \label{thm:expectation-valueA}
    \item The coefficient of any monomial in $p_m$ can be computed exactly as a rational number
    in $O(Lm^3 + 8^mm^5\log^2m) = (8^m + L)\poly(m)$ time. \label{thm:expectation-valueB}
    \end{enumerate}
\end{theorem}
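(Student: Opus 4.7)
The plan is to reduce the statement to a cluster expansion of the log-partition function $\betaF(\beta,\lambda) = \log\Tr\exp(-\beta H)$ via the identity $\partial_{\lambda_a}\betaF = -\beta\,\Tr(E_a\rho)$. Writing $\betaF = \log\fulldimension + \sum_{m\geq 1} \beta^m q_m(\lambda)$ gives $\Tr(E_a\rho) = -\sum_{m\geq 0} \beta^m \partial_{\lambda_a} q_{m+1}$. A direct first-order computation shows $-\partial_{\lambda_a} q_1 = \Tr(E_a)/\fulldimension$, matching the constant term in \cref{eq:mainseries}, and setting $p_m := -\partial_{\lambda_a} q_{m+1}$ for $m\ge 1$ reduces the entire theorem to establishing the corresponding structural, combinatorial, and algorithmic properties for the $q_m$'s.

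To produce that cluster expansion, I would start from the formal expansion $\Tr\exp(-\beta H) = \sum_{k\geq 0} \frac{(-\beta)^k}{k!} \sum_{\vec{a}\in[\numTerms]^k} \lambda_{\vec{a}}\Tr(E_{\vec{a}})$ and pass to the logarithm via the moment-cumulant (linked-cluster) identity. The key cancellation is that if an ordered sequence $\vec{a}$ splits into two subsequences acting on disjoint qubit supports, the corresponding operators commute and their trace factorizes, so that sequence's cumulant contribution vanishes. What survives is an expression of the form $q_m = \sum_{\vec{a}} w_{\vec{a}} \lambda_{\vec{a}} \Tr(E_{\vec{a}})/\fulldimension$ ranging over length-$m$ ordered sequences whose image is connected in $\graph$, with universal rational weights $w_{\vec{a}}$ arising from the cumulant inversion and the $1/k!$ factors.

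From this representation, properties (1) and (2) are essentially immediate: each surviving term contributes a monomial homogeneous of degree $m$ in $\lambda$, and if $b$ appears in a connected length-$(m+1)$ sequence through $a$ then $\dist_\graph(a,b)\leq m$. Property (3) follows from the standard combinatorial bound $e\degree(1+e(\degree-1))^{k-1}$ on the number of connected subgraphs of $\graph$ of size $k$ through a fixed vertex, applied after $\partial_{\lambda_a}$ restricts attention to clusters containing $a$. Property (4) is the main technical obstacle: one must sum $\abs{w_{\vec{a}}}$ over all orderings of a fixed connected support, while juggling the cumulant coefficients, the number of admissible orderings, and the extra factor $m+1$ picked up from $\partial_{\lambda_a}$, to land on the claimed $(2e(\degree+1))^{m+1}(m+1)$ bound. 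I expect this to fall out after a careful tally of classical cluster-expansion estimates in the style of Kuwahara--Saito, possibly sharpened by the fact that the $E_a$'s are traceless Pauli operators so that many sequences simply contribute zero.

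For the algorithmic content, part (A) is handled by enumerating connected subgraphs of $\graph$ of size $\leq m$ through $a$ via an incremental search that grows each subgraph one boundary vertex at a time, using the adjacency-list access of \cref{rm:ham-repr} to keep per-subgraph cost at $O(m\degree)$. For part (B), given a specific monomial with support $S\subseteq[\numTerms]$ of size $s\leq m$, the coefficient is a weighted sum of $\Tr(E_{\vec{a}})/\fulldimension$ over sequences $\vec{a}$ that realize that multiset; since each $E_{\vec{a}}$ is a product of Pauli matrices, the trace evaluates exactly to a value in $\{0,\pm 1\}$ by iteratively multiplying Paulis on the combined support of size $O(Lm)$ in time $O(Lm)$, and the additional $8^m\poly(m)$ factor in the quoted runtime covers both the enumeration of orderings and the exact rational arithmetic needed to aggregate the universal weights $w_{\vec{a}}$.
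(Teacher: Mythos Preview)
Your overall strategy---reduce to the log-partition function via $\partial_{\lambda_a}\betaF=-\beta\Tr(E_a\rho)$, expand $\betaF$ as a multivariate Taylor series indexed by clusters, and observe that only connected clusters survive---matches the paper exactly, and your handling of items (1)--(3) and (A) is essentially identical to what is done there.

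However, there is a genuine gap at item~(4). You say the coefficient bound should ``fall out after a careful tally of classical cluster-expansion estimates in the style of Kuwahara--Saito,'' but the paper explicitly points out that the relevant arguments in~\cite{KS19} (their Appendices~C and~D) contain mathematical gaps, and it does \emph{not} follow that route. Instead, the paper adopts the approach of Wild--Anshu~\cite{WA22}: it rewrites the cluster derivative $\mdiff_{\cluster W}\betaF$ as a signed sum over graph partitions $\gpart P\in\gpartcon(\gr(\cluster W))$ weighted by an alternating sum $\sum_k \frac{(-1)^{k-1}}{k}\chi^*(k,\gr(\gpart P))$ of proper-coloring counts (\cref{lem:ToColoring}), then bounds this via a deletion--contraction argument showing $\eta(F)\le\tau(F)$ and a surjection counting $\sum_{\gpart P}\tau(\gr(\gpart P))\le 2^{n-1}\tau(G)$ (\cref{lem:tutte-bound}), and finally bounds $\tau(G)$ by the product of vertex degrees in $\gr(\cluster W)$. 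This is where most of the technical effort lies, and your proposal does not contain any of these ingredients. The hope that tracelessness of Paulis helps is also misplaced: the bound in \cref{statement:cluster-derivative-bound-no-multiplicity} uses only $\norm{E_a}\le 1$.

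There is also a quantitative problem in your plan for~(B). You propose to enumerate all ordered sequences $\vec a$ realizing a given multiset and compute each $\Tr(E_{\vec a})/\fulldimension$ directly. But a cluster of total weight $m+1$ with all multiplicities~$1$ has $(m+1)!$ orderings, which is not $O(8^m\poly(m))$. The paper instead avoids enumerating orderings altogether: it uses a multivariate finite-difference formula (\cref{statement:limit-finite-difference}) to express $\mdiff_{\cluster W}\betaF$ as a limit of $\le 2^{m+1}$ evaluations of $\log r$ for an explicit polynomial~$r$, then computes each evaluation by powering a $2^n\times 2^n$ matrix (after passing to a faithful Pauli representation on $n\le m+1$ qubits) and a recursion~\cref{eq:ht} for the derivatives of $\log r$. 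That is where the $2^m\cdot 4^m=8^m$ arises.
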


\paragraph{Overview of the proof.}

The series expansion in \cref{eq:mainseries} is certainly conceivable.
When $\beta = 0$, the numerator is zero because $\Tr P = 0$.
The first order term in $\beta$ comes from the first order term in $\beta$ of the numerator
and the zeroth order term in the denominator.
One can keep finding terms order by order in $\beta$,
but this calculation soon becomes too complicated to be useful in a proof.

Following the wisdom of statistical mechanics,
as done explicitly in e.g.~\cite{kkb20,WA22},
we examine the logarithmic partition function $\log \Tr \exp(-\beta H)$,
and take differentials to arrive at \cref{eq:mainseries}.
This basic connection is established in \cref{sec:logZ}.

To put our series expansion on a rigorous foundation,
we will use complex differentiability (holomorphicity)
of the function $\beta \mapsto \log \Tr \exp(-\beta H)$.
Although we are only interested in the regime where $\beta$ is positive real,
it is quite useful for us to observe that 
this function behaves nicely in a sufficiently large domain
in the complex plane of $\beta$.
Armed with \cref{statement:analyticFunctionUnique},
we fearlessly write infinite Taylor series and resummations thereof,
to derive a series expansion of the logarithmic partition function
as a multivariate function of $\lambda_1,\ldots,\lambda_\numTerms$.
This leads to the concept of cluster expansion
\begin{align}
    \sum_{m \ge 0} \sum_{\cluster V} C(m,\cluster V),
\end{align}
which is an infinite sum of finite sums over clusters $\cluster V$ (see \cref{sec:multiTaylor}). In this section we use the convention that boldface uppercase letters refer to clusters.
Each $C(m,\cluster V)$ has a $\beta^m$ factor,
so in order for the infinite sum $\sum_{m \ge 0}$ to converge 
for some small enough $\beta$, 
each $\sum_{\cluster V} C(m,\cluster V)$ has to be at most exponentially large in~$m$.
This exponential bound will occupy us for most of the proof,
regarding which we follow many elements from~\cite{KS19} and~\cite{WA22}.
We count the number of summands of $\sum_{\cluster V}$ purely combinatorially
in~\cref{sec:countingClusters},
and, separately, bound the magnitude of $C(m,\cluster V)$ for each $\cluster V$.
The second step follows the approach in~\cite{WA22}.
The result will be that 
$\abs{\sum_{\cluster V} C(m,\cluster V) } \le \poly(\beta,\degree)O(\degree^2\beta)^{m}$.
In contrast, assertions in~\cite{KS19} imply $\abs{\sum_{\cluster V} C(m,\cluster V) } \le \poly(\beta,\degree) O(\degree \beta)^m$,
which is quadratically stronger in~$\degree$.
This difference is because we do not use~\cite[App.~D of arXiv-v2]{KS19},
in which the argument appears to have a mathematical gap in~\cite[(D.10) of arXiv-v2]{KS19}.
Recently, another mathematical gap in~\cite[App. C of arXiv-v2]{KS19} was pointed out  by~\cite{WA22}.

Finally, we take some care to give an algorithm to compute the Taylor series.
This is often elided as it is fairly standard, but for completeness, we show how to do this with symbolic computation to get the coefficients with no error.
First, we are able to enumerate the list of summands in the series in \cref{sec:enumeratingClusters}.
This is a breadth-first search, with some care to avoid duplicating work.
Second, we compute the coefficient for each summand in \cref{sec:computingClusterDerivatives}.
These coefficients are derivatives of the log-partition function $\log \Tr \exp(-\beta H)$ at the origin.
With some simple observations (truncating Taylor series and using the definition of a derivative), we conclude that these derivatives are equal to the trace of a matrix polynomial in the terms of the Hamiltonian at zero~\cref{eq:reduced-to-polynomial}.
Since we assume that these are tensor products of Pauli matrices, these polynomials can be evaluated efficiently, where the final runtime is exponential in the order of the derivative, as one would expect.

This gives the formal guarantees that this computation is bounded in complexity.
Practically, one can use any method at hand, not necessarily relying on the specific algorithm we propose.
Indeed, there is a large body of classic literature
on high temperature expansions (see e.g.\ the book~\cite{Domb})
which does not always discuss formal convergence.

\subsection{The logarithmic partition function}\label{sec:logZ}

Given a Hamiltonian $\{(a, E_a, \lambda_a) :a\in[\numTerms]\}$,
our primary object of study is
\begin{align}
    \betaF = \log \Tr \exp (-\beta H) = \log \Tr \exp\biggl( -\beta \sum_{a \in [\numTerms]} \lambda_a E_a \biggr).
\end{align}
The argument of the log function is called the \emph{partition function}
in statistical mechanics.
Hence, we will refer to this expression $\betaF$ as the \emph{logarithmic partition function} or \emph{log-partition function} of the Hamiltonian.
The quantity $\beta^{-1} \betaF$ is called (Helmholtz) free energy in statistical mechanics, but we will not use this terminology.

The connection of the logarithmic partition function to \cref{thm:expectation-value}
is given by the following.
\begin{prop}\label{statement:expectation-as-betaF-derivative}
    For any Hamiltonian $\{(a, E_a, \lambda_a) :a\in[\numTerms]\}$, $a \in [\numTerms]$, and nonzero $\beta \in \CC$,
    \begin{align}
        \frac{\Tr( E_a \exp(-\beta H))}{\Tr \exp(-\beta H)}
        = -\frac 1 \beta \frac{\partial}{\partial \lambda_a} \log \Tr \exp( -\beta H ).
    \end{align}
\end{prop}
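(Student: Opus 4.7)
The plan is to reduce the identity to a direct computation of $\partial_{\lambda_a} \Tr \exp(-\beta H)$ and then expand the matrix exponential as a power series. First I would apply the chain rule to the right-hand side,
\[
-\frac{1}{\beta} \frac{\partial}{\partial \lambda_a} \log \Tr \exp(-\beta H) = -\frac{1}{\beta \Tr \exp(-\beta H)} \cdot \frac{\partial}{\partial \lambda_a} \Tr \exp(-\beta H),
\]
so the claim reduces to the single identity
\[
\frac{\partial}{\partial \lambda_a} \Tr \exp(-\beta H) = -\beta \, \Tr\bigl(E_a \exp(-\beta H)\bigr).
\]

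For the key step I would expand the exponential using its everywhere-convergent power series
\[
\Tr \exp(-\beta H) = \sum_{k=0}^\infty \frac{(-\beta)^k}{k!} \Tr(H^k),
\]
and differentiate term by term. The derivative $\partial_{\lambda_a} \Tr(H^k)$ expands, by the product rule applied to $H = \sum_b \lambda_b E_b$, into a sum of $k$ traces, each with exactly one factor of $H$ replaced by $E_a$; by cyclicity of the trace all $k$ of these equal $\Tr(E_a H^{k-1})$, so the sum collapses to $k\,\Tr(E_a H^{k-1})$. Substituting back and reindexing $k \mapsto k-1$ produces exactly $-\beta\,\Tr(E_a \exp(-\beta H))$, as required. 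Dividing by $\beta \, \Tr \exp(-\beta H)$ and flipping the sign finishes the proof.

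The only point to justify carefully is the interchange of $\partial_{\lambda_a}$ with the infinite sum. This is essentially automatic: since $\norm{H} \le \numTerms$ is bounded, the series defining $\Tr \exp(-\beta H)$ converges uniformly on compact sets as a function of $(\beta,\lambda_1,\ldots,\lambda_\numTerms) \in \CC^{\numTerms+1}$, and its partial sums are polynomials in the $\lambda_b$'s. Uniform-on-compacta convergence of analytic functions of $\lambda_a$ (for each fixed $\beta$) guarantees that the derivative of the sum equals the sum of the derivatives. The hypothesis $\beta \ne 0$ is used only in the very last step to divide through by $\beta$; the underlying derivative identity holds for all $\beta \in \CC$. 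There is no real obstacle here --- the proposition is a one-line computation once the termwise differentiation is granted, and it is recorded as a separate proposition only because everything that follows hinges on it.
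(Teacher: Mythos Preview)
Your proposal is correct and follows essentially the same approach as the paper: both expand $\Tr\exp(-\beta H)$ as a power series, differentiate term by term, use the product rule plus cyclicity of the trace to collapse the $k$ summands, and apply the chain rule for $\log$. The only cosmetic difference is ordering (you apply the chain rule first, the paper applies it last), and you are slightly more explicit about justifying the termwise differentiation than the paper, which simply notes absolute convergence.
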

\begin{proof}
    Since $H$ and $E_a$ have finite norm,
    the Taylor expansion of $\Tr \exp$ converges absolutely.
    The claim is proved by
    \begin{align}
    \frac{\diff}{\diff \lambda_a} \Tr \exp( -\beta H )
    &= \sum_{m=0}^\infty \frac{1}{m!} \Tr\left[ \frac{\diff}{\diff \lambda_a} (-\beta H)^m \right] \nonumber\\ 
    &= \sum_{m=1}^\infty \frac{1}{m!} \sum_{k=1}^m \Tr[ (-\beta H)^{k-1}(-\beta E_a) (-\beta H)^{m-k} ] \nonumber\\
    &= -\beta \sum_{m=1}^\infty \frac{1}{m!} \sum_{k=1}^m \Tr[ E_a (-\beta H)^{m-1} ] \nonumber\\
    &= - \beta \Tr[ E_a \exp(- \beta H)],
    \end{align}
using linearity and the cyclic property of $\Tr$.
Finally,
\begin{equation}
    \frac{\partial}{\partial \lambda_a} \log \Tr \exp( -\beta H ) = \frac{1}{\Tr \exp( -\beta H )} \frac{\partial}{\partial \lambda_a} \Tr \exp( -\beta H ) = -\beta \frac{\Tr( E_a \exp(-\beta H))}{\Tr \exp(-\beta H)},
\end{equation}
which completes the proof.
\end{proof}

If we understood the series expansion of the logarithmic partition function well enough, we could prove \cref{thm:expectation-value} easily by way of \cref{statement:expectation-as-betaF-derivative}.
There will be an important advantage (\cref{statement:log-derivative-connected} below)
in considering the logarithmic partition function,
rather than the ratio of two traces as in \cref{thm:expectation-value}.
So, we will study the series expansion of the logarithmic partition function.

\subsection{Deriving multivariate Taylor series expansions}\label{sec:multiTaylor}

In this section, we prove that a series expansion for the logarithmic partition function like the one in \cref{eq:mainseries} converges in some open neighborhood around the origin.

Though the logarithmic partition function is a complex-valued function of $\beta,\lambda_1,\ldots,\lambda_\numTerms$, depending on context, we'll think of it as either a function of a single variable $\beta$ for a fixed choice of $\lambda_a$'s,
or a function of $(\lambda_1,\ldots,\lambda_\numTerms)$ for a fixed $\beta$.

Let us first take the first perspective: Fix%
\footnote{We restrict $\lambda_j$ to the open interval $(-1,1)$ to avoid the inconvenience of discussing derivatives at the boundary of the domain of $\betaF$.}
$\lambda_a \in (-1,1)$
 for all $a \in [\numTerms]$ and consider the map $\beta \mapsto \betaF$.
By our convention that Hamiltonian terms $E_a$ are Hermitian with $\norm{E_a}\leq 1$, the spectrum of the Hamiltonian operator $H = \sum_a \lambda_a E_a$ is contained in the real interval $(-M, M)$, and further, for any $\beta \in \CC$, the spectrum of the operator $-\beta H$
is contained in the complex disk $\{z \mid z \in \CC,\,\abs{z} \leq \abs{\beta} \numTerms\}$.
Hence, for $\beta \in E$ where
\begin{align}
E = \left\{ x + iy ~\middle|~x,y \in \RR,\, \abs{y} < \frac{\pi}{2\numTerms} \right\},
\end{align}
if $h \in (-\numTerms,\numTerms)$ is an eigenvalue of $H$, then the complex number $e^{-\beta h} = e^{-x h} e^{-i y h}$ has positive real part.
Therefore, $\Tr \exp(-\beta H)$ is in the right half-plane of the complex plane
and the function $\beta \mapsto \log \Tr \exp(-\beta H)$ is complex differentiable on $E$,
using that $\log$ is complex differentiable on the right half-plane and that complex differentiability is closed under composition.
So by \Cref{statement:analyticFunctionUnique}, we are guaranteed that the function $\beta \mapsto \log \Tr \exp(-\beta H)$ has
a Taylor series representation in some open neighborhood of the origin in the complex plane of $\beta$,
although we do not yet know how large the open neighborhood can be.

Note that the same argument shows that the multivariate function
\begin{align}
    \CC^{\numTerms+1} \ni (\beta,\lambda_1,\ldots,\lambda_\numTerms) \mapsto \log\Tr\exp(-\beta H)
\end{align}
is complex differentiable in each variable on
\begin{align}
    \tilde E = \left\{(\beta,\lambda_1,\ldots,\lambda_\numTerms) \in \CC^{\numTerms+1} ~\Big\vert~ 
    |\Imag(\beta\lambda_a)| < \frac{\pi}{2M} \right\}.
\end{align}
Here, for any $z \in \CC$, $\Imag(z)$ denotes the imaginary part of $z$.
This set $\tilde E$ is an open neighborhood of the real line of $\beta$ times the real box $(-1,1)^M$ of $\lambda_a$'s.

The Taylor expansion at the origin is straightforward to write 
as $\sum_{m \ge 0} \frac{1}{m!}\beta^m (\diff_\beta^m \betaF|_{\beta=0})$, 
but this is not enlightening.
Let us make some observations first.
We interpret the logarithmic partition function as a function of $z = (z_1,\ldots,z_\numTerms) \in \CC^\numTerms$:
\begin{align}
    \betaF = \log \Tr \exp(-\sum_a z_a E_a) \text{ where } z_a = \beta\lambda_a.
\end{align}
It follows that
\begin{equation}
    \frac{\diff \betaF}{\diff \beta} 
    = \sum_a \frac{\diff z_a}{\diff \beta} \frac{\diff \betaF}{\diff z_a}
    = \sum_a \lambda_a \frac{\diff \betaF}{\diff z_a},
\end{equation}
so
\begin{align}
    \betaF 
    &= \sum_{m \ge 0} \frac{\beta^m}{m!}\Biggl(\frac{\diff^m \betaF}{\diff \beta^m}\biggr|_{\beta = 0}\Biggr) \nonumber\\
    &= \sum_{m \ge 0} \frac{\beta^m}{m!} \sum_{a_1,a_2,\ldots,a_m} \lambda_{a_1} \cdots \lambda_{a_m} 
        \Biggl(\frac{\diff^m \betaF}{\diff z_{a_1} \cdots \diff z_{a_m}}\biggr|_{z = (0,\ldots,0)}\Biggr). \label{eq:betaf-ordered-series}
\end{align}
Since $\betaF$ is complex differentiable in any variable (at least on $\tilde{E}$),
it is infinitely differentiable, and hence any two differentiations commute.
So, instead of summing over ordered tuples $(a_1,\ldots,a_m)$, we can sum over \emph{multisets}, which are unordered tuples.
This particular class of multisets will be used frequently, so let's give a proper definition.

\begin{definition} \label{def:cluster}
    A \emph{cluster} $\cluster V$ is a set of tuples $\{(a,\mu(a)) \mid a \in [\numTerms]\}$ where the function $\mu: [\numTerms] \to \ZZ_{\geq 0}$ maps $a$ to the multiplicity of $a$.
    The total weight, denoted $\abs{\cluster V}$, of $\cluster V$ is $\sum_a \mu(a)$. 
    We will write $a \in \cluster V$ if $\mu(a)$ is nonzero,
    and the \emph{support} of $\cluster V$ is defined to be $\Supp \cluster V = \{ a \in [\numTerms] : \mu(a) \ge 1 \}$.
    We also introduce a combinatorial factor $\cluster V !$ to mean $\prod_a \mu(a)!$.
\end{definition}
\noindent
One may think of a cluster as a function $a \mapsto \mu(a)$
or a monomial in indeterminates $\lambda_1,\ldots,\lambda_\numTerms$.
Returning to \cref{eq:betaf-ordered-series}, we have
\begin{align}
    \betaF 
    &= \sum_{m \ge 0} \frac{\beta^m}{m!} \sum_{a_1,a_2,\ldots,a_m} \lambda_{a_1} \cdots \lambda_{a_m} 
        \Biggl(\frac{\diff^m \betaF}{\diff z_{a_1} \cdots \diff z_{a_m}}\biggr|_{z = (0,\ldots,0)}\Biggr) \nonumber\\
    &= \sum_{m \ge 0}  \beta^m \sum_{\cluster V : \abs{\cluster V} = m} \frac{1}{\cluster V !} \prod_{a \in \Supp \cluster V}\lambda_a^{\mu(a)} \left( \prod_{a \in \Supp \cluster V} \frac{\diff^{\mu(a)}}{\diff z_{a}^{\mu(a)}} \right) \betaF \biggr|_{z = (0,\ldots,0)} \nonumber\\
    &= \sum_{m \ge 0}   \sum_{\cluster V : \abs{\cluster V} = m}
    \frac{1}{\cluster V !} \underbrace{\prod_{a \in \Supp \cluster V}\lambda_a^{\mu(a)}}_{\lambda^{\cluster V}} \underbrace{\left( \prod_{a \in \Supp  \cluster V} \frac{\diff^{\mu(a)}}{\diff \lambda_{a}^{\mu(a)}} \right)\biggr|_{\lambda = (0,\ldots,0)}}_{\mdiff_{\cluster V}} \betaF \nonumber\\
    &= \sum_{m \ge 0}  \sum_{\cluster V : \abs{\cluster V} = m} \frac{\lambda^{\cluster V}}{\cluster V !} \mdiff_{\cluster V} \betaF. \label{eq:multivariateTaylor}
\end{align}
Note that we have introduced the cluster notations
\begin{align}
    \lambda^{\cluster V} = \prod_{a \in \Supp \cluster V}\lambda_a^{\mu(a)} \text{ and }
    \mdiff_{\cluster V} = \prod_{a \in \Supp \cluster V} \frac{\diff^{\mu(a)}}{\diff \lambda_{a}^{\mu(a)}} \biggr|_{\lambda_1 = \cdots = \lambda_\numTerms = 0} .
    \label{eq:mdiff-def}
\end{align}
\cref{eq:multivariateTaylor} is the series expansion of $\betaF$ that we are going to investigate.
It is nothing but the Taylor expansion of $\betaF$, treating it as
a multivariate function \mbox{$(\lambda_1,\ldots,\lambda_\numTerms) \mapsto \betaF$}.
Though we could have guessed this expansion from the outset, this derivation is necessary to show that the series converges: we start with a $\beta$-series whose validity is guaranteed,
albeit on an unspecified small domain,
by the complex differentiability with respect to $\beta$.

The number of all clusters of weight $m$ is at least $\binom{\numTerms}{m}$.
This is much larger than what we claim in \cref{thm:expectation-value3},
where the bounds are independent of $\numTerms$.
The special structure of the logarithmic partition function will help us show the improved bound.

\subsection{Counting connected clusters}\label{sec:countingClusters}

The main point of considering the logarithmic partition function
is that a cluster has nonzero coefficient in \cref{eq:multivariateTaylor} only if it is \emph{connected}.

\begin{definition}\label{defn:ConnectedCluster}
    A cluster $\cluster W = \{ (a,\mu(a)) \}$ is \emph{connected} 
    if the subgraph of $\graph$ induced by the support of $\cluster W$
    is connected.
\end{definition}

\begin{prop}\label{statement:log-derivative-connected}
    Define $\calZ = \frac 1 \fulldimension \Tr \exp( - \beta H)$.
    If $\cluster W'$ and $\cluster W''$ are both nonempty such that no edge of~$\graph$ connects $\Supp \cluster W'$ and $\Supp \cluster W''$,
    then $\mdiff_{\cluster W' \cup \cluster W''} \calZ = (\mdiff_{\cluster W'} \calZ)(\mdiff_{\cluster W''} \calZ)$.
    In particular, if a cluster $\cluster W$ is not connected, then $\mdiff_{\cluster W} \betaF = 0$.
\end{prop}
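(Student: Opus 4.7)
The plan is to exploit the factorization of the Gibbs state whenever the Hamiltonian splits into commuting pieces acting on disjoint qubit subsystems, and then translate this multiplicativity of $\calZ$ into the additivity of $\log \calZ$.

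First I would note that since $\mdiff_{\cluster W' \cup \cluster W''}$ takes derivatives only in $\lambda_a$ for $a \in A \cup B$, where $A = \Supp \cluster W'$ and $B = \Supp \cluster W''$, and then evaluates at $\lambda = 0$, we may harmlessly set $\lambda_c = 0$ for all $c \notin A \cup B$ \emph{before} differentiating. After this substitution $H = H_A + H_B$ where $H_A = \sum_{a\in A}\lambda_a E_a$ and $H_B = \sum_{b\in B}\lambda_b E_b$. The hypothesis that no edge of $\graph$ connects $A$ and $B$ means precisely that $\Supp(E_a)\cap\Supp(E_b)=\varnothing$ for all $a\in A, b\in B$, so the qubit sets $S_A = \bigcup_{a\in A}\Supp(E_a)$ and $S_B = \bigcup_{b\in B}\Supp(E_b)$ are disjoint.

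Next I would decompose $\mathcal{H} = \mathcal{H}_{S_A}\otimes\mathcal{H}_{S_B}\otimes\mathcal{H}_{(S_A\cup S_B)^c}$, in which $H_A = \tilde H_A \otimes I \otimes I$ and $H_B = I \otimes \tilde H_B \otimes I$. Then $H_A$ and $H_B$ commute, and
\begin{equation}
    \exp(-\beta(H_A+H_B)) = \exp(-\beta \tilde H_A) \otimes \exp(-\beta \tilde H_B) \otimes I.
\end{equation}
Dividing the trace by $\fulldimension = 2^\numQubits$ yields
\begin{equation}
    \calZ\bigl|_{\lambda_c=0,\, c\notin A\cup B} = f_A(\lambda_A)\,f_B(\lambda_B),
\end{equation}
where $f_A = 2^{-\abs{S_A}}\Tr_{S_A}\exp(-\beta \tilde H_A)$ depends only on $\lambda_A = (\lambda_a)_{a\in A}$ and $f_B$ is defined analogously; note in particular that $f_A\bigl|_{\lambda_A = 0} = 1 = f_B\bigl|_{\lambda_B = 0}$. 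Since $f_A$ and $f_B$ depend on disjoint sets of variables, the product rule yields $\mdiff_{\cluster W'\cup\cluster W''}(f_Af_B) = (\mdiff_{\cluster W'}f_A)(\mdiff_{\cluster W''}f_B)$. Applying the same reduction to $\mdiff_{\cluster W'}\calZ$ (now setting $\lambda_c=0$ for $c\notin A$, which turns the $B$-factor into $1$) gives $\mdiff_{\cluster W'}\calZ = \mdiff_{\cluster W'}f_A$, and symmetrically $\mdiff_{\cluster W''}\calZ = \mdiff_{\cluster W''}f_B$, proving the first claim.

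For the second claim I would use $\betaF = \numQubits\log 2 + \log \calZ$, so differentiating kills the constant and it suffices to show $\mdiff_{\cluster W}\log \calZ = 0$ whenever $\cluster W$ is disconnected. Such a $\cluster W$ can be split as $\cluster W' \cup \cluster W''$ with both parts nonempty and no edge between their supports. After the same reduction, $\log \calZ = \log f_A + \log f_B$ is a sum of a function of $\lambda_A$ and a function of $\lambda_B$ only. The operator $\mdiff_{\cluster W'}$ contains at least one $\diff/\diff\lambda_a$ with $a\in A$, which annihilates $\log f_B$, and $\mdiff_{\cluster W''}$ similarly annihilates $\log f_A$; hence $\mdiff_{\cluster W} = \mdiff_{\cluster W'}\mdiff_{\cluster W''}$ annihilates both summands.

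There is no real obstacle here; the only thing to check carefully is the bookkeeping that setting the ``outside'' coefficients to zero commutes with the partial derivatives in $\lambda_a$ for $a \in A \cup B$, and that the tensor factorization of $\exp(-\beta(H_A+H_B))$ correctly produces the claimed factorization of $\calZ$ together with the value $f_A\bigl|_0 = f_B\bigl|_0 = 1$ that makes $\mdiff_{\cluster W'}\calZ$ match $\mdiff_{\cluster W'}f_A$.
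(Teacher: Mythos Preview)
Your proposal is correct and follows essentially the same approach as the paper: restrict to the variables in $\Supp\cluster W' \cup \Supp\cluster W''$, use disjointness of qubit supports to factorize $\calZ$ as a product of a function of $\lambda_A$ and a function of $\lambda_B$, then take $\log$ to convert the product into a sum that is killed by the mixed derivative. The paper is slightly terser (it writes $\calZ|_{\cluster W} = \calZ|_{\cluster W'}\calZ|_{\cluster W''}$ and says the first claim ``immediately follows''), whereas you spell out the tensor decomposition and the observation $f_A|_{0}=f_B|_{0}=1$ that makes $\mdiff_{\cluster W'}\calZ = \mdiff_{\cluster W'}f_A$; this extra detail is harmless and arguably clearer.
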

\begin{proof}
    The two operators $\sum_{a \in \Supp \cluster W'} \lambda_a E_a$ and $\sum_{b \in \Supp \cluster W''} \lambda_b E_b$ 
    commute with each other since the operators' supports do not overlap.
    Define $\calZ|_{\cluster W} = \frac 1 \fulldimension \Tr \exp(-\beta \sum_{a \in \Supp \cluster W} \lambda_a E_a )$
    and similarly $\betaF|_{\cluster W} = \log (\fulldimension \calZ|_{\cluster W})$.
    Then, $\calZ|_{\cluster W} = \calZ|_{\cluster W'} \calZ|_{\cluster W''}$ and
    $\betaF|_{\cluster W} = \betaF|_{\cluster W'} + \betaF|_{\cluster W''} - \log \fulldimension$.
    The first claim immediately follows.
    Since $\mdiff_{\cluster W}$ evaluates the derivative at the origin of the Hamiltonian coefficient space,
    we see
    \begin{align}
    \mdiff_{\cluster W} \betaF 
    &= 
    \mdiff_{\cluster W} (\betaF|_{\cluster W})
    =
    \prod_{a \in \Supp \cluster W'} \frac{\diff^{\mu(a)}}{\diff \lambda_{a}^{\mu(a)}} \prod_{b \in \Supp \cluster W''} \frac{\diff^{\mu(b)}}{\diff \lambda_{b}^{\mu(b)}}
    (\betaF|_{\cluster W'} + \betaF|_{\cluster W''} - \log \fulldimension)\biggr|_{\lambda = (0,\ldots,0)}
    = 0. 
    \end{align}
    In other words, $\betaF$ decomposes into a sum of $\betaF|_{\cluster W'}$ and $\betaF|_{\cluster W''}$, which are each a function of a strict subset of the variables $\lambda_a$ that appear in $W$, but they are each being differentiated with respect to all $\lambda_a$ that appear in $W$, and hence must evaluate to zero.
\end{proof}

\noindent Now we bound the number of all connected clusters of a given total weight $w$.
What matters most for us is that this bound is just exponential in the total weight, $\exp(O(w))$ instead of, say, the more naive bound $O(w!)$.
We optimize the base of the exponent in our bound, since this affects the eventual algorithm's runtime.

\begin{prop}\label{statement:count-clusters}
    Let $\graph$ be any graph with maximum degree $\degree \geq 2$.
    Given any node $a$ of $\graph$ and any weight $w \in \ZZ_{>0}$,
    the number of all connected clusters $\cluster W$ such that 
    $a \in \cluster W$ and $\abs{\cluster W} = w$
    is at most $e\degree(1+e(\degree-1))^{w-1}$.
    If $\degree =1$, then we have an upper bound of $w$.
\end{prop}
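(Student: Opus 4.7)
The plan is to split each connected cluster into two independent pieces and bound each separately. Given a connected cluster $\cluster{W}$ of weight $w$ containing $a$, I would first record its support $S = \Supp \cluster{W}$, which by hypothesis is connected and contains $a$, and then its multiplicity function $\mu : S \to \ZZ_{>0}$ with $\sum_{v \in S}\mu(v) = w$. Conditioning on $|S| = k$, the number of admissible multiplicity functions equals the number of compositions of $w$ into $k$ positive parts, namely $\binom{w-1}{k-1}$. Writing $C_k$ for the number of connected subsets of $V(\graph)$ of size $k$ containing $a$, the count of connected clusters of weight $w$ containing $a$ therefore equals
\begin{equation*}
    \sum_{k=1}^{w} C_k \binom{w-1}{k-1}.
\end{equation*}

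The heart of the argument is bounding $C_k$. My plan is to observe that every connected subset of size $k$ admits at least one spanning tree rooted at $a$, so $C_k$ is at most the number of labeled rooted subtrees of $\graph$ of size $k$ with root $a$. Lifting to the universal cover of $\graph$ (an infinite tree whose root has up to $\degree$ children and each non-root vertex has up to $\degree - 1$ children) only enlarges this count. To bound rooted subtrees in the cover, I would introduce the generating functions $G = x(1+G)^{\degree-1}$ for subtrees at a non-root vertex and $A = x(1+G)^{\degree}$ for subtrees at the root, extract coefficients by Lagrange inversion, and apply Stirling to conclude $C_k \leq e\degree(e(\degree-1))^{k-1}$.

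Plugging this estimate into the sum and applying the binomial theorem will yield
\begin{equation*}
    \sum_{k=1}^{w} C_k \binom{w-1}{k-1} \leq e\degree \sum_{k=1}^{w} (e(\degree-1))^{k-1}\binom{w-1}{k-1} = e\degree\,(1+e(\degree-1))^{w-1},
\end{equation*}
which is the claimed bound. For the degenerate case $\degree = 1$, the only possible supports containing $a$ are $\{a\}$ and $\{a,b\}$, where $b$ is the unique graph-neighbor of $a$ (if one exists); hence $C_1 = 1$, $C_2 \leq 1$, and $C_k = 0$ for $k \geq 3$, giving the bound $\binom{w-1}{0} + \binom{w-1}{1} = w$.

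The main obstacle will be obtaining the tight $\degree - 1$ inside the exponent rather than $\degree$. The sharpening hinges on exploiting that every non-root vertex in a rooted spanning tree already spends one of its at most $\degree$ incident graph-edges on its parent, leaving only $\degree - 1$ slots free for children; this is exactly what forces the $\degree - 1$ exponent in the non-root generating function $G$. Without this careful bookkeeping, a uniform $\degree$-ary branching argument gives only $(e\degree)^{k-1}$, which would propagate suboptimally through the subsequent cluster-expansion estimates.
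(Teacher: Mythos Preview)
Your proposal is correct and follows essentially the same route as the paper. Both arguments decompose a cluster into its support and its multiplicity profile, count multiplicity profiles on a $k$-element support by $\binom{w-1}{k-1}$, bound the number of connected supports containing $a$ by the number of rooted subtrees of the infinite $\degree$-regular tree via the identical generating-function pair $G = x(1+G)^{\degree-1}$ and $A = x(1+G)^{\degree}$ with Lagrange inversion, and finish with the same binomial-theorem summation; your framing through spanning trees and the universal cover is a minor rewording of the paper's one-line reduction to the $\degree$-regular tree.
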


\begin{proof}
    For a fixed degree $\degree$, the number of clusters is maximized when $\graph$ is an infinite $\degree$-regular tree.
    Since this graph is self-similar, without loss of generality we can think of $a \in \graph$ as being the root of the tree.
    So, this question reduces to upper-bounding the number of connected rooted subtrees of the infinite $\degree$-regular tree, where nodes of the subtrees are allowed to have multiplicity.
    If every node in the tree must have multiplicity one (that is, if we disallow multiplicity), we have the following.
    \begin{lemma}\label{statement:counting-subtrees}
        For $n \in \ZZ_{\geq 0}$, let $D_n$ be the number of all connected rooted subtrees with $n$ nodes in the infinite $\degree$-regular tree.
        Then
        \begin{align}
            D_n = \binom{n(\degree-1)+1}{n-1}\frac{\degree}{n(\degree-1) + 1} \le e\degree (e(\degree -1))^{n-1}.
        \end{align}
    \end{lemma}    
    To count subtrees \emph{with} multiplicity,
    we must count the number of ways to assign a positive integer to every node of a subtree.
    If the subtree has $k$ nodes,
    there are $\binom{(w-k)+(k-1)}{k-1}$ ways to assign multiplicities to these nodes such that the multiplicities sum to $w$.
    Hence, the number of weight-$w$ connected rooted clusters of the infinite $\degree$-regular tree is
    \begin{align}
           \sum_{k = 1}^w D_k \binom{w-1}{k-1}
        &\leq \sum_{k = 1}^w e \degree (e(\degree-1))^{k-1}\binom{w-1}{k-1} \\
        &= e\degree (1+e(\degree-1))^{w-1}.\qedhere
    \end{align}
\end{proof}

\begin{proof}[Proof of \cref{statement:counting-subtrees}]    
    This can be done with standard manipulations of generating functions, which we detail below.
    As a reminder, if we have a sequence $\{a_0, a_1, a_2, \ldots\}$ of integers, 
    then the generating function corresponding to it is $A(z) = \sum_{i\geq 0} a_i z^i$.
    We use the notation $[z^i]A(z)$ to refer to the coefficient $a_i$ of $A(z)$.

    Let $E(z)$ be the generating function counting subtrees of the infinite $(\degree-1)$-ary tree, 
    the tree where every node has $\degree-1$ children.%
    \footnote{
            For combinatorialists, this is also the generating function for the Fuss-Catalan numbers: 
            $E(z) + 1 = \sum_{n \geq 0} \frac{1}{n(\degree-1) + 1}\binom{n(\degree-1) + 1}{n}z^n$ \cite[7.5 Example~5]{GKP94}.
        }
    Then the following recursion holds.
    \begin{align}
        [z^n](E(z)) = \sum_{\substack{(i_1,\ldots, i_{\degree-1}) \in \ZZ_{\geq 0}^{\degree - 1} \\ i_1 + \cdots + i_{\degree - 1} = n-1}} \prod_{j=1}^{\degree - 1} [z^{i_j}](E(z)) \label{eq:e-gf}
    \end{align}
    In words, this describes an $n$-node rooted subtree of the infinite $(\degree-1)$-ary tree 
    as $\degree-1$ (possibly empty) subtrees corresponding to each child of the root, 
    where the number of nodes in each subtree sum to $n-1$.
    These subtrees are also rooted subtrees of the infinite $(\degree-1)$-ary tree, allowing the expression above to recurse as stated.
    One can verify that \cref{eq:e-gf} is equivalent to the equation
    \begin{align}
        E(z) = z(1 + E(z))^{\degree-1}.
    \end{align}
    
    Let $D(z)$ be the generating function counting subtrees of the infinite $\degree$-regular tree, 
    the tree where every node has $\degree$ neighbors.
    In particular, it only differs from the $(\degree-1)$-ary tree in the root node, 
    where there are $\degree$ instead of $\degree - 1$ many options.
    Using a similar argument as with $E(z)$, we can conclude that
    \begin{align}
            D(z) = z(1 + E(z))^{\degree} = E(z)(1+E(z)).
    \end{align}
    As an aside, the number of rooted clusters on the $\degree$-regular infinite tree, $C_n = \sum_{k \ge 1} \binom{n-1}{k-1} D_k$, corresponds to the generating function definition $C(z) = D(\frac{z}{1-z})$.
    Since $\binom{n-1}{k-1} = [z^{n}]( (z+z^2+\cdots)^{k} )$, we have
    \begin{align}
        [z^n](C(z)) &= \sum_{k \geq 0} \binom{n-1}{k-1} [z^k]D(z)
        = [z^n]D(z + z^2 + \cdots) = [z^n]D(\tfrac{z}{1-z}).
    \end{align}
    Returning to the proof, we use the Lagrange--Bürmann formula to get the series expansion of $D(z)$ from the inverse of $E(z)$.
    In particular, we use the formulation common in combinatorics~\cite[Thm A.2 (14)]{FS09},
    \begin{align}
        [z^n]H(y(z)) = \frac{1}{n}[u^{n-1}](H'(u)\phi(u)^n),
    \end{align}
    where $H$ is an arbitrary function and $y(z) = z\phi(y(z))$.
    We set $H(z) = z(1+z)$, $y(z) = E(z)$, and $\phi(z) = (1+z)^{\degree -1}$.
    So
    \begin{align}\label{eq:d-gf}
        [z^n]D(z) &= \frac{1}{n}[u^{n-1}]((2u+1)(1+u)^{n(\degree-1)}) \\
        &= \frac{1}{n}\Biggl(2\binom{n(\degree-1)}{n-2} + \binom{n(\degree-1)}{n-1}\Biggr) \nonumber\\
        &= \binom{n(\degree-1)+1}{n-1}\frac{\degree}{n(\degree-1) + 1}.\nonumber
    \end{align}
    \cref{eq:d-gf} is the desired equality in the lemma statement.
    As for the inequality, clearly, $[z^n]D(z) \leq e\degree(e(\degree-1))^{n-1}$ for $n =0,1$.
    For $n \ge 2$, we see
    \begin{align}
            \binom{n(\degree-1)+1}{n-1}\frac{\degree}{n(\degree-1) + 1}
            &\le
            \degree\binom{n(\degree-1)}{n-1} \\
            &\le
            \degree\Big(\frac{en(\degree -1)}{n-1}\Big)^{n-1} \nonumber\\
            &=
            \degree(e(\degree-1))^{n-1} \big( \frac{n}{n-1} \big)^{n-1}\nonumber\\
            &\le 
            e\degree(e(\degree-1))^{n-1}. \qedhere
    \end{align}
\end{proof}

\subsection{Enumerating connected clusters} \label{sec:enumeratingClusters}

Enumerating clusters of total weight $m$ at a node $a \in \graph$ can be done with the following algorithm.
First, perform a breadth-first search starting at $a$ to produce the disjoint sets $V_i$ for \mbox{$i \in \{0,1,\ldots,m-1\}$}, where $V_i$ is the set of nodes exactly $i$ away from $a$ in graph distance.
Note that this induces a directed tree $G = (V,E)$ with vertices $V = V_0 \sqcup \cdots \sqcup V_{m-1}$ and a directed edge $(u,v)$ occurring if it is an edge in $\graph$ and $v$ is at a lower level than $u$ (so $u \in V_i$ and $v \in V_{i+1}$).
Since $G$ is directed, all the neighbors of $u \in V_i$ are in $V_{i+1}$.
For $S \subset V$, we denote $\Gamma(S)$ to be the neighborhood of $S$ in $G$.
If~$\cluster S$ is a multiset, 
$\Gamma(\cluster S)$ is defined to be the $G$-neighborhood of the support, $\Gamma(\Supp \cluster S)$.

Every cluster can be represented uniquely as a collection of multisets $\cluster S_i$ of $V_i$ for $i \in \{0,1,\ldots,m-1\}$ 
satisfying that every node in $\cluster S_{i+1}$ has a parent in $\cluster S_{i}$ (or equivalently, satisfying that $\Supp \cluster{S}_0 \cup \cdots \cup \Supp \cluster{S}_{m-1}$ is connected in $\graph$).

Because of this characterization, we can enumerate clusters through a recursive function that, given the first $i$ layers of a cluster $\cluster S_0, \ldots, \cluster S_{i-1}$, outputs a list of all possible ways to complete the cluster $\cluster S_i,\ldots,\cluster S_{m-1}$ such that the total weight of the cluster is $m$.
We describe this function in \cref{alg:cluster-enumeration}; it only requires three parameters, the recursion level $i$, the remaining weight $m_i$, and the neighborhood $C_i$, which correspond to $i$, $m - (\abs{\cluster{S}_0} + \cdots + \abs{\cluster{S}_{i-1}})$, and $\Gamma(\cluster{S}_{i-1})$ in the above description.
To find all the clusters of weight $m$, run $\operatorname{tails}(0, m, \{a\})$.
The function proceeds as follows: at recursion level $i \ge 0$, we loop over all possible nonempty multisets $\cluster S_i$ of $C_i \subset V_i$ of weight $\leq m_i$.
For each such multiset $\cluster S_i$ of $C_i$, we call the recursive function to enumerate all of its possible continuations,
with parameters $i+1$, $m_{i+1} = m_i - \abs{\cluster S_i}$, and $C_{i+1} = \Gamma(\cluster S_i)$.
Once it returns the possible continuations of this cluster, we add $\cluster S_i$ to every continuation (to make them continuations of $\cluster S_{i-1}$).
Upon enumerating all possible $\cluster S_i$'s, return the resulting (now complete) list of continuations of $\cluster S_{i-1}$ as output.

\begin{algorithm} 
    \caption{$\operatorname{tails}(i, m_i, C_i)$ [Cluster enumeration recursion]}\label{alg:cluster-enumeration}
    \KwData{depth $i$, size $m_i$, base $C_i \subseteq V_i$}
    \KwResult{$\mathcal C_i$, the collection of all connected multisets of size $m_i$ supported on $C_i \sqcup V_{i+1} \sqcup \cdots \sqcup V_{m-1}$}
    \If{$m_i = 0$}{
        Return $\{\varnothing\}$\;
    }
    Let $\operatorname{out} \leftarrow \varnothing$\;
    \For{$\cluster S \in \multisetnumber{C_i}{1} \cup \cdots \cup \multisetnumber{C_i}{m_i}$, 
    		where $\multisetnumber{C_i}{j}$ denotes the set of all multisets of weight $j$ supported on $C_i$ 
    }{
        Recurse $\operatorname{continuations}(\cluster S) \leftarrow \operatorname{tails}(i+1,~ m_i - \abs{\cluster{S}}, ~ \Gamma(\cluster S))$\;
        Append $\operatorname{out} \leftarrow \operatorname{out} \cup \{\cluster S \cup \cluster T \mid \cluster T \in \operatorname{continuations}(\cluster S) \}$\;
    }
    Return $\operatorname{out}$\;
\end{algorithm}

Given the dual interaction graph as a random-access dictionary, (i.e., one can query an arbitrary node and receives its neighbors), the runtime of this algorithm is $O(m \degree \mathcal{C})$, where $\mathcal{C}$ is the number of clusters output.
The main cost is computing $\Gamma(\cluster S)$ from a given multiset $\cluster S$ on $V_{i}$:
this takes time $\sum_{v \in \Supp \cluster S} \abs{\Gamma(v)} \leq \degree \abs{\cluster S}$,
where the inequality uses that $G$ is degree $\leq \degree$.
For every cluster $\cluster{S}_0 \sqcup \cdots \sqcup \cluster{S}_{m-1}$, this computation occurs once for each of the $\cluster{S}_i$'s.
Since every cluster has a total weight of $m$, this gives an upper bound of $O(m\degree \mathcal{C})$ for all such computations.

\subsection{Estimating cluster derivatives}\label{sec:clusterDerivatives}

The goal of this subsection is to prove the following bound on cluster derivatives.
\begin{prop}\label{statement:cluster-deriv-bound}
    Consider a Hamiltonian $\{(a, E_a, \lambda_a)\}$.
    Let $\cluster W = \{(a, \mu(a))\}$ be a cluster of the associated dual interaction graph $\graph$ with total weight $m+1 \ge 1$. 
    Then
    \begin{align}
    \abs[\Big]{\frac{1}{\cluster W !} \mdiff_{\cluster W} \betaF}
    \le
    (2 e (\degree+1) \beta)^{m+1} .
    \end{align}
\end{prop}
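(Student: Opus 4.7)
\textbf{Proof plan for \Cref{statement:cluster-deriv-bound}.}
The plan is to apply Cauchy's integral formula in $s := \abs{\Supp\cluster W}$ complex variables to $\lambda \mapsto \log\calZ(\lambda)$, where $\calZ(\lambda)=\Tr\exp(-\beta H)/\fulldimension$ satisfies $\calZ(0)=1$. Since $\betaF=\log\fulldimension+\log\calZ$ and the constant $\log\fulldimension$ is killed by any nonempty multi-derivative, we have $\mdiff_{\cluster W}\betaF=\mdiff_{\cluster W}\log\calZ$; moreover, the $\lambda_b$ for $b\notin\Supp\cluster W$ are set to zero, so only $s$ terms of $H$ are actually active. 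The multivariate Cauchy formula on a polytorus $\{\abs{\lambda_a}=r_a : a\in\Supp\cluster W\}$ on which $\calZ$ is nonvanishing yields
\[
\biggl\lvert\frac{1}{\cluster W!}\mdiff_{\cluster W}\log\calZ\biggr\rvert \;\le\; \frac{\sup_{\text{polytorus}}\abs{\log\calZ}}{\prod_{a\in\Supp\cluster W} r_a^{\mu(a)}}.
\]

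To control $\sup\abs{\log\calZ}$ I would exploit $\Tr E_a = 0$ (which holds for nonidentity Paulis) so that $\Tr(-\beta H)=0$. Combined with the operator Taylor bound $\norm{e^{-\beta H} - I + \beta H}\le\tfrac{1}{2}(\beta\norm{H})^2 e^{\beta\norm{H}}$ this gives $\abs{\calZ-1}\le\tfrac{1}{2}(\beta\sum_a r_a)^2 \exp(\beta\sum_a r_a)$ on the polytorus, where I use $\norm{H}\le\sum_a r_a$. Taking the $r_a$'s with $\beta\sum_a r_a$ bounded by a small absolute constant forces $\abs{\calZ-1}\le 1/2$, hence $\abs{\log\calZ}\le\log 2\le 1$; the Cauchy bound then simplifies to $1/\prod_a r_a^{\mu(a)}$. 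AM--GM under the constraint $\sum_a r_a\le c/\beta$ is optimized by $r_a\propto\mu(a)$, and after combining $\prod_a \mu(a)^{\mu(a)}$ with $\cluster W!=\prod_a\mu(a)!$ via Stirling's approximation one reads off a clean per-cluster bound.

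The main obstacle is that the above yields a factor of $s=\abs{\Supp\cluster W}$ rather than $\degree+1$, and a connected cluster can have $s$ as large as $m+1$, which is much larger than $\degree$. Closing this gap requires using the bounded-degree structure of $\graph$ beyond what the polytorus radius alone captures. I expect the paper follows the route of~\cite{WA22}: one establishes a Lee--Yang / Barvinok-type zero-free region of the form $\{\lambda : \beta\abs{\lambda_a}< \tfrac{1}{O(\degree+1)} \text{ for all } a\in\Supp\cluster W\}$ on which $\calZ$ stays nonzero with $\abs{\log\calZ}$ uniformly bounded, so that the uniform choice $r_a=\Theta(1/((\degree+1)\beta))$ becomes admissible in the Cauchy formula and gives exactly the claimed $(2e(\degree+1)\beta)^{m+1}$. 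An alternative combinatorial route is to expand $\log\calZ$ as a sum of Ursell-type contributions indexed by the connected subgraphs of $\cluster W$, bound each subgraph's contribution through its spanning-tree structure whose branching at every vertex is limited by $\degree$, and then use the $1/\cluster W!$ symmetry factor to absorb overcounting. Either route should produce the claimed $(2e(\degree+1)\beta)^{m+1}$; the hardest part is arguing the zero-free region (or equivalently the per-subgraph bound) with a constant that is linear in $\degree$ rather than in $s$.
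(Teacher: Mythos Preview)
Your Cauchy-integral approach has the gap you yourself identify: it produces a factor of $s=\abs{\Supp\cluster W}$ where you need $\degree+1$, and your proposed repair via a zero-free polydisc of radius $\Theta(1/((\degree+1)\beta))$ is circular in spirit. The standard way to establish such a zero-free region for $\calZ$ on a bounded-degree interaction graph is precisely through cluster-expansion estimates of the kind you are trying to prove (this is how the Barvinok-style results in e.g.\ \cite{kkb20} are set up). So ``use the zero-free region'' relocates rather than resolves the difficulty. Also note that the statement is for general Hermitian $E_a$ with $\norm{E_a}\le 1$, so $\Tr E_a=0$ is not available; your bound on $\abs{\calZ-1}$ via the second-order remainder would need to use the first-order remainder instead, which only worsens constants.

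The paper does follow \cite{WA22}, but not through a zero-free region; it is your second ``alternative combinatorial route,'' made precise. One first expands $\log\calZ$ explicitly and rewrites $\mdiff_{\cluster W}\betaF$ as a sum over partitions $\gpart P$ of the graph $\gr(\cluster W)$ into connected induced subgraphs, each partition weighted by an alternating coloring quantity $\sum_k\tfrac{(-1)^{k-1}}{k}\chi^*(k,\gr(\gpart P))$ times a product of $\mdiff_{\cluster V_\ell}\calZ$'s that is trivially $\le\beta^{m+1}$ (\cref{lem:ToColoring}). A deletion--contraction argument then shows that the coloring weight is nonnegative and bounded by the number of spanning trees $\tau(\gr(\gpart P))$, and a surjection gives $\sum_{\gpart P}\tau(\gr(\gpart P))\le 2^{m}\tau(\gr(\cluster W))$ (\cref{lem:tutte-bound}). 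Finally $\tau(\gr(\cluster W))\le\prod_{v}\deg(v)$ by orienting edges toward a root, yielding \cref{statement:cluster-derivative-bound-no-multiplicity}. The $\degree$-dependence enters only at the last step: the sum of the degrees $\deg(v)$ in $\gr(\cluster W)$ is at most $(\degree+1)(m+1)$, and an elementary inequality (\cref{statement:munu}) combined with $\cluster W!\ge\prod_a\mu(a)^{\mu(a)}e^{-\mu(a)}$ converts $\prod_v\deg(v)/\cluster W!$ into $(e(\degree+1))^{m+1}$. So the spanning-tree structure is used on the full graph $\gr(\cluster W)$, and the branching is not limited by $\degree$ per vertex but only in aggregate.
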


\noindent 
To prove this, we will bound $\abs{\mdiff_{\cluster w} \betaF}$ by a quantity 
that depends on a simple graph constructed from~$\cluster W$ and the dual interaction graph~$\graph$,
which we now define.

Recall from \cref{def:cluster} that given a set~$S$, a multiset~$S^\mu$ of elements of~$S$ 
is a set $\{(s,\mu(s))~|~ s \in S\}$ where $\mu(s)\in \ZZ_{\ge 0}$ is the multiplicity of~$s$.
We also write $S^\mu = \{ (s_i,\mu(s_i)) \} = \{\{ s_1,\ldots, s_2,\ldots\}\}$ where $s_i$ is repeated exactly $\mu(s_i)$ times.
The size of~$S^\mu$ is $\abs{S^\mu} = \sum_{s \in S} \mu(s)$.
The support of~$S^\mu$ is $\Supp S^\mu = \{ s \in S | \mu(s) > 0 \}$.
We write $S^\mu !$ to mean $\prod_{s \in S} (\mu(s)!)$.
If $S$ is the node set of a simple graph~$F$,
we define a simple graph~$\gr( S^\mu )$ as follows.
The set of nodes are
\begin{equation}
    \marked(S^\mu) := \{(s,i) \in (\Supp S^\mu) \times \ZZ_{>0} |  1 \le i \le \mu(s)\};
\end{equation}
in other words, there are exactly $\mu(s)$ nodes corresponding to~$s$ for each~$s \in S$,
so there are $\abs{S^\mu}$ nodes in total.
In~$\gr(S^\mu)$, an edge between $(s,i)$ and $(s',i')$ exists 
if and only if either $s = s'$ or $(s,s')$ is an edge of~$F$.
In particular, the induced subgraph of~$\{(s,i)\}_i$ for any given~$s$ is a clique.

Since the dual interaction graph~$\graph$ serves as the underlying graph for Hamiltonian terms,
for any cluster~$\cluster W$ of Hamiltonian terms we have a corresponding $\gr(\cluster W)$.
In $\gr(\cluster W)$ there are $\abs{\cluster W}$ nodes in total,
and an edge between two nodes~$(a,i), (a',i')$ exists iff 
the Hamiltonian terms~$E_a$ and $E_{a'}$ have overlapping supports.
Note that if all multiplicities of~$\cluster W$ are either~$0$ or~$1$,
then $\gr(\cluster W)$ is an induced subgraph of~$\graph$,
but is not otherwise.
Most of this section will be devoted to proving the following lemma, which implies \cref{statement:cluster-deriv-bound}.

\begin{lemma}[\cite{WA22}] \label{statement:cluster-derivative-bound-no-multiplicity}
Denote by $\deg(v)$ the number of neighbors of any node~$v$ in~$\gr(\cluster W)$.
Then,
\begin{align}
    \abs[\Big]{\mdiff_{\cluster W} \betaF} \le \abs{\beta}^{\abs{\cluster W}}
    \prod_{v \in \marked \cluster W} (2\deg(v)).
\end{align}
\end{lemma}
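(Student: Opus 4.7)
The plan is to derive an explicit expansion for $\mdiff_{\cluster W}\betaF$ and then bound it term by term. First I would expand the exponential as the convergent series
\begin{align}
    \Tr\exp(-\beta H) = \sum_{k \ge 0}\frac{(-\beta)^k}{k!}\sum_{(a_1,\ldots,a_k) \in [\numTerms]^k}\lambda_{a_1}\cdots\lambda_{a_k}\,\Tr[E_{a_1}\cdots E_{a_k}],
\end{align}
and extract the coefficient picked out by $\mdiff_{\cluster W}$. This yields a finite closed formula for $\mdiff_{\cluster W}\Tr\exp(-\beta H)$ as $(-\beta)^{\abs{\cluster W}}\,\cluster W!$ times an average over orderings of the marked points of $\cluster W$ of traces of products of the corresponding $E_a$'s.

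Second, I would translate this into an expansion for $\mdiff_{\cluster W}\betaF$ via the moment-cumulant (Ursell) formula, using that $\calZ = \tfrac{1}{\fulldimension}\Tr\exp(-\beta H)$ equals $1$ at $\lambda = 0$. Expanding $\log(1+(\calZ-1))$ and grouping derivatives by set partitions gives a signed sum over set partitions $\pi$ of $\marked(\cluster W)$,
\begin{align}
    \mdiff_{\cluster W}\betaF \;=\; \sum_{\pi} (-1)^{\abs{\pi}-1}(\abs{\pi}-1)!\prod_{B \in \pi}\mdiff_{B}\calZ.
\end{align}
By \cref{statement:log-derivative-connected} applied blockwise, only partitions whose blocks induce connected subgraphs of $\gr(\cluster W)$ contribute nontrivially. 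For each surviving term, the estimates $\norm{E_a}\le 1$ and $\abs{\Tr A}/\fulldimension \le \norm A$, combined with the counting of orderings from the Dyson/simplex picture, bound each $\mdiff_B \calZ$ in magnitude by $\abs{\beta}^{\abs{B}}$ times a combinatorial factor depending only on $\abs{B}$ and the local graph structure.

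The delicate step, and the main obstacle, is combinatorial: summing those bounds over all connected set partitions of $\marked(\cluster W)$ and matching the per-vertex factor $2\deg(v)$. A naive enumeration of connected partitions is far too loose, so my plan is to invoke a tree-graph (Penrose-type) inequality on $\gr(\cluster W)$, which rewrites the partition sum as a weighted sum over spanning-tree-like data and then admits the clean per-vertex bound $2\deg(v)$. The factor of $2$ per vertex should absorb a left/right attachment choice forced on us by non-commutativity of the $E_a$'s, which prevents collapsing the averaged trace into a single product; this is the essential technical reason the estimate cannot be sharpened to $\prod_v \deg(v)$ by purely graph-theoretic means. Once this combinatorial bound is in place, extracting the overall $\abs{\beta}^{\abs{\cluster W}}$ prefactor finishes the proof.
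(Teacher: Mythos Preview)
Your overall structure aligns with the paper: expand $\calZ$, pass to $\log\calZ$ via the Ursell/moment--cumulant expansion, bound each block derivative by $\abs{\mdiff_B\calZ}\le\abs{\beta}^{\abs{B}}$, and reduce to a combinatorial tree-type estimate on $\gr(\cluster W)$. One imprecision: \cref{statement:log-derivative-connected} says $\mdiff_B\calZ$ \emph{factorizes} over connected components for disconnected~$B$, it does not vanish. Reorganizing your Ursell sum onto partitions into connected pieces therefore changes the weights; after doing so carefully (this is the paper's \cref{lem:ToColoring}) the weight on a connected graph partition~$\gpart P$ becomes $\sum_n\frac{(-1)^{n-1}}{n}\chi^*(n,\gr(\gpart P))$, a signed proper-coloring count of the coarsened graph. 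This is equivalent to your $(-1)^{\abs{\pi}-1}(\abs{\pi}-1)!$ formula after grouping, but it is the form in which the combinatorics becomes tractable.

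The genuine gap is the ``Penrose-type'' step, which you only name. This is where the whole proof lives, and a generic tree-graph inequality does not obviously apply (the weights are signed coloring counts, not hard-core polymer activities). The paper, following~\cite{WA22}, is concrete: define $\eta(F)=\sum_k\frac{(-1)^{\abs{F}-k}}{k}\chi^*(k,F)$, show it satisfies the deletion--contraction recurrence and hence $0\le\eta(F)\le\tau(F)$ (the spanning-tree count) by comparing base cases, then bound $\sum_{\gpart P\in\gpartcon(G)}\tau(\gr(\gpart P))\le 2^{\abs{G}-1}\tau(G)$ via an explicit surjection from pairs (spanning tree of~$G$, subset of its edges) onto pairs (connected partition, spanning tree of the quotient), and finally $\tau(G)\le\prod_v\deg(v)$ by rooting and choosing a parent edge at each nonroot. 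Your speculation that the per-vertex factor of~$2$ encodes a left/right noncommutative attachment is also off: the~$2^{\abs{G}-1}$ is the edge-subset count in that surjection and is purely graph-theoretic; noncommutativity is already fully absorbed into $\abs{\mdiff_B\calZ}\le\abs{\beta}^{\abs{B}}$ and plays no further role.
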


\begin{proof}[Proof of \cref{statement:cluster-deriv-bound}]
    It follows from the definition of~$\gr(\cluster W)$ that
    \begin{align} \label{eq:degree-count}
        \deg((b,i))= \mu(b) - 1 + \sum_{a \in \Gamma(b)} \mu(a)
    \end{align}
    for any~$b \in \Supp \cluster W$,
    where $\Gamma(b)$ is the set of all neighbors of~$b$ in~$\graph$ that appear in~$\cluster W$.
    Further note that
    \begin{multline} \label{eq:degree-sum-count}
        \sum_{b\in \Supp \cluster W} \deg((b,1))
        =\sum_{b \in \Supp \cluster W} \Big(\mu(b) - 1 + \sum_{a \in \Gamma(b)} \mu(a) \Big)
        \le m+1 + \sum_{b \in \Supp \cluster W} \sum_{a \in \Gamma(b)} \mu(a)
        \le m + \degree (m+1).
    \end{multline}
    because $\mu(a)$ appears in $\sum_b \sum_{a\in \Gamma(b)}$ at most $\degree$ times.
    We can now apply \cref{statement:cluster-derivative-bound-no-multiplicity}.
    \begin{align}
        \frac 1 {\cluster W !} \abs{\mdiff_{\cluster W} \betaF}
        &\le
        \frac {(2\beta)^{m+1}} {\cluster W !} \prod_{b\in \Supp \cluster W} \prod_{i=1}^{\mu(b)} \deg((b,i)) &\text{by \Cref{statement:cluster-derivative-bound-no-multiplicity}} \nonumber\\
        &=
        (2\beta)^{m+1} 
            \prod_{b\in \Supp \cluster W} \frac{1}{\mu(b)!}
            \Bigl( \mu(b) - 1 + \sum_{a \in \Gamma(b)} \mu(a) \Bigr)^{\mu(b)} &\text{by \Cref{eq:degree-count}}
            \nonumber\\
        &\le
        (2e\beta )^{m+1}\prod_{b\in \Supp \cluster W} \Biggl( \frac{\mu(b) - 1 + \sum_{a \in \Gamma(b)} \mu(a)}{\mu(b)} \Biggr)^{\mu(b)} &\text{because }u! \ge u^u e^{-u} \nonumber\\
        &\le
        (2e\beta)^{m+1} \left( \frac{(1+\degree) (m+1)}{m+1} \right)^{m+1}. &
    \end{align}
    The last inequality uses \cref{statement:munu} below and \Cref{eq:degree-sum-count}.
\end{proof}

\begin{lemma}\label{statement:munu}
    Let $\mu_1,\ldots,\mu_n > 0$ be real numbers, and
    $y_1,\ldots,y_n \ge 0$ be real numbers.
    Then
    \begin{align}
        \left(\frac{y_1}{\mu_1}\right)^{\mu_1}
        \cdots
        \left(\frac{y_n}{\mu_n}\right)^{\mu_n}
        \le
        \left(\frac{y_1 + \cdots + y_n}{\mu_1 + \cdots + \mu_n} \right)^{\mu_1 + \cdots + \mu_n}  \label{eq:munu},
    \end{align}
    where the equality holds when $y_j / \mu_j = (\sum_i y_i) / (\sum_i \mu_i)$ for all $j$.
\end{lemma}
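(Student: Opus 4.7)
The inequality is a weighted AM--GM statement in disguise, so my plan is to reduce it to Jensen's inequality applied to the concave function $\log$.

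First, I would dispatch the degenerate case where some $y_j = 0$. In that case the corresponding factor $(y_j/\mu_j)^{\mu_j}$ on the left-hand side is zero (since $\mu_j > 0$), so the left-hand side vanishes while the right-hand side is nonnegative, and the inequality is trivial. So I may assume $y_j > 0$ for all $j$.

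Next, set $M = \mu_1 + \cdots + \mu_n$ and $Y = y_1 + \cdots + y_n$, and define weights $p_j = \mu_j/M$, so that $p_j > 0$ and $\sum_j p_j = 1$. Taking logarithms of both sides of \cref{eq:munu} and dividing by $M$, the desired inequality becomes
\begin{equation}
    \sum_{j=1}^{n} p_j \log\!\Bigl(\frac{y_j}{\mu_j}\Bigr) \le \log\!\Bigl(\frac{Y}{M}\Bigr).
\end{equation}
Since $Y/M = \sum_j (y_j/M) = \sum_j p_j (y_j/\mu_j)$, the right-hand side is exactly $\log\bigl(\sum_j p_j (y_j/\mu_j)\bigr)$. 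The inequality is then precisely Jensen's inequality for the concave function $\log$ applied to the convex combination of the positive reals $y_j/\mu_j$ with weights $p_j$. Exponentiating and raising to the $M$-th power recovers \cref{eq:munu}.

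For the equality condition, since $\log$ is strictly concave on $(0,\infty)$, Jensen's inequality is an equality precisely when all of the arguments $y_j/\mu_j$ are equal, which by the definition of $Y/M$ as their weighted average is the same as $y_j/\mu_j = Y/M = (\sum_i y_i)/(\sum_i \mu_i)$ for every $j$. There is no real obstacle in this proof; the only thing to be careful about is the boundary case $y_j = 0$ (so that taking logarithms is justified), which is handled separately at the start.
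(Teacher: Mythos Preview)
Your proof is correct and follows essentially the same approach as the paper: handle the $y_j=0$ case separately, then take logarithms, divide by $\sum_i \mu_i$, and recognize the result as Jensen's inequality for the concave function $\log$ with weights $\mu_j/\sum_i \mu_i$. Your write-up is in fact slightly more detailed than the paper's, since you explicitly verify that $Y/M$ is the weighted average of the $y_j/\mu_j$ and spell out the equality case via strict concavity.
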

\begin{proof}
    If any of $y_i$ is zero, the inequality is trivial.
    Assume $y _i > 0$ for all $i$.
    Taking log of both sides and dividing by $\sum_i \mu_i$, we have
    \begin{align}
        \sum_{i=1}^n \frac{\mu_i}{\sum_j \mu_j} \log\left(\frac{y_i}{\mu_i}\right)
        &\le \log\left(\frac{y_1 + \cdots + y_n}{\mu_1 + \cdots + \mu_n} \right).
    \end{align}
    This is Jensen's inequality applied to a concave function $\log$.
\end{proof}

\subsubsection*{Proof of \cref{statement:cluster-derivative-bound-no-multiplicity}.}

We will adopt the approach in~\cite{WA22}, making some short-cuts.\footnote{
    We thank the authors of~\cite{WA22} for pointing out 
    a problem in our earlier version of this proof,
    which traces back to~\cite[App.~C]{KS19}.
}
Our combinatorics will be self-contained; 
we do not assume any prior knowledge of Tutte polynomials or chromatic polynomials,
which were used in~\cite{WA22}.
However, in essence, the proof here is due to~\cite{WA22}.

We will use multisets of clusters.
All the general remarks above on multisets continue to apply.
Consider a multiset of clusters~$\cluster W_i$, which we denote $\partition P = \{\{ \cluster W_1, \ldots \}\} = \{ (\cluster W, \mu(\cluster W) ) \}$.
We write $\abs{\partition P} = \sum_{\cluster W} \mu(\cluster W)$,
$\Supp \partition P = \{ \cluster W ~|~ \mu(\cluster W) > 0 \}$,
and $\partition P! = \prod_{\cluster W \in \Supp \partition P} (\mu(\cluster W)!)$.
The set (not multiset) of all \emph{connected} clusters defines 
a simple graph where there is an edge between $\cluster W$ and $\cluster W'$ if and only if
their multiset union $\cluster W \cup \cluster W'$ (obtained by summing the multiplicities) is a connected cluster.
With this, a multiset~$\partition P$ of connected clusters defines a simple graph~$\gr(\partition P)$:
there are $\abs{\partition P}$ nodes in total,
and an edge between two nodes corresponding to~$\cluster W$ and~$\cluster W'$ exists if and only if
$\cluster W \cup \cluster W'$ is connected.

\paragraph{Counting partitions of a cluster}

Consider a cluster $\cluster W$.
From an (unordered) partition $\{W_1, \ldots, W_p\}$ of the \emph{graph node set} $\marked(\cluster W)$, we can get an (unordered) partition of the \emph{cluster} $\cluster W$ simply by forgetting all of the labels on the terms in $\cluster W$.
By a ``partition'' of a cluster, we mean a multiset $\partition P = \{\{\cluster W_1, \ldots, \cluster W_p\}\}$ such that their multiset union is $\cluster W$.
Conversely, given a cluster partition $\partition P$, the number of graph partitions that get mapped to $\partition P$ by ``forgetting'' is
\begin{equation} \label{eq:cluster-to-graph-count}
    \frac{\cluster W!}{\partition P! \prod_{\ell=1}^p(\cluster W_\ell!)}.
\end{equation}
Here, $\cluster W!$ is the number of ways to assign labels to $\partition P$ if we give an arbitrary ordering to both the clusters and the terms within the clusters; $\prod_{\ell=1}^p(\cluster W_\ell!)$ addresses the overcounting from ordering each term $\cluster W_\ell$, since swapping labels within a cluster doesn't change the cluster; and $\partition P!$ addresses the overcounting from ordering $\partition P$, since swapping the labels across two identical clusters doesn't change the cluster partition.
For example, consider a Hamiltonian with two terms $E_a, E_b$ with overlapping support.
Then the cluster partition $\partition P = \{\{a,b\},\{a,b\},\{b,b\}\}$ of the connected cluster $\cluster W = \{(a, 2), (b, 4)\}$ corresponds to 12 graph partitions, using the reasoning above.
\begin{align*}
    &(((a,f), (b,h)), ((a,g), (b,i)), ((b,j), (b,k)))
    &\{f,g\} = [2], \{h,i,j,k\} = [4],\, 2!4! = 48& \text{ choices} \\
    \mapsto&(\{(a,f), (b,h)\}, \{(a,g), (b,i)\}, \{(b,j), (b,k)\})
    &\text{overcounts by a } (1!1!)(1!1!)(2!) \text{ factor, } 24& \text{ choices} \\
    \mapsto&\{\{(a,f), (b,h)\}, \{(a,g), (b,i)\}, \{(b,j), (b,k)\}\}
    &\text{overcounts by a } 2!1! \text{ factor, } 12& \text{ choices}
\end{align*}

We write $\gpartcon(F)$ for a graph~$F$ (not necessarily simple) to mean 
the collection of all graph partitions of~$F$ into \emph{connected} induced subgraphs.
For any integer $n \ge 1$,
let $\chi^*(n, F)$ denote the number of all node colorings (two end nodes of an edge having different colors) 
on~$F$ using exactly $n$ colors.

\begin{lemma}\label{lem:ToColoring}
For any nonempty cluster~$\cluster W$ we have
\begin{equation}
\abs{\mdiff_{\cluster W} \betaF}
\le
\beta^{\abs{\cluster W}}
\sum_{\gpart P \in \gpartcon(\gr(\cluster W))}
\abs*{
    \sum_{n = 1}^{\abs{\gpart P}} \frac{(-1)^{n-1}}{n} \chi^*(n, \gr(\gpart P)) 
}.
\end{equation}
\end{lemma}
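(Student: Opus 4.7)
\textbf{Proof plan for \cref{lem:ToColoring}.} The plan is to expand $\betaF = \log\fulldimension + \log\calZ$ (with $\calZ = \Tr e^{-\beta H}/\fulldimension$) using $\log(1+x) = \sum_{n\ge 1}\tfrac{(-1)^{n-1}}{n}x^n$, which is valid near the origin $\lambda=0$ where $\calZ-1$ is small. Substituting $\calZ-1 = \sum_{m\ge 1}\tfrac{(-\beta)^m}{m!}\Tr(H^m)/\fulldimension$ and $H = \sum_a\lambda_a E_a$ writes $(\calZ-1)^n$ as a sum over $n$-tuples of nonempty ordered sequences of term indices. Applying $\mdiff_{\cluster W}$ at the origin retains only tuples whose combined multiset equals $\cluster W$: this gives $\sum m_i = |\cluster W|$ producing a global $(-\beta)^{|\cluster W|}$, a factor $\cluster W!$ from $\mdiff_{\cluster W}[\lambda^{\cluster W}]=\cluster W!$, and (via $(\calZ-1)^n = O(\lambda^n)$) restricts the series to finitely many terms $n\le|\cluster W|$, so subsequent manipulations are all on finite sums.

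The next step re-parametrizes these $n$-tuples by colorings. I will exhibit a $\cluster W!$-to-$1$ map from pairs $(c,\pi)$---where $c\colon\marked(\cluster W)\twoheadrightarrow[n]$ is a surjective ``coloring'' and $\pi_i$ is an ordering of the color class $c^{-1}(i)$---onto the surviving $n$-tuples: the overcount equals $\prod_a \mu(a)! = \cluster W!$ by the freedom to assign the copies $(a,1),\ldots,(a,\mu(a))$ among the $\mu(a)$ positions where term $a$ appears across the sequences. Summing over $\pi_i$ with the weight $1/|c^{-1}(i)|!$ produces the per-block ordering-averaged normalized trace $\hat T_i(c)$, with $|\hat T_i(c)|\le 1$ by $\|E_a\|\le 1$. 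The two $\cluster W!$ factors cancel, giving
\[
\mdiff_{\cluster W}\betaF = (-\beta)^{|\cluster W|}\sum_{n\ge 1}\tfrac{(-1)^{n-1}}{n}\sum_{c\colon\marked(\cluster W)\twoheadrightarrow[n]}\prod_{i=1}^n \hat T_i(c).
\]

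The third step groups colorings by their connected-component refinement $\gpart P_c\in\gpartcon(\gr(\cluster W))$, obtained by splitting each color class of $c$ into its $\gr(\cluster W)$-connected components. Since operators associated with different connected components have disjoint qubit supports, both the trace and its ordering-average factor across components, so $\prod_i\hat T_i(c) = \prod_{C\in\gpart P_c}\hat T_C$ depends only on $\gpart P_c$. For fixed $\gpart P$ and $n$, the surjections $c$ with $\gpart P_c = \gpart P$ are in bijection with proper colorings of $\gr(\gpart P)$ that use all $n$ colors: $c$ must be constant on each part of $\gpart P$ (or else split one), and must give different colors to any two parts whose union is connected (or else merge them), where ``union is connected'' is exactly the edge relation of $\gr(\gpart P)$. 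This count equals $\chi^*(n,\gr(\gpart P))$.

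Swapping the two finite sums and applying the triangle inequality to the outer $\gpart P$-sum, together with $|\hat T_C|\le 1$ per factor, yields the stated bound. The part I expect to require the most care is the symmetry-factor bookkeeping---verifying both that the two $\cluster W!$ factors cancel exactly and that the trace factorization really depends only on $\gpart P_c$---and ensuring the treatment of formal manipulations on (what are ultimately) finite sums is justified by the analyticity of $\log\calZ$ near the origin.
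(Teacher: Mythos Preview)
Your proposal is correct and follows essentially the same approach as the paper: expand $\log\calZ$ via the logarithm series, organize the surviving terms at order~$\cluster W$ through colorings that refine to a partition $\gpart P\in\gpartcon(\gr(\cluster W))$, identify the count of colorings with given refinement as $\chi^*(n,\gr(\gpart P))$, and bound each normalized trace factor by~$1$. Your bookkeeping is slightly more direct than the paper's---you work with colorings of $\marked(\cluster W)$ from the outset, whereas the paper first passes through multisets~$\partition P$ of connected clusters and then converts to graph partitions via the symmetry factor~\eqref{eq:cluster-to-graph-count}---but the two routes are equivalent reorderings of the same identity, and your $\hat T_C$ is exactly $(-\beta)^{-|C|}\mdiff_{\cluster V_C}\calZ$ in the paper's notation.
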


This is a repackaging of~\cite[App.~B.1]{WA22}.

\begin{proof}
Recall that $ \calZ = \frac 1 \fulldimension \Tr \exp( - \beta H )$.
There is a formal cluster expansion for~$\calZ$ (which is meaningful in view of~\cref{statement:analyticFunctionUnique}):
\begin{equation}
    \calZ = 1 + \sum_{k \ge 1} 
    \sum_{\cluster W: \text{ weight } k} \frac{\lambda^{\cluster W}}{\cluster W!} \mdiff_{\cluster W} \calZ
\end{equation}
where $\cluster W$ is not always connected.
The cluster derivative~$\mdiff_{\cluster W} \calZ$ factorizes 
if $\cluster W$ is not connected (\cref{statement:log-derivative-connected}).
Let $\partition P_{\max} (\cluster W)$ be the set of maximal connected subclusters of~$\cluster W$.
Then, using $\log(1+x) = \sum_{n = 1}^\infty \frac{(-x)^{n-1}}{n}$ for small~$x$
(which is again meaningful in view of~\cref{statement:analyticFunctionUnique}),
\begin{align}
    \calZ &= 1 + \sum_{k \ge 1}
    \sum_{\substack{\cluster W: \\ \abs{\cluster W} = k}} \prod_{\cluster V \in \Supp \partition P_{\max}(\cluster W)} \frac{\lambda^{\cluster V}}{\cluster V !} \mdiff_{\cluster V} \calZ ,\\
    \log \calZ &=
    \sum_{n=1}^\infty \frac{(-1)^{n-1}}{n}
    \sum_{k_1,\ldots,k_n \ge 1}
    \sum_{\substack{\cluster W_1,\ldots, \cluster W_n: \\ \abs{\cluster W_i} = k_i}}
    \prod_{i=1}^n \left( 
    \prod_{\cluster V_i \in \Supp \partition P_{\max}(\cluster W_i)} \frac{\lambda^{\cluster V_i}}{\cluster V_i !} \mdiff_{\cluster V_i} \calZ
    \right)
    \nonumber
\end{align}
Now, we rearrange this sum as a $\beta$-power series;
$\mdiff_{\cluster V} \calZ$ carries $\beta^{\abs{\cluster V}}$.
The double product can be written as a product over a 
multiset~$\partition P = \{\{ \cluster V_{1,1}, \cluster V_{1,2},\ldots, \cluster V_{n,1}, \cluster V_{n,2},\ldots \}\}$ of connected clusters.
This multiset~$\partition P$ of connected clusters 
obeys a special property that each cluster $\cluster V_{i,j}$ is assigned a label~$i$,
and among those of a given label~$i$ no two clusters become connected 
by taking the multiset union.
That is, the labels give a unique node coloring on $\gr(\partition P)$ with exactly $n$ colors.

Conversely, with a node coloring on $\gr(\partition P)$ with exactly $n$ colors,
we see that the collection of nodes of a given color~$i$
defines~$\cluster W_i$.
But this converse direction is many-to-one:
a cluster in~$\partition P$ with multiplicity $\ge 2$
gives two or more nodes in~$\gr(\partition P)$ that have all different colors,
and permuting colors among these ``duplicate'' nodes of~$\gr(\partition P)$
gives the same~$\cluster W_i$'s.
We see that precisely $\partition P !$ different colorings give the same~$\cluster W_i$'s.
The multiset~$\partition P$ is a cluster partition of the multiset union~$\cluster W = \bigcup_i \bigcup_j \cluster V_{i,j}$,
and $\abs{\cluster W} = k = k_1 + \cdots + k_n$ is the order in~$\beta$ of the double product.
Hence, letting $\ell$ assume all tuples $(i,j)$ that index~$\cluster V_{i,j}$, we see that
\begin{equation}
    \log \calZ = 
\sum_{k = 1}^\infty 
\sum_{\substack{\cluster W: \\ \abs{\cluster W} = k,\\ \cluster W \text{ connected} }}
\sum_{\substack{\partition P = \{\{ \cluster V_\ell \}\}: \\ \bigcup_\ell \cluster V_\ell = \cluster W, \\ \cluster V_\ell \text{ connected} }}
\left(\sum_{n = 1}^{\abs{\partition P}} \frac{(-1)^{n-1}}{n} \frac{\chi^*(n, \gr(\partition P))}{\partition P!}\right)
\prod_{\ell = 1}^{\abs{\partition P}} \frac{\lambda^{\cluster V_\ell}}{ \cluster V_\ell !} \mdiff_{\cluster V_\ell} \calZ .
\end{equation}

Next, we rewrite the sum over~$\partition P$ as a sum over graph partitions~$\gpart P$
of~$\gr(\cluster W)$ into connected induced subgraphs.
By~\cref{eq:cluster-to-graph-count},
for each~$\partition P$ there are exactly
$\frac{\cluster W!}{\partition P ! \prod_{\ell=1}^{\abs{\partition P}} (\cluster V_\ell !) }$
different graph partitions~$\gpart P$ that give~$\partition P$.
Each induced subgraph in~$\gpart P$ is connected iff the corresponding cluster is connected.
Therefore,
\begin{equation}
       \log \calZ = 
\sum_{\substack{\cluster W :\\ \text{connected}\\ \text{nonempty} }}
\frac{1}{\cluster W !}
\sum_{\gpart P \in \gpartcon(\gr(\cluster W))}
\left(\sum_{n = 1}^{\abs{\partition P}} \frac{(-1)^{n-1}}{n} \chi^*(n, \gr(\partition P)) \right)
\prod_{\ell = 1}^{\abs{\partition P}} \lambda^{\cluster V_\ell} \mdiff_{\cluster V_\ell} \calZ .
\end{equation}
where $\gpart P \mapsto \partition P = \{\{ \cluster V_\ell \}\}$ is implicit.
Now we can read off 
\begin{align}
\mdiff_{\cluster W} \betaF = 
\mdiff_{\cluster W} \log \calZ 
&= 
\sum_{\gpart P \in \gpartcon(\gr(\cluster W))}
\left(\sum_{n = 1}^{\abs{\partition P}} \frac{(-1)^{n-1}}{n} \chi^*(n, \gr(\partition P)) \right)
\prod_{\ell = 1}^{\abs{\partition P}} \mdiff_{\cluster V_\ell} \calZ \\
&=
\sum_{\gpart P \in \gpartcon(\gr(\cluster W))}
\left(\sum_{n = 1}^{\abs{\gpart P}} \frac{(-1)^{n-1}}{n} \chi^*(n, \gr(\gpart P)) \right)
\prod_{\ell = 1}^{\abs{\gpart P}} \mdiff_{\cluster V_\ell} \calZ  . \nonumber
\end{align}
The proof is completed by bounding~$\mdiff_{\cluster V} \calZ$ as follows.
If $\cluster V = \{\{ a_1, \ldots, a_k \}\}$, then
\begin{align}
    \mdiff_{\cluster V} \calZ 
    &= 
    \mdiff_{\cluster V} \frac 1 \fulldimension \Tr( e^{-\beta H} )
    =
    \mdiff_{\cluster V}  \sum_{n=0}^\infty \frac{(-\beta)^n}{n!} \frac 1 \fulldimension \Tr(H^n) \nonumber\\
    &=
    \mdiff_{\cluster V} \frac{(-\beta)^k}{k!} \frac 1 \fulldimension \Tr(H^k) 
    =
    \frac{(-\beta)^k}{k!} \sum_{\sigma \in \symm_k} \frac 1 \fulldimension \Tr(E_{a_{\sigma(1)}} \cdots E_{a_{\sigma(k)}}).
\end{align}
Since $\norm{E_a} \le 1$, we see $\abs{\mdiff_{\cluster V} \calZ } \le \abs{\beta}^k$.
\end{proof}

\Cref{statement:cluster-derivative-bound-no-multiplicity} 
is proved by~\cref{lem:ToColoring} and a combinatorial estimate in~\cref{lem:tutte-bound} below.
We recall some elements of graph combinatorics.

Let $F$ be a possibly nonsimple graph with self-loops and multiple edges.
Given an edge~$e$ (not a self-loop) of a graph~$F$
the \emph{contraction} of the edge~$e$, denoted by~$F / e$,
is the graph obtained by removing the edge~$e$ and merging the two end points of~$e$ into one vertex;
and the \emph{deletion} of~$e$, denoted by~$F\setminus e$, 
is the graph obtained by removing the edge~$e$ from~$F$.
The following identities are standard.
For any graph~$F$ (not necessarily simple),
\begin{align}
    \chi^*(k, F \setminus e) &= \chi^*(k,F) + \chi^*(k, F/e) , \label{eq:chi-deletion-contraction}\\
    \tau(F) &= \tau(F \setminus e) + \tau(F / e) \label{eq:tau-deletion-contraction}
\end{align}
where $\tau(F)$ is the number of all spanning trees of~$F$.
If $F$ is disconnected, $\tau(F) = 0$.
For~\cref{eq:chi-deletion-contraction},
a coloring of $F \setminus e$ either colors the endpoints of $e$ the same or different:
the colorings where they are colored differently are exactly the set of colorings of $F$,
and the colorings where they are colored the same correspond to the set of colorings of $F / e$.
For~\cref{eq:tau-deletion-contraction},
a spanning tree of~$\tau(F)$ either contains the edge~$e$ or it does not:
the spanning trees that do not contain~$e$ are exactly the spanning trees of $\tau(F\setminus e)$,
and the spanning trees that do contain~$e$ correspond to spanning trees of $F/e$.

\begin{lemma} \label{lem:tutte-bound}
    Let $G$ be a nonempty connected graph with $n$ vertices.
    Then,
    \begin{equation}
        \sum_{\gpart P \in \gpartcon(G)}
    \abs*{
        \sum_{k = 1}^{\abs{\gpart P}} \frac{(-1)^{k-1}}{k} \chi^*(k, \gr(\gpart P))
    }
    \le
    2^{n-1} \tau(G).
    \end{equation}
\end{lemma}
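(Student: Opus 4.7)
The plan is to reduce the lemma to two classical ingredients. Set $Q(F) := \sum_{k \geq 1} \frac{(-1)^{k-1}}{k}\chi^*(k,F)$ for the inner expression. I will show (A) the Penrose-type tree-graph bound $|Q(F)| \leq \tau(F)$ for every connected (multi)graph $F$, and (B) the counting inequality $\sum_{\gpart P \in \gpartcon(G)}\tau(\gr(\gpart P)) \leq 2^{n-1}\tau(G)$. The lemma then follows by chaining
\[
\sum_{\gpart P \in \gpartcon(G)} \abs{Q(\gr(\gpart P))} \;\leq\; \sum_{\gpart P}\tau(\gr(\gpart P)) \;\leq\; 2^{n-1}\tau(G).
\]

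\textbf{Step A.} For (A) I will induct on $|E(F)|$. Partitioning proper $k$-colorings of $F\setminus e$ by whether the endpoints of $e$ share a color gives $\chi^*(k, F\setminus e) = \chi^*(k, F) + \chi^*(k, F/e)$, which rearranges into the deletion-contraction $Q(F) = Q(F\setminus e) - Q(F/e)$ for any non-loop $e$. Combining with the standard $\tau(F) = \tau(F\setminus e) + \tau(F/e)$ and the triangle inequality closes the induction when $e$ is neither a bridge nor a loop. If $e$ is a bridge, $F\setminus e$ is disconnected, so $Q(F\setminus e)=\tau(F\setminus e)=0$ (the chromatic polynomial of a disconnected graph is divisible by $x^2$, killing the linear coefficient, which equals $Q$), and the bound reduces to $F/e$. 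If $e$ is a self-loop---which may appear after successive contractions even when $\gr(\gpart P)$ is simple---then $F$ has no proper coloring at all, so $Q(F)=0 \leq \tau(F)$ trivially.

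\textbf{Step B.} For (B) I will exhibit a surjection
\[
\Phi : \{(\tilde T, S) : \tilde T \in \calT(G),\, S \subseteq E(\tilde T)\} \;\twoheadrightarrow\; \{(\gpart P, T_\star) : \gpart P \in \gpartcon(G),\, T_\star \in \calT(\gr(\gpart P))\},
\]
where $\calT(\cdot)$ is the set of spanning trees. Define $\Phi(\tilde T, S) = (\gpart P, T_\star)$ by letting $\gpart P$ be the partition of $V(G)$ into the connected components of $\tilde T \setminus S$ and letting $T_\star$ be the image of $S$ under the contraction $\tilde T \to \tilde T / (\tilde T \setminus S)$. Because $\tilde T$ is a tree, no two edges of $S$ can lie between the same two components of $\tilde T \setminus S$, and the contraction yields exactly a tree $T_\star$ on the $|S|+1 = |\gpart P|$ parts. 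For surjectivity, given $(\gpart P, T_\star)$, pick any spanning tree $T_W$ of $G[W]$ for each $W \in \gpart P$ (possible since $W$ is connected in $G$) together with any $G$-edge realizing each super-edge of $T_\star$; their union $\tilde T$ is a spanning tree of $G$, because any cycle in $\tilde T$ would descend to a cycle in the tree $T_\star$ after contracting the $T_W$'s. The domain has cardinality $\tau(G)\cdot 2^{n-1}$, so the image has at most that many elements, proving (B).

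\textbf{Main obstacle.} The delicate point is step (A): although the input $\gr(\gpart P)$ is simple, the deletion-contraction recursion naturally escapes the simple-graph world, producing parallel edges on one contraction and then self-loops on a further contraction. I will state the inductive hypothesis for general multigraphs; this is harmless because $Q$ depends only on the simple skeleton of $F$ (chromatic polynomials ignore edge multiplicities) while $\tau$ only grows when edges are duplicated. Step (B) is a clean bijective counting argument with no further subtlety, and the two inequalities combine to give the stated bound.
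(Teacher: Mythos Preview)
Your proof is correct and follows essentially the same two-step structure as the paper: bound the inner quantity by $\tau(F)$ via deletion--contraction, then bound $\sum_{\gpart P}\tau(\gr(\gpart P)) \le 2^{n-1}\tau(G)$ via the identical spanning-tree-plus-edge-subset surjection. The only cosmetic differences are that the paper works with $\eta(F) = (-1)^{|V(F)|-1}Q(F)$ (which satisfies the \emph{same-sign} recursion $\eta(F)=\eta(F\setminus e)+\eta(F/e)$, allowing direct comparison with $\tau$ rather than the triangle inequality), and that your identification $Q(F)=[x^1]\chi(x,F)$ gives a slicker treatment of the disconnected base case than the paper's explicit generating-function computation of $\eta$ on isolated vertices.
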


This can be proved via perhaps more canonical approach using Tutte polynomials and chromatic polynomials~\cite{WA22},
but we directly use the founding principle of these polynomials---the deletion-contraction recurrence.

\begin{proof}
For any nonempty (not necessarily simple) graph~$F$ on $n$ vertices,
we define a rational number~$\eta(F)$, which will turn out to be a nonnegative integer, by
\begin{equation}
    \eta(F) = \sum_{k=1}^{n} \frac{(-1)^{n - k}}{k} \chi^*(k,F)
\end{equation}
Now, consider an edge $e = (a, b)$ where $a \neq b$.
Since $F / e$ has $n - 1$ vertices, \cref{eq:chi-deletion-contraction} gives
\begin{align}
    \eta(F/e) + \eta(F\setminus e)
    &=
    \sum_{k=1}^{n -1}  \frac{(-1)^{n - 1 - k}}{k} \chi^*(k,F/e)
    +
    \sum_{k=1}^{n}  \frac{(-1)^{n - k}}{k} \chi^*(k,F\setminus e) \nonumber\\
    &=
    -\sum_{k=1}^{n}  \frac{(-1)^{n - k}}{k} \chi^*(k,F/e)
    +
    \sum_{k=1}^{n}  \frac{(-1)^{n - k}}{k} \chi^*(k,F\setminus e)\nonumber \\
    &=
    \eta(F)  \label{eq:U-recursion}
\end{align}
where the second equality is because $\chi^*(n, F/e) = 0$.
Notice that this is the same recursive formula as that of $\tau$ in \cref{eq:tau-deletion-contraction}.
Our goal will be to show $\eta(F) \leq \tau(F)$: 
by applying \cref{eq:U-recursion} and \cref{eq:tau-deletion-contraction} to reduce to cases with fewer edges via deletion and contraction, 
it suffices to show this for $F$ with no edges, and only self-loops.

We first show that, for $S$ the graph of $n$ isolated vertices with no edges, $\eta(S) = \delta_{n,1}$ where $\delta$ is the Kronecker delta.
The number of all proper colorings of~$S$ using $k$ or fewer colors is $\chi(k,S) = k^n = (x \diff_x)^n x^k |_{x=1}$,
where $x$ is an indeterminant.
By inclusion-exclusion, we have $\chi^*(k,T) = \sum_{j=0}^k \binom{k}{j} (-1)^{k-j} \chi(j,T)$.
Hence,
\begin{align}
    \eta(T) &= \sum_{k=1}^{n} \frac{(-1)^{n - k}}{k} \sum_{j=0}^k \binom{k}{j} (-1)^{k-j} (x \diff_x)^{n} x^j \Big|_{x=1} \nonumber\\
    &= (-x \diff_x)^n \Big|_{x=1} \sum_{k=1}^{n} \frac{(-1)^{k}}{k} (x-1)^k \\
    &= (-x \diff_x)^{n-1} \Big|_{x=1} x\sum_{k=1}^{\infty} (-1)^{k-1} (x-1)^{k-1} \nonumber\\
    &= (-x \diff_x)^{n-1} \Big|_{x=1} 1 = \delta_{n,1}. \nonumber
\end{align}
So, $\eta(S) = \tau(S) = \delta_{n,1}$ for graphs $S$ without any edges or loops.
For graphs~$F$ with some loops but without any edges,
$\eta(F) = 0$ since a self-loop prohibits any proper coloring,
but $\tau(F)$ may be positive if $n = 1$.
Hence, $\eta(F) \leq \tau(F)$ for $F$ without any edges.

We conclude that $0 \le \eta(F) \le \tau(F)$ for any nonempty graph~$F$.%
\footnote{
    \cite{WA22} shows that $\eta(F)$ equals the value of the Tutte polynomial~$T_F$ at~$(1,0)$,
    and quotes the facts that $T_F(1,0) \le T_F(1,1)$ and that $T_F(1,1) = \tau(F)$.
}
It follows that
\begin{equation}
    \abs[\Bigg]{
        \sum_{k = 1}^{\abs{\gpart P}} \frac{(-1)^{k-1}}{k} \chi^*(k, \gr(\gpart P))
    } = \eta(\gr(\gpart P)) \le \tau(\gr(\gpart P)).
\end{equation}

The proof is completed by observing the following~\cite[Lem.~21]{WA22}.
Consider a spanning tree~$T$ of~$G$ and a subset~$E_0$ of some edges of~$T$.
Let $E_1$ be the set of all edges of~$T$ not in~$E_0$; 
$E_0 \sqcup E_1$ is the total edge set of~$T$.
We obtain a graph partition~$\gpart P \in \gpartcon(G)$ of~$G$ into connected subgraphs,
namely the connected components of $E_0$-deleted subgraph of~$T$.
We also obtain a spanning tree of $\gr(\partition P)$,
obtained by contracting all edges of~$E_1$ from~$T$.
(Any contraction on a tree is a tree.)
Hence, given $G$, 
we have a map from pairs $(T,E_0)$ of a spanning tree and its subset of edges
to pairs $(\gpart P \in \gpartcon(G), T')$ of a graph partition~$\gpart P$ 
and a spanning tree of~$\gr(\gpart P)$.
This map is surjective: 
by choosing a spanning tree in each party of~$\gpart P$ we have~$E_0$,
and by choosing some edges, 
one among those that would merge to an edge in~$T'$ upon contraction of the spanning trees of 
the parties of~$\gpart P$, we construct a spanning tree~$T$ of~$G$ whose edge set contains~$E_0$.
This surjection gives
\begin{equation}
     \sum_{\gpart P \in \gpartcon(G)} \tau(\gr(\gpart P)) \le 2^{|G|-1} \tau(G).
\end{equation}
where $\abs G - 1$ is the number of all edges in any spanning tree~$T$ of~$G$.
\end{proof}

Combining \cref{lem:ToColoring,lem:tutte-bound}, we have
\begin{equation}
    \abs{\mdiff_{\cluster E} \betaF} \le 
    \sum_{\gpart P \in \gpartcon(G)}
    \abs*{
        \sum_{k = 1}^{\abs{\gpart P}} \frac{(-1)^{k-1}}{k} \chi^*(k, \gr(\gpart P))
    }
    \le
    2^{n-1} \tau(G).
\end{equation}
where $n = \abs{\cluster E} = \abs{G}$.
It remains to show that $\tau(G)$, the number of all spanning trees of~$G$, is at most the product of degrees of nodes.
Fix a root, an arbitrary node of~$G$.
Given a spanning tree of~$G$,
we choose a unique edge attached to each node different from the root
along which the unique shortest path to the root from the node traverses.
Each nonroot node~$a$ have $\deg(a)$, the degree of~$a$, choices at most,
implying $\tau(G) \le \prod_{a \in G} \deg(a)$.
This complete the proof 
of~\cref{statement:cluster-derivative-bound-no-multiplicity}.


\subsection{Computing cluster derivatives}\label{sec:computingClusterDerivatives}

\begin{prop}\label{statement:algo-W-deriv}
    Consider a Hamiltonian $\{(a,E_a,x_a)\}$ where every term $E_a$ is a tensor product of Pauli matrices 
    and is supported on at most $L$ qubits.
    Then, there is a deterministic algorithm running in time $(8^m + L) \poly(m)$
    such that for every cluster~$\cluster W$ of total weight $m+1$ it outputs
    $
    \frac{1}{\beta^{m+1} \cluster W !} \mdiff_{\cluster W} \betaF
    $
    exactly as a rational number.
\end{prop}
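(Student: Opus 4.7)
The plan is to reduce the computation to polynomial manipulation and exploit the structure of the Pauli group. Since $\mdiff_{\cluster W}$ differentiates only with respect to $\lambda_a$ for $a \in \Supp \cluster W$ (at most $m+1$ variables), we may set all other $\lambda_b = 0$ without changing the derivative at the origin. Let $H' = \sum_{a \in \Supp \cluster W} \lambda_a E_a$ denote the restricted Hamiltonian. Because $\mdiff_{\cluster W}$ picks out the coefficient of a monomial of total degree $m+1$ and annihilates anything of higher total degree in the $\lambda_a$'s, we can truncate both the $\exp$ series and the $\log$ series: writing $\frac{1}{\fulldimension}\Tr(e^{-\beta H'}) = 1 + Q$,
\begin{equation}
    Q \equiv \sum_{k=1}^{m+1} \frac{(-\beta)^k}{k!}\frac{\Tr((H')^k)}{\fulldimension}, \qquad \log(1+Q) \equiv \sum_{j=1}^{m+1}\frac{(-1)^{j-1}}{j}Q^j
\end{equation}
modulo monomials of degree $> m+1$. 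The problem therefore reduces to extracting the coefficient of $\lambda^{\cluster W}$ from an explicit polynomial; the factor $\beta^{m+1}$ divides cleanly, since each term of $Q$ carries an explicit factor of $\beta$.

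The main computational task is to represent each $\Tr((H')^k)/\fulldimension$ symbolically as a polynomial in the $\lambda_a$'s. The key algebraic observation is that the subgroup $G'$ of the Pauli group generated by the $\leq m+1$ Pauli strings $\{E_a : a \in \Supp \cluster W\}$ has size at most $4\cdot 2^{m+1} = 2^{m+3}$: each $E_a$ is its own inverse modulo the phase subgroup $\{\pm 1,\pm i\}$, so $G'/\{\pm 1,\pm i\}$ has order at most $2^{m+1}$. We represent $(H')^k = \sum_{g \in G'} c^{(k)}_g(\lambda)\,g$, where each $c^{(k)}_g$ is stored as coefficients of sub-multisets of $\cluster W$ (of which there are at most $2^{m+1}$). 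The recursion $c^{(k)}_g(\lambda) = \sum_a \lambda_a\, c^{(k-1)}_{E_a g}(\lambda)$, derived from $H'\cdot (H')^{k-1}$ using $E_a E_a = I$, gives a dynamic program over $O(|G'|\cdot 2^{m+1}) = O(4^m)$ cells across all $k \leq m+1$, each with $O(m)$ transitions. To read off the trace, use that $\Tr(g)/\fulldimension$ is nonzero only when $g$ is a phase times the identity, in which case it equals that phase.

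For complexity, the preprocessing to represent the $E_a$'s as bit-strings on the $\leq L(m+1)$ relevant qubits, together with the adjacency-list setup of \cref{rm:ham-repr}, contributes $O(Lm^3)$ time. Enumerating $G'$ by breadth-first search while tabulating products $E_a \cdot g$ takes $O(|G'|\cdot m \cdot Lm)$ time, after which all DP transitions become $O(1)$ index lookups. All rational coefficients accumulated during the DP have $\poly(m)$-bit numerator and denominator (every operation combines simple fractions $\frac{1}{j!}$, $\frac{1}{j}$ with integers of size at most $(m+1)!$), so each arithmetic operation costs $\poly(m,\log m)$ in the word RAM model. Combining the DP with the polynomial multiplications for $\sum_j \frac{(-1)^{j-1}}{j}Q^j$ (themselves $O(m\cdot 4^m)$ monomial multiplications) and the final coefficient extraction yields total running time $(8^m + L)\poly(m)$. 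The main subtlety is bookkeeping the phases $\{\pm 1,\pm i\}$ that arise in Pauli multiplication (tracked as a $\ZZ_4$ value attached to each group element) and maintaining exact rational arithmetic throughout; everything else is bookkeeping over sparse polynomial representations.
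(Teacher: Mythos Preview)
Your approach is correct and genuinely different from the paper's. The paper evaluates $\mdiff_{\cluster W}\betaF$ via a multivariate finite-difference formula (\cref{statement:limit-finite-difference}), converting the derivative into a sum of $\prod_a(\mu(a)+1)\le 2^{m+1}$ univariate limits, each resolved by L'Hospital's rule through a recursion for $g^{t+1}\diff_\alpha^{t+1}\log g$; the underlying traces $\Tr((\sum_j z_j E_j)^k)$ are computed numerically by first finding (via Gaussian elimination over~$\FF_2$, in $O(Lm^3)$ time) a faithful representation of the Pauli generators on $\le m+1$ qubits, then powering a $2^{m+1}$-dimensional matrix. You instead compute the polynomials $\Tr((H')^k)/\fulldimension$ symbolically, with a DP indexed by the $\le 2^{m+1}$ elements (mod phase) of the Pauli group generated by $\{E_a\}_{a\in\Supp\cluster W}$ crossed with the $\le 2^{m+1}$ sub-multisets of~$\cluster W$, and then substitute into the truncated $\log$ series. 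Both routes rest on the same structural fact---$m{+}1$ Paulis generate a group of size $\le 2^{m+1}$ modulo phase---but you exploit it combinatorially while the paper exploits it via a matrix representation; your route skips the finite-difference and L'Hospital machinery entirely and, accounted carefully, actually yields $O((4^m+L)\poly(m))$. One caveat: as written, your group-tabulation step represents elements as length-$O(Lm)$ bit-strings and so costs $O(2^m L\,m^2)$, which exceeds $(8^m+L)\poly(m)$ when $L\gg 4^m$; to match the stated bound you need the paper's faithful-representation trick first (the $O(Lm^3)$ Gaussian-elimination step), after which each Pauli product costs $O(m)$ rather than $O(Lm)$.
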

\begin{proof}
    Let $\cluster W = \{ (a, \mu_a) \}$, and suppose there are $n$ distinct $a$ in $\cluster W$.
    By assumption, $n \le m+1$.
    Without loss of generality, we relabel the indices so that $a$ iterates over $[n]$: $\cluster W = \{ (1,\mu_1), \ldots, (n,\mu_n)\}$.
    The expression $\mdiff_{\cluster W} \betaF$ depends only on $\{x_1,\ldots, x_n\}$, since it evaluates the derivative at the origin.
    Hence, to evaluate it,
    we may assume that our Hamiltonian is simply $H = \sum_{a=1}^n x_a E_a$.
    This restricted Hamiltonian has operator norm at most $n$.

    We can further simplify because $\mdiff_{\cluster W} \betaF$ depends only on the $\beta^{m+1}$ term of the Taylor expansion of $\betaF$.
    So, we can freely truncate and avoid worrying about higher-order terms.
    In particular, we will truncate the Taylor expansion of the exponential in $\betaF$ to get
    \begin{align}
    \mdiff_{\cluster W} \betaF = \beta^{m+1}\mdiff_{\cluster W}\log \Tr \exp\Big(-\sum_{j=1}^n x_jE_j\Big) = \beta^{m+1} \mdiff_{\cluster W} \log r.
    \end{align}
    for $r$ the polynomial function\footnote{
        In this proof of \cref{statement:algo-W-deriv},
        \cref{eq:rat} is the only place we use the fact that $E_a$ are Pauli operators.
        If we used an arithmetic model of computation, the operators $E_a$ can be more general.
    }
    \begin{align}
    r(x_1,\ldots,x_n) &= \sum_{k=0}^{m+1} \frac{ (-1)^k }{k!} \frac{1}{\fulldimension} \Tr\Bigl[ \Bigl(\sum_{j=1}^n x_j E_{j}\Bigr)^k \Bigr] \in \CC[x_1,\ldots,x_n], \label{eq:rat}\\
    r(0,\ldots,0) &= 1.
    \end{align}
    This expression normalizes by $\frac{1}{\fulldimension}$, which vanishes after taking $\mdiff_{\cluster W} \log$.
    To compute the derivative of $\log r$, we first make a general observation, whose proof will be given later.
    \begin{lemma}\label{statement:limit-finite-difference}
        Let $f : \RR^n \to \RR$ 
        be a smooth (infinitely differentiable in any variable in any order) function.
        Then, $f(x)$ for $x = (x_1,\ldots,x_n)$ satisfies
        \begin{align}
                \frac{\diff^{\mu} f}{\diff {x_1}^{\mu_1} \cdots \diff {x_n}^{\mu_n}}
            =
                \lim_{\alpha \to 0}
            \frac 1 {\alpha^{\mu}} \sum_{z_1=0}^{\mu_1}\cdots\sum_{z_n=0}^{\mu_n} (-1)^{|z|+\mu}\binom{\mu_1}{z_1}\cdots\binom{\mu_n}{z_n} f( x + \alpha z)
        \end{align}
        where $\mu = \mu_1 + \cdots + \mu_n$, $z = (z_1,\ldots,z_n) \in \ZZ_{\ge 0}^n$ and $\abs{z} = z_1 + \cdots + z_n$.        
    \end{lemma}
    \Cref{statement:limit-finite-difference} implies that there are some integers $c_z$ depending on $z$ such that we can write
    \begin{gather} \label{eq:alphalimit}
    \mdiff_{\cluster W} \log r = \lim_{\alpha \to 0} \frac{1}{\alpha^{m+1}} \sum_{\substack{z \in \ZZ_{\geq 0}^n \\ z_j \leq \mu_j\, \forall j \in [n]}} c_z \log g(\alpha; z) \\
    \text{where } g(\alpha; z) = r( z_1 \alpha , \ldots , z_n \alpha) \in \CC[\alpha].
    \end{gather}
    We treat $g(\alpha; z)$ as a univariate function parametrized by an integer vector $z \in \ZZ^n$.
    There are at most $2^{m+1}$ summands in \cref{eq:alphalimit}.
    Further, the limit in \cref{eq:alphalimit} can be computed by L'Hospital's rule:
    \begin{align}
        \mdiff_{\cluster W} \log r &= \frac{1}{(m+1)!}\sum_{\substack{z \in \ZZ_{\geq 0}^n \\ z_j \leq \mu_j\, \forall j \in [n]}} c_z \Big[\diff_\alpha^{m+1} \log g(\alpha; z)\Big]_{\alpha = 0} \\
        &= \frac{1}{(m+1)!}\sum_{\substack{z \in \ZZ_{\geq 0}^n \\ z_j \leq \mu_j\, \forall j \in [n]}} c_z h_m(\alpha;z)|_{\alpha = 0}
    \end{align}
    where we define $h_m(\alpha; z) := g(\alpha; z)^{m+1}\diff_\alpha^{m+1} \log g(\alpha; z)$.
    The above equality holds because \mbox{$g(0; z) = 1$}.
    The function $h_m(\alpha; z)$ is in fact a polynomial and satisfies a straightforward recursion:
    \begin{align}
        h_0 &= g \,\diff_\alpha \log g \nonumber\\
        &= \diff_a g \\
        h_t &= g^{t+1}\diff_\alpha^{t+1} \log g \nonumber\\
        &= \diff_\alpha(g^t\diff_\alpha^t\log g)g - t(g^t\diff_\alpha^t\log g)\diff_\alpha g \nonumber\\
        &= \diff_\alpha(h_{t-1}) g - t h_{t-1} \diff_\alpha g \label{eq:ht}
    \end{align}
    In summary, we have reduced the problem to finding the constant term of $h_m$, a recursively-defined polynomial, for various choices of $z$:
    \begin{align} \label{eq:reduced-to-polynomial}
    \frac{1}{\beta^{m+1} \cluster W !} \mdiff_{\cluster W} \betaF 
    = \frac{1}{\cluster W ! (m+1)!} \sum_z c_z \frac{h_m(0;z)}{g(0;z)^{m+1}}
    = \frac{1}{\cluster W ! (m+1)!} \sum_z c_z h_m(0;z).
    \end{align}
    For a fixed~$z$, the function $g(\alpha; z)$ is a polynomial in~$\alpha$.
    Using the notation that $[\alpha^k]g(\alpha; z)$ is the coefficient of $\alpha^k$ in $g(\alpha; z)$, we have that
    \begin{align}
    g(\alpha;z) &= 1 + [\alpha]g(\alpha;z)\cdot\alpha + \cdots + [\alpha^{m+1}]g(\alpha;z)\cdot\alpha^{m+1} \\
    [\alpha^k]g(\alpha;z) &= \frac{(-1)^k}{k!} \frac 1 \fulldimension \Tr \Biggl[ \Bigl(\sum_{j=1}^n z_j E_{j}\Bigr)^k \Biggr] \label{eq:trPauli}
    \end{align}

    \paragraph{Time complexity.}
    Now that we've established the form of the expression that we will compute, we will now discuss the time complexity necessary to compute it.
    First, we consider computing the coefficients $[\alpha^k]g(\alpha;z)$ as seen in \cref{eq:trPauli}.

    A binary representation of the $n$ Pauli operators $E_a$ and Gauss elimination
    reveals a minimal set (multiplicative basis) of Pauli operators 
    which can generate all the $n$ operators by multiplications together with phase factors $\pm 1,\pm i$~\cite{ag04}.
    Since the set of all those $n$ Pauli operators are supported on at most $nL$ qubits,
    It takes time $O(n^3 L)$ to find a multiplicative basis.
    Once we have a multiplicative basis,
    we can find another set of Pauli operators $\tilde E_1,\ldots, \tilde E_n$ on $n$ qubits,
    preserving all the pairwise commutation relations and the multiplicative independence.
    The procedure is simple:
    For the first basis element, choose $\tilde E_1 = Z_1$.
    If the second basis element commutes with the first, choose $\tilde E_2 = Z_2$,
    or otherwise, choose $\tilde E_2 = X_1 Z_2$.
    Inductively, for $t$-th basis element we choose $\tilde E_t$ to be $Z_t$ 
    multiplied by an appropriate Pauli operator to preserve all the commutation relations 
    with $\tilde E_{t'}$ where $t' < t$.
    Thus, it takes time $O(n^3 L)$ to find a faithful representation $\tilde E_a$ 
    of $n$ Pauli operators $E_a$.
    
    Equipped with a faithful representation $\tilde E_a$, \cref{eq:trPauli} is evaluated
    by powering a matrix $P = \sum_j z_j \tilde E_a$ of dimension $2^n$.
    The normalization constant also changes from $\frac{1}{\fulldimension}$ to $\frac{1}{2^n}$.
    The matrix $P$ has $O(n)$ entries in each column and in each row, and they are in $\ZZ[\sqrt{-1}]$ since the $E_a$'s are Pauli operators.
    Hence, multiplying a $2^n$-dimensional matrix by $P$ takes $O(4^n n)$ integer arithmetic operations.
    Since we raise $P$ to $k$-th power,
    we can compute $\Tr(P^k)$ in $O(4^nnk)$ integer arithmetic operations, and we can compute the trace for all $k \in [m+1]$ in $O(4^nnm)$ operations.
    This is the numerator of $g(;z)_k$, and we maintain $g(;z)_k$ as a rational number by maintaining its numerator and denominator.
    By multiplying both by $(k+1)(k+2)\cdots(m+1)$, we can standardize these rational number representations of $g(;z)_k$ to have be some integer in the numerator and $(m+1)!2^n$ in the denominator.

    Once we have representations for $g(;z)_k$, we can compute $h_m(0; z)$, the constant term of $h_m(\alpha; z)$, via the recursive formula \cref{eq:ht}.
    This takes $O(m^3)$ integer arithmetic operations, since $[\alpha^k]h_t$, the coefficient of $\alpha^k$ in $h_t$, satisfies
    \begin{align}
        [\alpha^k]h_0 &= (k+1)[\alpha^{k+1}]g \\
        [\alpha^k]h_t &= \sum_{j=0}^k \Bigl[[\alpha^j]h_{t-1}' [\alpha^{k-j}]g - t\cdot [\alpha^j]h_{t-1} [\alpha^{k-j}]g'\Bigr] \nonumber\\
        &= \sum_{j=0}^k\Bigl[(j+1)[\alpha^{j+1}]h_{t-1} [\alpha^{k-j}]g - t (k-j+1)[\alpha^j]h_{t-1} [\alpha^{k-j+1}]g\Bigr].
        \label{eq:h-coeff-recursion}
    \end{align}
    Because our goal is to compute $[\alpha^0]h_m$, we only need to compute the $[\alpha^k]h_t$'s for $t$ from $0$ to $m$ and $k$ from $0$ to $m-t$.
    Since we can compute $[\alpha^k]h_t$ with $O(k)$ integer arithmetic operations, we can compute $[\alpha^0]h_m$ with $O(m^3)$ integer operations.
    Since there are at most $2^{m+1}$ summands in \cref{eq:alphalimit}, computing $\frac{1}{\beta^{m+1}\cluster{W}!}\mdiff_{\cluster W}\betaF$ requires $O(2^m(4^n nm + m^3)) = O(8^m m^3)$ integer operations.
    Since the integer operations in question (addition, subtraction, and multiplication) can be performed in $O(b\log b)$ time~\cite{hh21}, where $b$ is the length of the integer in bits, it suffices to show that, throughout this procedure, we always work with integers that are $\poly(m)$ bits long.

    When computing coefficients of $g$, note that the magnitude of the integers in $P^k$ is bounded by $\norm*{P^k} = O(m^k) = O(m^m)$, so the trace (and consequently, the numerator of $[\alpha^k]g$ can be represented with $O(m\log m)$ bits.
    The denominator is $m!2^n$, which can also be represented in $O(m\log m)$ bits.
    
    As for $h_t$, we proceed by giving upper bounds on the coefficients.
    We claim that for any $k \le m+1$ and any $t \le m$
    \begin{align}
        \abs*{[\alpha^k]h_t} \le 2^t (m+1)^{3t + 1} e^{(t+1)(m+1)}.
    \end{align}
    The case of $t = 0$ is shown below.
    \begin{align}
    \abs*{[\alpha^k]g} 
    &\le \frac{1}{k!} \Bigl(\sum_{j=1}^n z_j \Bigr)^k
    \le \frac 1 {k!} \Bigl(\sum_{j=1}^n \mu_j \Bigr)^k = 
    \frac 1 {k!} (m+1)^k \le \sum_{s=0}^\infty \frac{(m+1)^s}{s!} = e^{m+1}\label{eq:gbound} \\
    \abs*{[\alpha^k]h_0} &= \abs*{[\alpha^k]g'} \le \frac{(m+1)^{k+1}}{k!} \le (m+1)e^{m+1}. \label{eq:gprimebound}
    \end{align}
    Since taking derivative brings a factor at most $(m+1)$ 
    to the polynomial coefficient of $k$-th order term where $k \le m+1$,
    the induction hypothesis implies that
    \begin{align}
        \abs*{[\alpha^k]h_t} 
        &= \abs[\Bigg]{\sum_{j=0}^k\Bigl[(j+1)[\alpha^{j+1}]h_{t-1} [\alpha^{k-j}]g - t (k-j+1)[\alpha^j]h_{t-1} [\alpha^{k-j+1}]g\Bigr]} \\
        &\le (m+1) \left[ (m+1) 2^{t-1} (m+1)^{3t-2} e^{t(m+1)} e^{m+1} \right.\nonumber \\
        &\qquad + \left. (m+1)^2 2^{t-1} (m+1)^{3t-2} e^{t(m+1)} e^{m+1}\right]\nonumber\\
        &= 2^{t-1} (m+1)^{3t} e^{(t+1)(m+1)}(m+2)\nonumber\\
        &\le 2^{t} (m+1)^{3t+1} e^{(t+1)(m+1)} .\nonumber
    \end{align}
    This implies that every coefficient of $h_t(\alpha;z)$ up to the $(m+1)$-th order
    is at most $\exp(O(m^2))$ in magnitude.%
    \footnote{
        If this bound were $\exp(O(m))$,
        we would have implied \Cref{statement:cluster-deriv-bound}.
    }
    Because all coefficients of $g$ are represented with a denominator of $(m+1)!2^n$, when computing $[\alpha^k]h_t$, the denominator is always $((m+1)!)^{t+1}2^{n(t+1)} = \exp(O(m^2\log m))$.
    So, along with the magnitude bound, the numerator of these coefficients is always $\exp(O(m^2\log m))$.
    Thus, we are always working with integers of $O(m^2\log m)$ digits, as desired.

    Altogether, this makes the time complexity of computing a coefficient
    \begin{equation}
        O(n^3L + 8^mm^3(m^2\log m(\log(m^2\log m))))
        = O(Lm^3 + 8^mm^5\log^2m).
    \end{equation}
\end{proof}

We summarize the algorithm in \cref{alg:cluster-derivative}.
\begin{algorithm}
    \caption{Evaluating a cluster derivative}\label{alg:cluster-derivative}
    \KwData{Cluster $\cluster W = \{(a,\mu(a)) : a = 1,2,\ldots,n \}$ of total weight $\sum_a \mu(a) = m+1$ 
    and associated Pauli operator $E_a$ with $a \in \cluster W$}
    \KwResult{A rational number $
    \frac{1}{\beta^{m+1} \cluster W !} \mdiff_{\cluster W} \betaF
    $}
    Let $\operatorname{out} \leftarrow 0$\;
    Let $\tilde E_a$ be a faithful representation on $m+1$ qubits for $E_a$\;
    \For{ $z_1 \in \{0,1,\ldots,\mu(1)\}, \ldots, z_n \in \{0,1,\ldots,\mu(n)\} $ }{
        Compute coefficients $[\alpha^k]g(\alpha;z)$ of \cref{eq:trPauli} using $\tilde E_a$\;
        \For{$t \in \{0,1,\ldots,m\}$}{
            Compute $[\alpha^k]h_t(\alpha; z)$ for $k$ from 0 to $m-t$ by \cref{eq:h-coeff-recursion}\;
        }
        $\operatorname{out} \leftarrow \operatorname{out} + (-1)^{z_1+\cdots+z_n + m+1} \binom{\mu(1)}{z_1} \cdots \binom{\mu(n)}{z_n} [\alpha^0]h_m(\alpha; z)$\;
    }
    Return $\frac{1}{\cluster W ! (m+1)!} \operatorname{out}$\;
\end{algorithm}

\begin{proof}[Proof of \cref{statement:limit-finite-difference}]
    Without loss of generality, we assume that our derivative is taken at the origin in the domain.
    Define functions $e_k$ for $k=0,1,2,\ldots,n$ recursively as
    \begin{align}
    e_0 &= f,\\
    e_k(x_1,\ldots,x_n) 
    &= e_{k-1}(x_1,\ldots,x_{k-1},x_k,x_{k+1},\ldots,x_n) - e_{k-1}(x_1,\ldots,x_{k-1},0,x_{k+1},\ldots,x_n).
    \end{align}
    The function $e_n$ is a sum of $2^n$ terms.
    The mean value theorem implies that
    for any $\alpha \neq 0$
    there exist $\theta_n,\theta_{n-1},\ldots, \theta_1 \in (0,1)$ such that
    \begin{align}
    e_n(\alpha,\alpha,\ldots,\alpha)
    &= \alpha (\diff_n e_{n-1})(\underbrace{\alpha,\ldots,\alpha}_{n-1}, \theta_n \alpha)\nonumber\\
    &= \alpha (\diff_n \alpha \diff_{n-1} e_{n-2}) (\underbrace{\alpha,\ldots,\alpha}_{n-2},\theta_{n-1} \alpha, \theta_n \alpha)\nonumber\\
    &= \alpha^n (\diff_n \diff_{n-1} \cdots \diff_1 e_0)(\theta_1 \alpha, \theta_2 \alpha, \ldots, \theta_n \alpha) .
    \end{align}
    This means that
    \begin{align}
    \frac{\diff^n f}{\diff x_1 \cdots \diff x_n} \biggr|_{\vct x = 0} = \lim_{\alpha \to 0} \frac{e_n(\alpha,\ldots,\alpha)}{\alpha^n}
    = 
    \lim_{\alpha \to 0}
    \frac 1 {\alpha^{n}} \sum_{y \in \{0,1\}^n} (-1)^{|y|+n} f( x + \alpha y),     
    \end{align}
    where $\abs{y}$ is the sum of components of $y$.
    This completes the proof of the lemma in the case where the derivative is first order in each variable.
    
    Higher order cases are proved by considering the composition $h$ of $f$ and a linear function
    \begin{align}
    h :\quad &(x_{i,j} ~|~j = 1,\ldots, \mu_i,\quad i = 1,2,\ldots,n )\\
    &\mapsto
    \left(x_i = \sum_j x_{i,j} ~\biggr|~ i = 1,2,\ldots, n \right)\nonumber\\
    &\mapsto
    f(x_1,\ldots,x_n)\nonumber
    \end{align}
    We see that $\diff_1^{\mu_1} \cdots \diff_n^{\mu_n} f = ( \prod_{i=1}^n \prod_{j=1}^{\mu_i} \diff_{x_{i,j}} ) h$.
    Let $\mu = \mu_1 + \mu_2 + \cdots + \mu_n$.
    For any $y \in \{0,1\}^\mu$ 
    we define $z(y) \in \ZZ_{\ge 0}^n$ to be a vector whose component $z(y)_k$ is
    \begin{align}
            z(y)_k = y_{\mu_1 + \cdots + \mu_{k-1} + 1} + y_{\mu_1 + \cdots + \mu_{k-1} + 2} + \cdots + y_{\mu_1 + \cdots + \mu_{k}}. \label{eq:zy}
    \end{align}
    In other words, we put the components of $y$ into $n$ bins of sizes $\mu_k$ and
    sum the numbers in each bin to make $z(y)$.
    Then, an arbitrary mixed derivative is expressed as
    \begin{align}
        \frac{\diff^{\mu} f}{\diff {x_1}^{\mu_1} \cdots \diff {x_n}^{\mu_n}} \biggr|_{x}
        =
        \lim_{\alpha \to 0}
        \frac 1 {\alpha^{\mu}} \sum_{y \in \{0,1\}^\mu} (-1)^{|y|+\mu} f( x + \alpha z(y)) .
    \end{align}
    Expressing the summation over $y$ as a summation over $z_k$, we complete the proof.
\end{proof}
    
\subsection{Proof of \texorpdfstring{\cref{thm:expectation-value}}{Theorem \ref*{thm:expectation-value}}}

To be clear where we are evaluating derivatives,
in this proof we let $\betaF = \betaF(\vct \lambda)$ be the function of variables $\lambda_a$.
Let $\diff_a$ denote the derivative $\diff / \diff \lambda_a$ 
at $\vct \lambda = \vct \xi$.
\Cref{statement:expectation-as-betaF-derivative} says that
the expectation value $\Tr(E_a e^{-\beta H}) / \Tr e^{-\beta H}$ of $E_a$
is given by the first derivative of the logarithmic partition function:
\begin{align}
    \frac{\Tr(E_a e^{-\beta H})}{\Tr e^{-\beta H}} \biggr|_{\vct \lambda = \vct \xi}
    &= -\frac{1}{\beta} \diff_a \log \Tr \exp( - \beta H) |_{\vct \lambda = \vct \xi}\label{eq:pexpec} \\
    &=
    -\frac{1}{\beta} \diff_{a}|_{\vct \lambda = \vct \xi}
    \sum_{m=0}^\infty \sum_{\cluster V : \abs{\cluster V} = m} \frac{\lambda^{\cluster V}}{\cluster V !} 
    \mdiff_{\cluster V} \betaF
    & \text{by \cref{eq:multivariateTaylor}}
\intertext{
Recall that $\mdiff_{\cluster V} \betaF$ is a constant in $\lambda$, since $\mdiff_{\cluster V}$ is a derivative evaluated at $\vct \lambda = 0$ (\cref{eq:mdiff-def}).
This multivariate Taylor series is in fact a (disguised) power series in $\beta$.
Since all the functions here
are complex differentiable on an open set that contains $\RR$
in the $\beta$-complex plane,
\Cref{statement:analyticFunctionUnique} implies that the equality holds
whenever the series is absolutely convergent and $\beta \in \RR$,
in which case the infinite sum over $m$ can be interchanged with $\diff_{a}$.
The term with $m=0$ is eliminated by $\diff_a$,
so we shift the dummy variable~$m$ by one.
Hence,
}
    &= -\frac{1}{\beta} 
    \sum_{m=0}^\infty \sum_{\cluster V : \abs{\cluster V} = m+1} \frac{\diff_{a} \lambda^{\cluster V}|_{\vct \lambda = \vct \xi}}{\cluster V !} 
    \mdiff_{\cluster V} \betaF \\
    &= -\frac{1}{\beta} 
    \sum_{m=0}^\infty \sum_{\substack{\cluster V : \abs{\cluster V} = m+1,\\a \in \cluster V}} \frac{\diff_{a} \lambda^{\cluster V}|_{\vct \lambda = \vct \xi}}{\cluster V !} 
    \mdiff_{\cluster V} \betaF.
    & \text{since $\diff_{a} \lambda^{\cluster V} = 0$ if $a \not\in \cluster V$}
\end{align}
For a cluster $\cluster V$ of total weight $m+1$, the expression $\mdiff_{\cluster V}\betaF$ has a factor of $\beta^{m+1}$.
The overall $\frac 1 \beta$ factor reduces the exponent of $\beta$ by one, so, if we group summands by degree of $\beta$, we have
\begin{gather}
    \frac{\Tr(E_a e^{-\beta H})}{\Tr e^{-\beta H}} \biggr|_{\vct \lambda = \vct \xi} 
    = p_0 + \beta p_1 + \beta^2 p_2 + \cdots, \\
    p_m = (-1)^{m+1} \sum_{\substack{\cluster V : \abs{\cluster V} = m+1,\\a \in \cluster V}} (\diff_{a} \lambda^{\cluster V} |_{\vct \lambda = \vct \xi})
    \frac{\mdiff_{\cluster V} \betaF}{\beta^{m+1} \cluster V !}.
\end{gather}
Note that $p_0$ is proportional to \cref{eq:pexpec} with $\vct \lambda = 0$.
Since $\Tr E_a = 0$, we have $p_0 = 0$.
This proves \cref{thm:expectation-value1}
that $p_m$ is a homogeneous polynomial in $\lambda_a$ of total degree $m$.
\Cref{statement:log-derivative-connected} says that $\cluster V$ has to be connected and includes $a$,
implying that $\cluster V$ of total weight $m+1$ includes nodes within $\graph$-distance $m$ from $a$.
This implies \cref{thm:expectation-value2}.
\Cref{statement:count-clusters} bounds the number of clusters to be summed over.
This is \cref{thm:expectation-value3}.
\Cref{statement:cluster-deriv-bound} bounds the magnitude of $\mdiff_{\cluster V} \betaF / \cluster V !$.
The derivative $\diff_a$ may put an additional factor at most $m+1$.
This proves \cref{thm:expectation-value4}.

We have considered algorithms to enumerate clusters, proving \cref{thm:expectation-valueA}.
\Cref{statement:algo-W-deriv} shows \cref{thm:expectation-valueB}.
This completes the proof of \cref{thm:expectation-value}.

\section{Learning algorithm}
\label{sec:algorithm}

In this section, we describe our algorithm for learning the coefficients of a Hamiltonian given copies of its Gibbs state. This section relies on the results of the previous section only through \Cref{thm:expectation-value}. 

Unlike \cref{sec:cluster}, we only consider Hamiltonians $\{(a,E_a, \lambda_a) : a = 1,2,\ldots, \numTerms \}$ (\cref{defn:hamiltonian}) where the $E_a$'s are distinct non-identity tensor products of Pauli matrices, so that they are orthonormal 
with respect to the normalized Hilbert--Schmidt inner product.
That is,
\begin{align}
\forall a,b \in [\numTerms]: \quad \Tr(E_a E_b) = \fulldimension \delta_{ab}, \label{eq:orthonormality}
\end{align}
where $\delta_{ab}$ is the Kronecker delta function.

Our overall strategy for the learning algorithm can be broken down into the following two steps.
Let $\rho(\lambda)$ be the Gibbs state with coefficients $\lambda$.
\begin{enumerate}
    \item Find estimates $\est_a$ for all of the expectation values $\angles{E_a}(\lambda) = \Tr(E_a \rho(\lambda))$ that satisfy $\abs*{\est_a - \angles{E_a}(\lambda)} \leq \beta\eps$ for all $a \in [\numTerms]$.
    \item Then (approximately) invert the function $\vct{x} \mapsto \angles{E_a}(\vct{x})$ on these estimates to find an estimate of the coefficients $\hat{\vct{x}}$.
\end{enumerate}

Step~1 of this plan is the easier step and not too hard to establish.

\begin{lemma}\label{lem:stepone}
    Consider a Hamiltonian $\{(a,E_a,\lambda_a): a \in [\numTerms]\}$ on $\numQubits$ qubits.
    We can find estimates $\est_a$ such that $\abs*{\est_a - \angles{E_a}(\lambda)} \leq \beta\eps$ for all $a \in [\numTerms]$, with probability at least $1-\delta$, using only $O(\frac{\degree}{\beta^2\eps^2}\log(\frac{\numTerms}{\delta}))$ copies of the Gibbs state and with time complexity $O(\frac{\numQubits\degree}{\beta^2\eps^2}\log(\frac{\numTerms}{\delta}))$.
\end{lemma}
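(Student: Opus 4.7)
The plan is to estimate each $\angles{E_a}(\lambda)$ by an empirical mean of single-qubit Pauli measurements, using the low-intersection structure to measure many $E_a$ simultaneously on one copy of $\rho$. Because each $E_a$ is a non-identity tensor product of Pauli matrices, measuring $E_a$ reduces to performing the appropriate single-qubit Pauli measurement on every qubit of $\Supp(E_a)$ and multiplying the $\pm 1$ outcomes. If $E_a$ and $E_b$ have disjoint supports, i.e.\ $a$ and $b$ are non-adjacent in $\graph$, then the measurements needed for $E_a$ and $E_b$ act on disjoint qubits and can be performed in parallel as a single product measurement, yielding one unbiased $\pm 1$ outcome for each.

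First I would greedily color $\graph$ with at most $\degree + 1$ colors; this is standard since $\graph$ has maximum degree $\degree$, and the adjacency-list representation from \cref{rm:ham-repr} lets it be computed in time linear in the number of edges. Each color class $C$ is an independent set in $\graph$, so its terms have pairwise disjoint supports. For each color class $C$ I fix the product measurement in which every qubit $i$ is measured in whichever single-qubit Pauli basis is dictated by the unique $E_a \in C$ acting on $i$ (and in an arbitrary basis if no such $a$ exists). One application of this measurement to a copy of $\rho$ simultaneously produces, for every $a \in C$, an unbiased $\{-1,+1\}$-valued sample of $\angles{E_a}(\lambda)$.

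Then for each of the $\leq \degree + 1$ color classes I would draw $S = O\bigl(\frac{1}{\beta^2 \eps^2}\log\frac{\numTerms}{\delta}\bigr)$ copies of $\rho$ and compute the empirical mean $\est_a$ for every $a$ in that class. Hoeffding's inequality applied to the $\pm 1$ outcomes gives $\Pr\bigl[\abs{\est_a - \angles{E_a}(\lambda)} > \beta \eps\bigr] \leq \delta/\numTerms$ per term, and a union bound over all $\numTerms$ terms yields failure probability at most $\delta$. The total sample complexity is $(\degree + 1) S = O\bigl(\frac{\degree}{\beta^2\eps^2}\log\frac{\numTerms}{\delta}\bigr)$, as claimed.

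For the time complexity, processing one Gibbs state sample requires $O(\numQubits)$ work to measure all qubits and, for every qubit in $\bigcup_{a \in C}\Supp(E_a)$, constant work to update the running empirical sum for its term (possible because distinct $a \in C$ have disjoint supports, so each qubit contributes to at most one such sum). This gives $O(\numQubits)$ time per sample and total runtime $O\bigl(\frac{\numQubits \degree}{\beta^2\eps^2}\log\frac{\numTerms}{\delta}\bigr)$, dominating the cheaper coloring preprocessing. I expect no real obstacle here: the lemma is essentially a pairing of Hoeffding concentration with a standard greedy-coloring observation. The only point requiring any care is verifying that disjoint-support Pauli measurements are implementable as a single product measurement on one copy of $\rho$, which is immediate because disjoint qubit operations tensor.
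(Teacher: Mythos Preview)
Your proposal is correct and matches the paper's own proof essentially line for line: greedy $(\degree+1)$-coloring of $\graph$, simultaneous single-qubit Pauli measurements within each color class, a Chernoff/Hoeffding bound per term with target accuracy $\beta\eps$ and failure probability $\delta/\numTerms$, and a union bound. The only cosmetic difference is that the paper phrases the concentration step as ``Chernoff'' rather than ``Hoeffding.''
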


\begin{proof}
Recall the problem of estimating $\Tr(E\rho)$ for $\norm{E}\leq 1$ and a quantum state $\rho$.
If we want to estimate this to accuracy $\eps$ with success probability at least $1-\delta$, 
it is a standard result that this can be done with $O(\log(1/\delta)/\eps^2)$ copies of $\rho$.
Indeed, we measure $\rho$ in the eigenbasis of $E$ and output the corresponding eigenvalue of $E$ on getting that outcome. 
This is a random variable with expected value $\Tr(E\rho)$.
Since $\norm{E}\leq 1$, this is a random variable in $[-1,1]$. 
Hence by the Chernoff bound we can estimate it to additive error $\eps$ with probability at least $1-\delta$ 
using $O(\log(1/\delta)/\eps^2)$ copies of $\rho$.

Now we want to measure all the observables $E_a$.
But not all of these have overlapping support, 
and we can measure a large number of them simultaneously.
Imagine we color the vertices of $\graph$ using $\degree +1$ colors 
such that no neighboring pair of nodes have the same color;
a greedy coloring algorithm can be used.
By definition of the dual interaction graph,
all the $E_a$'s of a particular color act on separate qubits.
So we can estimate all of the $E_a$'s of a particular color 
using only $O(\log(1/\delta') / \eps'^2)$ Gibbs state
where $\delta'$ is the probability that one of estimates has error larger than $\eps' = \eps \beta$.

Since $E_a$ is a Pauli operator (a tensor product of single-qubit Paulis),
it suffices to measure individual qubits in some Pauli basis and multiply them (each of which is $\pm 1$) 
to infer the eigenvalue of $E_a$.
Hence, for a particular color, the time complexity is $O(\numQubits \log(1/\delta') / \eps'^2)$.
We repeat this for each color, resulting in $\degree +1$ rounds.

Since we want all $\numTerms$ estimates to be correct with probability at least $1-\delta$, 
it suffices to set $\delta' = \delta / \numTerms$ to apply the union bound.
\end{proof}

The remainder of this section is devoted to implementing Step~2 of the above plan. 
We start by upper bounding the sample complexity, 
and then move on to bounding the time complexity of our algorithm.

\subsection{Definitions and a sample complexity upper bound}

Recall that \Cref{thm:expectation-value} implies that we can expand $\angles{E_a}$ into a Taylor series
\begin{align}
    \angles{E_a}(\vct{x}) = \beta p_1^{(a)}(\vct x) + \beta^2 p_2^{(a)}(\vct{x})+ \beta^3 p_3^{(a)}(\vct{x}) + \cdots,
\end{align}
where the sum of the absolute values of the coefficients of $p_m$ is bounded by a universal constant that depends only on $\degree$ and $m$.
We call this constant $c_m \in \RR_{>0}$, and from \Cref{thm:expectation-value3,thm:expectation-value4} we have 
\begin{align}
    c_m &= e\degree(1+e(\degree-1))^m(2e(\degree + 1))^{m+1}(m+1)\nonumber \\ 
    &= 2e^2\degree(\degree+1) \tau^{m}(m+1),\label{eq:cm}
\end{align}
where
\begin{equation}
   \tau = (1+e(\degree - 1))(2e(\degree + 1)) \le 2e^2 (\degree + 1)^2. \label{eq:tau}
\end{equation}
Further, $p_k^{(a)}(\vct{x})$ only depends on the entries of $\vct{x}$ 
whose operators are within $\graph$-distance $k$ from $a$.
The first term $p_1^{(a)}$ can be determined more explicitly by 
\begin{equation}
p_1^{(a)}(\vct x) 
= \frac{\diff}{\diff \beta} \frac{\Tr (E_a \exp(-\beta H))}{\Tr \exp(-\beta H)}\biggr|_{\beta = 0, \vct \lambda = \vct x}
= \frac{1}{\fulldimension} \Tr( E_a (-H))|_{\vct \lambda = \vct x}
= - x_a,
\end{equation}
where we used \cref{eq:orthonormality} in the last equality.

Let $\fun : [-1,1]^\numTerms \to \RR^\numTerms$ be $\angles{E_a}(\vct{x})$, 
truncated to order $\truncM$ terms ($\truncM \ge 1$) and shifted by our known estimates $\est_a$ of $\angles{E_a}(\vct{\lambda})$ from \Cref{lem:stepone}, which satisfy $\abs{\est_a - \angles{E_a}(\lambda)} \leq \beta\eps$.  Thus we have
\begin{align}
    \fun_a( \vct x ) := \fun_a( x_1, \ldots, x_\numTerms ) = \sum_{k = 0}^{\truncM} \beta^k p_k^{(a)}(\vct{x}) = - \est_a - \beta x_a + \beta^2 p_2^{(a)}(\vct x) + \cdots + \beta^\truncM p_\truncM^{(a)}(\vct x),
\end{align}
where we defined $p_0^{(a)} = -\est_a$.
Our goal is to find an $\vct{x}$ such that $\mathcal{F}(\vct{x})$ is small, since, as we argue below, such an $\vct{x}$ will be close to the true coefficient vector $\lambda$. 

As a warmup for the time complexity upper bound proved in the next section, we will show a sample complexity upper bound.
The fundamental idea in both upper bounds is the same: Find an $\vct{x}$ such that $\inorm{\fun(\vct{x})} = O(\beta\eps)$.
\begin{theorem} \label{thm:sample-complexity}
    Consider a Hamiltonian $\{(a,E_a,\lambda_a): a \in [\numTerms]\}$ such that $E_a$ are traceless and orthonormal with respect to the Hilbert-Schmidt inner product.
    Then, for any $\beta$ such that
    \begin{equation}
     100 e^6 (\degree+1)^8 \beta \le 1, \label{eq:degree8beta}
    \end{equation}
    we can find $\vct{x} \in [-1,1]^\numTerms$, such that $\|x - \lambda\|_\infty \leq \eps$ with probability $\geq 1-\delta$ using only 
    \begin{equation}
        O\left(\frac{\degree}{\beta^2\eps^2}\log\frac{\numTerms}{\delta} \right)
    \end{equation}
    copies of the Gibbs state.
\end{theorem}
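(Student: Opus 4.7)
The plan is to combine \Cref{lem:stepone} with a quantitative inverse-function argument: we show that any $\vct x$ making $\inorm{\fun(\vct x)}$ small must be $\eps$-close to $\lambda$. \Cref{lem:stepone} already provides estimates $\est$ satisfying $\inorm{\est - \angles{E}(\lambda)} \leq \beta\eps$ (with probability $\geq 1-\delta$) at the claimed sample cost, so it remains to take $\vct x$ to be any minimizer of $\inorm{\fun(\cdot)}$ over $[-1,1]^\numTerms$ and bound $\inorm{\vct x - \lambda}$. (The algorithmic question of how to find such $\vct x$ efficiently is punted to the Newton--Raphson subsection that follows.)

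First, we show $\inorm{\fun(\lambda)} = O(\beta\eps)$. It splits as
\begin{equation}
\fun_a(\lambda) = \bigl(\angles{E_a}(\lambda) - \est_a\bigr) - \sum_{k > \truncM} \beta^k p_k^{(a)}(\lambda).
\end{equation}
The first term is $\leq \beta\eps$ by construction. For the truncation, $|\lambda_a| \leq 1$ and \Cref{thm:expectation-value3,thm:expectation-value4} give $|p_k^{(a)}(\lambda)| \leq c_k$ with $c_k$ as in \cref{eq:cm}; hypothesis \cref{eq:degree8beta} yields $\beta \tau \leq 1/2$, so $\sum_{k > \truncM} \beta^k c_k$ is a rapidly decaying geometric-type tail, and $\truncM = \Theta(\log \tfrac{1}{\beta \eps})$ makes it $O(\beta\eps)$. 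Since $\vct x$ minimizes $\inorm{\fun(\cdot)}$, we conclude $\inorm{\fun(\vct x)} \leq \inorm{\fun(\lambda)} = O(\beta\eps)$ as well.

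The crux of the argument is a uniform $\infty \to \infty$-norm bound on the inverse Jacobian. Write $\nabla \fun(\vct y) = -\beta I + \beta^2 R(\vct y)$, where $R_{ab}(\vct y) = \sum_{k\geq 2}\beta^{k-2} \partial_b p_k^{(a)}(\vct y)$. Since $p_k^{(a)}$ is homogeneous of degree $k$ with coefficient-magnitude sum $\leq c_k$, one has $\sum_b |\partial_b p_k^{(a)}(\vct y)| \leq k c_k$ for every $\vct y \in [-1,1]^\numTerms$ (each monomial contributes its total degree $k$ times its coefficient), whence
\begin{equation}
\iinorm{R(\vct y)} \leq \sum_{k\geq 2} \beta^{k-2} k c_k,
\end{equation}
and the threshold \cref{eq:degree8beta} is calibrated so this is $\leq 1/(2\beta)$. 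Setting $\bar A := \int_0^1 \nabla \fun(\lambda + t(\vct x - \lambda))\, dt$, the averaged residual $\bar R$ inherits $\iinorm{\beta \bar R} \leq 1/2$, so a Neumann series gives $\iinorm{\bar A^{-1}} \leq 2/\beta$. By the fundamental theorem of calculus $\fun(\vct x) - \fun(\lambda) = \bar A (\vct x - \lambda)$, and therefore
\begin{equation}
\inorm{\vct x - \lambda} \leq \iinorm{\bar A^{-1}}\, \inorm{\fun(\vct x) - \fun(\lambda)} = O(\eps).
\end{equation}

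The main obstacle is pinning down the constant in \cref{eq:degree8beta}: each summand of the row-sum bound on $R$ is $\Theta(\beta^{k-1} k(k+1)\tau^k(\degree+1))$, so the dominant $k=2$ contribution is $\Theta(\beta(\degree+1)^5)$ using $\tau \sim (\degree+1)^2$, and one must verify that the full geometric-type sum starting at $k=2$ stays below $1/(2\beta)$. The $(\degree+1)^8$ prefactor in \cref{eq:degree8beta} provides enough slack to push this through; everything else is standard Neumann-series and mean-value reasoning.
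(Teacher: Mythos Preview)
Your proposal is correct and follows essentially the paper's approach: invoke \Cref{lem:stepone} for the estimates, observe $\inorm{\fun(\lambda)} = O(\beta\eps)$, and show any near-minimizer $\vct x$ is close to $\lambda$ by bounding $\iinorm{(\nabla\fun)^{-1}} \leq 2/\beta$ via a Neumann series around $-\beta I$ (this is the paper's \Cref{lem:jac-bound}). Two cosmetic differences: the paper takes $\truncM = \infty$ so there is no truncation tail to handle, and it uses a coordinate-wise mean value theorem instead of your integrated Jacobian; incidentally, your row-sum estimate $\sum_b |\partial_b p_k^{(a)}(\vct y)| \leq k c_k$ is sharper than the paper's $(\degree+1)^k\cdot k c_k$ bound (which counts the $b$'s and bounds each term separately), though either suffices under \cref{eq:degree8beta}.
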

\begin{proof}
From \Cref{lem:stepone} we know that $O(\frac{\degree}{\beta^2\eps^2}\log(\frac{\numTerms}{\delta}))$ Gibbs states suffice to estimate $\angles{E_a}$ to $\beta\eps$ accuracy for all $a$ with probability $\geq 1-\delta$.

Next, consider $\fun$ for $\truncM = \infty$, so $\fun_a(\vct{x}) = \angles{E_a}(\vct{x}) - \est_a$.
Notice that this means that $\|\fun(\lambda)\|_\infty \leq \beta\eps$, by our assumption about the accuracy of the estimates $\est_a$.
Our algorithm will be to find and output any $\vct{x} \in [-1,1]^\numTerms$ satisfying $\|\fun(\vct{x})\|_\infty \leq \beta\eps$.
We know one such $x$ must exist, since $\lambda$ satisfies this equation. It remains to be shown that any such $x$ is also close to $\lambda$.

Let $\diff_b$ denote the derivative with respect to $x_b$ and let $J = \rd \fun$ be the Jacobian of $\fun$, so $J_{ab} := \diff_b \fun_a$.
Then, for each $a$, by the multivariate mean value theorem, there exists $\vct y(a) \in (-1,1)^\numTerms$ such that
\begin{equation}
\fun_a(\vct x) = \fun_a(\lambda) + (J|_{\vct y(a)} (\vct x - \lambda))_a.
\end{equation}
This implies that
\begin{align}
\abs{\vct x_a - \lambda_a}
&= \abs{\sum_b (J|_{\vct{y}(a)}^{-1})_{ab}(\fun_b(\vct{x}) - \fun_b(\lambda))} \nonumber\\
&\le \iinorm{J|_{\vct y(a)}^{-1}} (\inorm{\fun(\vct x)} + \inorm{\fun(\lambda)}) 
\le (2 \beta^{-1})(2\beta\eps) = 4\eps, \label{eq:samp-final}
\end{align}
where the final inequality uses \cref{lem:jac-bound} below, which holds when $\beta$ is bounded as in \cref{eq:degree8beta}.
Rescaling $\eps \to \frac{1}{4}\eps$ completes the proof.
\end{proof}

\begin{lemma} \label{lem:jac-bound}
    For Hamiltonians as in \cref{thm:sample-complexity}, 
    if \cref{eq:degree8beta} holds, 
    then for any $\vct x \in [-1,1]^\numTerms$, 
    we have $\iinorm{I + \beta^{-1}J(\vct x)} \leq \frac12$ 
    and $\iinorm{J(\vct x)^{-1}} \leq 2\beta^{-1}$ for any $\truncM \ge 1$.
\end{lemma}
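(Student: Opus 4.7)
The plan is to compute $I + \beta^{-1}J(\vct x)$ directly from the Taylor truncation defining $\fun$, bound its $\infty\to\infty$ norm by a rapidly decreasing geometric series using \Cref{thm:expectation-value}, and then derive the bound on $\iinorm{J^{-1}}$ via a Neumann-series expansion.

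First, I would differentiate $\fun_a(\vct x) = -\est_a - \beta x_a + \sum_{k=2}^{\truncM} \beta^k p_k^{(a)}(\vct x)$ entrywise. The linear term contributes exactly $-\beta\delta_{ab}$ to $J_{ab}$, which cancels the identity, leaving
\begin{equation}
    (I + \beta^{-1}J(\vct x))_{ab} = \sum_{k=2}^{\truncM} \beta^{k-1}\, \diff_b p_k^{(a)}(\vct x).
\end{equation}
To bound the $\infty\to\infty$ norm I need to control $\sum_b \abs{\diff_b p_k^{(a)}(\vct x)}$ for each $a$. Since $p_k^{(a)}$ is a homogeneous polynomial of degree $k$ in $\vct x$ (\cref{thm:expectation-value1}), summing partial derivatives and then maximizing over $\vct x \in [-1,1]^\numTerms$ is bounded by $k$ times the sum of absolute values of coefficients of $p_k^{(a)}$ (Euler's identity for homogeneous functions, combined with the triangle inequality). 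The sum of absolute values of coefficients in turn is at most (number of monomials) times (maximum coefficient magnitude), which by \cref{thm:expectation-value3,thm:expectation-value4} is $\leq c_k$ as defined in \cref{eq:cm}. So $\sum_b \abs{\diff_b p_k^{(a)}(\vct x)} \leq k c_k$, which gives
\begin{equation}
    \iinorm{I + \beta^{-1} J(\vct x)} \leq \sum_{k=2}^{\infty} k c_k \beta^{k-1}
    = 2e^2\degree(\degree+1)\,\tau \sum_{k=2}^\infty k(k+1)(\tau\beta)^{k-1}.
\end{equation}

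Next I would plug in the hypothesis. Since $\tau \leq 2e^2(\degree+1)^2$ and $100 e^6(\degree+1)^8\beta \le 1$, we have $\tau\beta$ well below $1/2$, so the series $\sum_{k\geq 2}k(k+1)y^{k-1}$ with $y=\tau\beta$ is dominated by its leading $O(y)$ behavior (explicitly bounded by $32y$ for $y \le 1/2$, via differentiating $\tfrac{1}{1-y}$ twice). The resulting bound is of order $e^6(\degree+1)^6\beta$, which the hypothesis makes $\leq \tfrac12$. I'll keep this computation explicit enough to verify the constants but not overly polished, since there is generous slack between $(\degree+1)^6$ and $(\degree+1)^8$.

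Finally, the bound on $\iinorm{J^{-1}}$ follows immediately from a Neumann series. Writing $J = -\beta\bigl(I - (I + \beta^{-1} J)\bigr)$ and using the standard fact that $\iinorm{M} \le \tfrac{1}{2}$ implies $I - M$ is invertible with $\iinorm{(I-M)^{-1}} \le \sum_{k\ge 0} (1/2)^k = 2$, we get $\iinorm{J(\vct x)^{-1}} \le 2\beta^{-1}$. The only place where care is required is the Euler's-identity step: one must verify that the bound on the coefficient sum really is $c_k$, not merely that each individual coefficient is at most $c_k$. Apart from that, the proof is a direct computation using results already established.
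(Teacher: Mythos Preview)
Your proof is correct and slightly more efficient than the paper's. Both proofs derive the same decomposition $(I+\beta^{-1}J)_{ab}=\sum_{k\ge 2}\beta^{k-1}\diff_b p_k^{(a)}$ and finish with the identical Neumann-series argument for $J^{-1}$. The difference is in how the row sum $\sum_b|\diff_b p_k^{(a)}(\vct x)|$ is bounded. The paper invokes the locality property \cref{thm:expectation-value2} to count at most $(\degree+1)^k$ contributing indices~$b$, and then bounds each individual $|\diff_b p_k^{(a)}|$ by $kc_k$, picking up a factor $(\degree+1)^k kc_k$ per term. You instead bound the full row sum directly by $k c_k$ via the coefficient argument (for a homogeneous degree-$k$ polynomial $p=\sum_\alpha c_\alpha\lambda^\alpha$ on $[-1,1]^\numTerms$, $\sum_b|\diff_b p|\le\sum_\alpha|c_\alpha|\sum_b\alpha_b=k\sum_\alpha|c_\alpha|$), which avoids the superfluous $(\degree+1)^k$ factor. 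This is why your final bound involves only $(\degree+1)^6$ rather than the $(\degree+1)^8$ the paper needs to absorb, explaining the slack you noticed. Your caution about $c_k$ is warranted but resolves correctly: by \cref{eq:cm}, $c_k$ is defined as the product of the monomial count and the maximum coefficient magnitude, hence it upper bounds the $\ell_1$ norm of the coefficient vector, which is exactly what your argument requires.
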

In particular, the lemma is true when $\truncM = \infty$ and 
consequently $J$ is also the Jacobian of the function
$\RR^\numTerms \ni x \mapsto (\angles{E_a}(\vct{x})) \in \RR^\numTerms$.
The proof implicitly uses a band-diagonal property of $J$: if $b$ and $a$ are distance $k$ apart, then $J_{ab}$ scales as $\beta^{k+1}$.
\begin{proof}
In this proof we suppress the argument $\vct x$ in $J(\vct x)$.
If $\iinorm{I + \beta^{-1}J} \le \frac 1 2$, then
\begin{align}
J^{-1} &= -\frac{1}{\beta} \frac{I}{I - (I + \beta^{-1}J)} = -\frac{1}{\beta} \sum_{k=0}^\infty (I + \beta^{-1} J)^k \label{eq:Jinv}\\
\iinorm{J^{-1}} &\le \beta^{-1} \sum_{k=0}^\infty \iinorm{I + \beta^{-1} J}^k \le 2\beta^{-1} .\label{eq:JinvBound}
\end{align}
Hence, we have to show that $\iinorm{\beta I + J} \le \frac \beta 2$ in the stated range of $\beta$
to complete the proof.
The leading order term of $J$ is $-\beta I$,
\begin{align}
    J_{ab} = \diff_b \fun_a = -\beta \delta_{ab} + O(\beta^2);
\end{align}
we will bound the rest of $J$ to show that $J$ is close to $-\beta I$.
Let $u = (u_1,u_2,\ldots, u_\numTerms)$ be such that $\abs{u_b} \le 1$ for all $b$.
\begin{align}
    ((J + \beta I)u)_a
    &= \sum_b (J + \beta I)_{ab} u_b \nonumber\\
    &=\sum_b u_b \left( \beta^2 \diff_b p_2^{(a)}(\vct x) +\cdots + \beta^\truncM \diff_b p_\truncM^{(a)}(\vct x) \right) \nonumber\\
    &=\sum_{k=2}^\truncM \beta^k \sum_{b:\dist(a,b) \le k} u_b \diff_b p_k^{(a)}(\vct x) & \text{by \cref{thm:expectation-value2}.} \label{eq:JbetaI}
\intertext{
For each~$k$ in the last sum, the index~$b$ ranges over
at most $1 + \degree + \cdots + \degree^k \leq (\degree+1)^{k}$ nodes of $\graph$.
Further, \cref{thm:expectation-value1} says that $p_k^{(a)}$ is a homogeneous polynomial of degree $k$ 
and the sum of the absolute value of its coefficients is bounded by $c_k$ of \cref{eq:cm}.
As a result, $\abs{\diff_b p_k^{(a)}} \leq k c_k$ everywhere in the domain of $\fun$.
}
    \abs{((J + \beta I)u)_a}
    &\leq\sum_{k=2}^\infty \beta^k \cdot (\degree+1)^{k} \cdot k c_k \label{eq:Jbetainf-a}\\
    &\le 2e^2(\degree+1)^2 (\beta (\degree+1) \tau)^2 \sum_{k=2}^\infty (\beta (\degree+1) \tau)^{k-2} \cdot k(k+1) \label{eq:Jbetainf-b}\\
    &= 2e^2(\degree+1)^4 \beta^2 \tau^2 \biggl(\frac{6-6r+2r^2}{(1-r)^3}\Bigr|_{r=\beta(\degree+1)\tau}\biggr) & \text{if $\beta(\degree+1)\tau < 1$} \nonumber\\
    &\le 2e^2(\degree+1)^4 \beta^2 \tau^2 \cdot \frac{25}{4} & \text{if $\beta(\degree+1)\tau \le \frac 1 {100}$.} \label{eq:Jbetainf}
\end{align}
Since $u \in [-1,1]^\numTerms$ is arbitrary, 
the last quantity is an upper bound on $\iinorm{J + \beta I}$.
The bound on $\tau$, \cref{eq:tau}, and the bound on $\beta$, \cref{eq:degree8beta}, together imply that it is $\leq \frac{\beta}{2}$.
\end{proof}

With this analysis, we can also deduce a bound on the strong convexity of the log-partition function, as analyzed by \cite{AAKS21}, that is optimal up to constants.
This is simply a matter of bounding $J^{-1}$ in the usual operator norm, $\|\cdot\|_{2\to 2}$, rather than the (in this case, larger) $\|\cdot\|_{\infty \to \infty}$ norm.

\begin{corollary}
    For Hamiltonians as in \cref{thm:sample-complexity},
    if \cref{eq:degree8beta} holds, 
    then $\betaF$ is $(\frac{\beta^2}{2})$-strongly convex,
    i.e.,
	$\nabla^{\otimes 2}  \betaF - \frac{\beta^2}{2}I$ (whose $(a,b)$-component is $\diff_a\diff_b \betaF - \beta^2 \delta_{ab}/2$) 
	is positive semidefinite.
    The strong convexity constant is only a constant factor off from optimal.
\end{corollary}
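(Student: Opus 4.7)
The plan is to translate \cref{lem:jac-bound} from an $\infty\to\infty$ norm bound on the Jacobian $J$ of the expectation map into a spectral bound on the Hessian $\nabla^{\otimes 2}\betaF$. The bridge between the two is \cref{statement:expectation-as-betaF-derivative}: differentiating the identity $\partial_a \betaF = -\beta \angles{E_a}(\vct\lambda)$ once more yields
\begin{equation}
    (\nabla^{\otimes 2}\betaF)_{ab} \;=\; -\beta\, \partial_b \angles{E_a}(\vct\lambda) \;=\; -\beta\, J_{ab},
\end{equation}
where $J$ is the Jacobian of $\vct x \mapsto (\angles{E_a}(\vct x))$ (i.e.\ $\fun$ with $\truncM=\infty$) evaluated at $\vct\lambda$. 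So the desired statement $\nabla^{\otimes 2}\betaF \succeq \frac{\beta^2}{2} I$ is equivalent to $-J \succeq \frac{\beta}{2}I$, with $\beta>0$.

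The key observation is that $J$ is symmetric, since it equals $-\beta^{-1}$ times the Hessian of $\betaF$, and $\betaF$ is $C^\infty$ on the real domain (indeed, complex differentiable on a neighborhood thereof, as established in \cref{sec:multiTaylor}). For any symmetric real matrix $A$, the spectral norm is bounded by the $\infty\to\infty$ (max absolute row-sum) norm, because $\norm{A}_{1\to 1}=\norm{A}_{\infty\to\infty}$ by symmetry, and Riesz--Thorin interpolation gives $\norm{A}_{2\to 2} \le \sqrt{\norm{A}_{1\to 1}\norm{A}_{\infty\to\infty}}=\norm{A}_{\infty\to\infty}$. Applying this to $A = I + \beta^{-1}J$ and invoking \cref{lem:jac-bound} (with $\truncM=\infty$), we get $\norm{I+\beta^{-1}J}_{2\to 2} \le \frac 12$. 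Equivalently, every eigenvalue of $\beta^{-1}J$ lies in $[-\frac32, -\frac12]$, so every eigenvalue of $-\beta J$ lies in $[\frac{\beta^2}{2}, \frac{3\beta^2}{2}]$. Since $\nabla^{\otimes 2}\betaF = -\beta J$, this yields $\nabla^{\otimes 2}\betaF \succeq \frac{\beta^2}{2} I$, proving the claimed strong convexity.

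For the optimality claim, it suffices to exhibit one Hamiltonian for which the largest-possible strong convexity constant is only $O(\beta^2)$. A single-qubit, single-term Hamiltonian $H = \lambda_1 Z$ suffices: a direct computation gives $\betaF(\lambda_1) = \log(2\cosh(\beta\lambda_1))$, so $\partial_{\lambda_1}^2 \betaF = \beta^2 \,\mathrm{sech}^2(\beta\lambda_1)$. At $\lambda_1=0$ this equals $\beta^2$, which matches the lower bound $\frac{\beta^2}{2}$ up to a factor of $2$. Hence the constant $\frac{\beta^2}{2}$ we obtained cannot be improved beyond a constant factor. No step here looks hard; the only thing that needs care is verifying the symmetry of $J$ and stating the Riesz--Thorin (or direct row-sum) bound for symmetric matrices, which is the single conceptual ingredient beyond \cref{lem:jac-bound}.
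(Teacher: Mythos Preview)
Your proof is correct and follows essentially the same route as the paper: identify $\nabla^{\otimes 2}\betaF = -\beta J$, use symmetry of $J$ to pass from the $\infty\to\infty$ bound of \cref{lem:jac-bound} to a spectral bound, and read off the eigenvalue range $[\tfrac{\beta^2}{2},\tfrac{3\beta^2}{2}]$ for the Hessian. The only minor differences are that the paper proves $\norm{A}_{2\to 2}\le\norm{A}_{\infty\to\infty}$ for symmetric $A$ by a direct eigenvector argument rather than Riesz--Thorin, and for optimality the paper simply uses the already-obtained upper bound $\nabla^{\otimes 2}\betaF\preceq\tfrac{3\beta^2}{2}I$ (giving optimality within a factor of $3$ for \emph{every} admissible Hamiltonian) instead of computing an explicit one-qubit example; your example is also valid and gives a slightly sharper factor of $2$.
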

\begin{proof}
In this proof we suppress the argument $\vct x$ in $J(\vct x)$, and similar arguments.
By \cref{statement:expectation-as-betaF-derivative}, $-\beta J$ is the Hessian of $\betaF$, taking $\truncM = \infty$.
Since it comes from a Hessian, $J$ is Hermitian.
So, it suffices to show that $\norm{I + \beta^{-1}J} \le \frac 1 2$, since
\begin{align}
    \norm{I + \beta^{-1}J} &\le \frac 1 2 \implies I + \beta^{-1}J \preceq I/2 \implies \beta^{-1}J \preceq -I/2 \implies \nabla^{\otimes 2} \betaF \succeq \beta^2 I/2, \\
    \norm{I + \beta^{-1}J} &\le \frac 1 2 \implies -I - \beta^{-1}J \preceq I/2 \implies -\beta^{-1}J \preceq 3I/2 \implies \nabla^{\otimes 2} \betaF \preceq \beta^2 3I/2.
\end{align}
The second equation above proves optimality, up to a factor of 3.
The bound we need follows immediately from \cref{lem:jac-bound}, since for a Hermitian matrix $X$, it holds that $\|X\| \leq \|X\|_{\infty \to \infty}$.
(For an eigenvector $v$ achieving $Xv = \mu v$ with $\abs{\mu} = \|X\|$, we see $\|X\|_{\infty \to \infty} \geq \|Xv\|_{\infty}/\|v\|_{\infty} = \|X\|$.)
So,
\begin{align}
    \norm{I + \beta^{-1}J} \le \iinorm{I + \beta^{-1}J} \le \frac 1 2,
\end{align}
as desired.
\end{proof}

\begin{remark}\label{rm:aaks-inf}
    In this remark, we show how to tweak the result in \cite{AAKS21} to get a slightly improved version shown in \cref{eq:AAKS}.
    We assume knowledge of \cite{AAKS21}.
    First, if we do not perform the final bound in \cite[Proof of Theorem 28, p.28]{AAKS21}, we have that $v^\dagger \nabla^{\otimes 2} \betaF v \geq C\|v\|_\infty^2$ for $C = e^{-O(\beta^c)}\beta^{c'}$.
    Using \cref{statement:expectation-as-betaF-derivative}, we have that $-\beta (v^\dagger J v) \geq C\|v\|_\infty^2$.
    Consider taking the $v$ that achieves $\|J^{-1}v\|_\infty = \|J^{-1}\|_{\infty\to\infty}\|v\|_{\infty}$.
    Then, using that $\|X\|_{2 \to 2} \leq \|X\|_{\infty \to \infty}$ for Hermitian $X$,
    \begin{equation}
        C\|J^{-1}\|_{\infty \to \infty}^2\|v\|_{\infty}^2 = C\|(-J)^{-1} v\|_\infty^2 \leq \beta (v^\dagger J^{-1} v)
        \leq \beta \|J^{-1}\| \|v\|_2^2 \leq \beta\|J^{-1}\|_{\infty \to \infty} \numTerms\|v\|_\infty^2.
    \end{equation}
    So, $\|J^{-1}\|_{\infty\to\infty} \leq \frac{\beta\numTerms}{C}$.
    This can be plugged in directly into, say, \cref{eq:samp-final} to see that, using this bound, we would need to estimate the marginals to $\eps\frac{C}{\beta\numTerms}$ error, giving the bound.
    Note that the assumption that $\beta = O(1)$ is not needed to achieve this sample complexity bound.
\end{remark}

\subsection{Time complexity and analysis of the Newton--Raphson method} \label{sec:newton-raphson}
The goal of this section is to prove the following theorem, which when combined with \Cref{lem:stepone} to get the assumed estimates, gives us the main result (\Cref{thm:algorithm}).
\begin{theorem} \label{thm:newton}
    Consider a Hamiltonian $\{(a,E_a,\lambda_a): a \in [\numTerms]\}$ such that $E_a$ are traceless and orthonormal with respect to the Hilbert-Schmidt inner product.
    Suppose $\beta > 0 $ satisfies
    \begin{align}
    25 e^6(\degree+1)^{10} \beta \le 1. \label{eq:beta-bound-for-newton-convergence}
    \end{align}
    Suppose we know estimates $\est \in [-1,1]^\numTerms$ such that $\abs*{\est_a - \angles{E_a}} \leq \beta\eps$ for all $a \in [\numTerms]$.
    Then we can find an $\vct{x}$ such that $\|x - \lambda\|_\infty \leq 18\eps$
    in time  $O\Bigl(\frac{\numTerms L}{\eps} \poly(\degree, \log\frac{1}{\beta\eps})\Bigr)$.
\end{theorem}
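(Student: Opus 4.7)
The plan is to use a Newton--Raphson-style iteration to find $\vct x \in [-1,1]^\numTerms$ making $\fun(\vct x)$ small. Once $\inorm{\fun(\vct x)} = O(\beta\eps)$, the mean-value argument from the proof of \Cref{thm:sample-complexity}---combined with the bound $\iinorm{J^{-1}} \leq 2/\beta$ from \Cref{lem:jac-bound}---immediately yields $\norm{\vct x - \lambda}_\infty = O(\eps)$, so the rest is reducing $\fun$ along the iteration and accounting for time.

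First I would choose the truncation parameter $\truncM = \Theta(\log\frac{1}{\beta\eps})$ so that the Taylor tail is absorbed into the $\beta\eps$ budget. By \Cref{thm:expectation-value3} and \Cref{thm:expectation-value4} the contribution beyond order $\truncM$ satisfies $\sum_{m > \truncM} \beta^m \abs{p_m^{(a)}(\vct x)} \le O((\beta\tau)^\truncM \poly(\truncM))$, which can be made $\le \beta\eps$ under the hypothesis \cref{eq:beta-bound-for-newton-convergence}. Combined with $\abs{\est_a - \angles{E_a}} \le \beta\eps$, this gives $\inorm{\fun(\lambda)} \le 2\beta\eps$, so $\lambda$ is an approximate root of the truncated $\fun$.

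Next I would run the iteration starting from $\vct x^{(0)} = 0$, where $\inorm{\fun(0)} \le 1$. The simplest choice is the update $\vct x^{(k+1)} = \vct x^{(k)} + \beta^{-1}\fun(\vct x^{(k)})$, which replaces $J^{-1}$ by its leading Neumann-series term $-\beta^{-1}I$ from \cref{eq:Jinv}. By the bound $\iinorm{I + \beta^{-1}J} \le \tfrac12$ from \Cref{lem:jac-bound} and a multivariate mean-value theorem, this gives the linear contraction $\norm{\vct x^{(k+1)} - \lambda}_\infty \leq \tfrac12 \norm{\vct x^{(k)} - \lambda}_\infty + O(\eps)$, so $O(\log\frac{1}{\eps})$ iterations drive $\norm{\vct x^{(K)} - \lambda}_\infty$ to $O(\eps)$. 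If a doubly-logarithmic iteration count is desired, one can substitute a higher-order Neumann approximant of $J^{-1}$ and bound the second-order remainder of $\fun$ using higher cluster derivatives from \cref{sec:cluster} to recover true quadratic convergence; either version suffices for the stated theorem.

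The main obstacle is executing each iteration within the claimed time envelope. The key structural fact, from \Cref{thm:expectation-value2}, is that $\fun_a(\vct x)$ depends only on the $O(\degree^\truncM)$ coordinates $x_b$ with $b$ at $\graph$-distance $\le \truncM$ from $a$, and by \Cref{thm:expectation-value3} the polynomial $\fun_a$ has $O((e\degree)^\truncM)$ monomials of degree at most $\truncM$. I would precompute once, for every $a$, the monomial list and their rational coefficients via \Cref{alg:cluster-enumeration} and \Cref{alg:cluster-derivative}, with cost bounds supplied by \Cref{thm:expectation-valueA} and \Cref{thm:expectation-valueB}; each subsequent Newton iteration is then just $\numTerms$ sparse polynomial evaluations plus an $O(\numTerms)$ vector update. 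The delicate bookkeeping is combining the $L\numTerms\degree\log\degree$ preprocessing (\cref{rm:ham-repr}), the per-term cluster enumeration and coefficient computations (which are $\poly(\degree,\log\frac1{\beta\eps})$ after the $\truncM = O(\log\frac1{\beta\eps})$ substitution), and the $O(\log\frac1\eps)$ iterations, and verifying that the resulting expression fits inside $O\bigl(\frac{\numTerms L}{\eps}\poly(\degree,\log\frac{1}{\beta\eps})\bigr)$.
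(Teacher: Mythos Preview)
Your approach is correct and in fact simpler than the paper's. The paper runs the full Newton iteration with a $K$-term Neumann approximant of $J^{-1}$, bounds the second-order remainder $\sum_{b,c}\Delta_b\Delta_c\,\diff_b\diff_c\fun_a$ using the cluster-derivative estimates, and obtains a recursion $\inorm{\Delta_{t+1}}\le 6\eps + C\inorm{\Delta_t}^2$ which is then solved in \Cref{lem:iteration}---but that lemma immediately discards the square and reduces to a linear contraction anyway. Your one-term update $\vct x^{(k+1)}=\vct x^{(k)}+\beta^{-1}\fun(\vct x^{(k)})$ goes straight to the linear contraction $\inorm{\Delta_{k+1}}\le\tfrac12\inorm{\Delta_k}+2\eps$ via the mean-value theorem and \Cref{lem:jac-bound}, bypassing the second-derivative bound entirely. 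A side benefit is that your argument only uses the weaker constraint \cref{eq:degree8beta} rather than \cref{eq:beta-bound-for-newton-convergence}. (Your remark about a ``doubly-logarithmic iteration count'' from the quadratic version is not right: because of the additive $O(\eps)$ term, the paper's quadratic recursion also needs $\Theta(\log\frac1{\beta\eps})$ iterations.)

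Two points need attention. First, you should project onto $[-1,1]^\numTerms$ after each step (as the paper does); without this the iterate can leave the cube and \Cref{lem:jac-bound} no longer applies at the intermediate mean-value point. Projection only decreases $\inorm{\cdot-\lambda}$, so the contraction survives. Second, your time accounting contains an error: the per-term precomputation is \emph{not} $\poly(\degree,\log\frac1{\beta\eps})$. The number of monomials times the per-coefficient cost is of order $(8e\degree)^{\truncM}$, which is exponential in $\truncM$; with a careless choice $\truncM=\Theta(\log\frac1{\beta\eps})$ this becomes $(1/\beta\eps)^{\Theta(\log\degree)}$. The paper's specific choice \cref{eqn:truncM-val} is calibrated so that $\truncM\log(8e\degree)\lesssim\log\frac1\eps$ while still driving the tail below $\beta\eps$, yielding $(8e\degree)^{\truncM}=O(\poly(\degree)/\eps)$ as in \cref{eq:8edm}. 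You need this calculation (or an equivalent one) to land inside the claimed $O\bigl(\tfrac{\numTerms L}{\eps}\poly(\degree,\log\frac1{\beta\eps})\bigr)$.
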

Recall that we defined $L$ as the maximum number of qubits that a Hamiltonian term acts on in \Cref{thm:expectation-value}. If $L$ and $\degree$ are constant (as in our definition of a low-intersection Hamiltonian), then our time complexity has linear dependence in $\numTerms$, which is optimal since our output consists of $M$ numbers.
In addition, our $\eps$-dependence is better than the $\eps^{-2}$ dependence in the sample complexity.
There is very mild $\beta$-dependence since $\fun$ becomes simpler for smaller $\beta$.
The rest of this section constitutes the proof of this theorem.

\begin{algorithm}
    \caption{The Newton--Raphson method}\label{alg:newton}
    \KwData{$\beta$ satisfying \cref{eq:beta-bound-for-newton-convergence}, 
    and estimates $\{\est_a\}_{a \in [\numTerms]}$ such that $\abs*{\est_a - \angles{E_a}} \leq \beta\eps$ for all $a \in [\numTerms]$ with $\eps < \frac{1}{18}$.}
    \KwResult{Estimates $\hat{\lambda} \in [-1,1]^{\numTerms}$ such that $\abs*{\hat{\lambda}_a - \lambda_a} \leq 18\eps$ for all $a \in [\numTerms]$}
    Define $T = \Theta(\log(\frac{1}{\beta \eps \degree}))$ (see \cref{eq:Tbound} for a precise expression) and 
    $K = \lceil \log(\frac{3}{\beta\eps})\rceil$\;
    Initialize $x^{(0)} = \vec{0}$ for $x^{(0)} \in \mathbb{R}^{\numTerms}$\; \label{line:init}
    Compute all of the coefficients in the polynomials $\fun_a(\cdot)$ for all $a \in [M]$ via \cref{alg:cluster-derivative}\; \label{line:coeffs}
    \For{$t=0,1,2,\ldots,T-1$}{
        Compute $\fun(x^{(t)})$ and (the nonzero entries of) $J(x^{(t)})$\; \label{line:fun-eval}
        Compute $x^{(t+1)}$ by \cref{eq:newton}, $\vct{x}^{(t+1)} = \operatorname{Proj}_{[-1,1]^\numTerms}\Big[\vct{x}^{(t)} + \beta^{-1}\displaystyle\sum_{k=0}^{K-1} (I + \beta^{-1}J(\vct{x}^{(t)}))^k\fun(\vct{x}^{(t)})\Big]$\; \label{line:newton-iteration}
    }
    Return $\hat{\lambda} \leftarrow x^{(T)}$\;
\end{algorithm}

From this point on, we will fix the point where we truncate $\fun$ to be a particular value
\begin{equation} \label{eqn:truncM-val}
    \truncM = \Bigg\lceil \frac{e}{e-1} \frac{1}{\ln \frac 1 {\beta \tau}}\ln\Bigg(\frac{12 e^2(\degree+1)^2}{\beta \eps \ln\frac{1}{\beta \tau}}\Bigg)\Bigg\rceil,
\end{equation}
a choice that is explained in \cref{eqn:truncM-choice}.

To perform the task in the theorem statement, we use \cref{alg:newton}.
Our analysis only applies when $\eps \leq \frac{1}{12}$, but when $\eps \geq \frac{1}{18}$, we can simply output $\hat{\lambda} = \vec{0}$ as a sufficient approximation.
As in the previous section, the main idea is to find an $\vct{x} \in [-1,1]^\numTerms$ such that $\inorm{\fun(\vct{x})} = O(\beta\eps)$.
We will do this with a version of the \emph{Newton--Raphson method}.
Typically, the Newton--Raphson method performs the iteration $\vct{x}^{(t+1)} = \vct{x}^{(t)} - (J^{-1}\fun)(\vct{x}^{(t)})$ until convergence.
However, we want to avoid computing the inverse of $J$ explicitly, so we will perform the iteration
\begin{align}
    \vct{x}^{(0)} = \vec{0} \qquad \vct{x}^{(t+1)} = \operatorname{Proj}_{[-1,1]^\numTerms}\Big[\vct{x}^{(t)} + \beta^{-1}\sum_{k=0}^{K-1} (I + \beta^{-1}J(\vct{x}^{(t)}))^k\fun(\vct{x}^{(t)})\Big].\label{eq:newton}
\end{align}
This uses the Taylor series approximation for $J^{-1}$ from \cref{eq:Jinv}.
We also perform a projection to remain inside our parameter space $[-1,1]^M$,
where $\operatorname{Proj}_{[-1,1]^\numTerms}$ is the coordinate-wise application of
\begin{align}
    \operatorname{Proj}_{[-1,1]}(u) = 
    \begin{cases} 1 & \text{ if } u \in (1,\infty) \\ u & \text{ if } u \in [-1,1] \\ -1 &\text{ if } u \in (-\infty,-1) \end{cases}.
\end{align}
In \cref{alg:newton} it might seem counterintuitive that $T = \Theta(\log(\frac{1}{\beta \eps \degree}))$
decreases as $\degree$ increases when $\beta$ and $\eps$ are held constant;
however, due to \cref{eq:beta-bound-for-newton-convergence}
our algorithm is not guaranteed to work for arbitrarily large $\degree$ with $\beta$ and $\eps$ fixed. 

\paragraph{Time complexity.}
First, we will show that \cref{alg:newton} has the time complexity claimed in \Cref{thm:newton}. There are several parameters that appear in the algorithm, and it will be helpful to upper bound them with simpler expressions now. Note that $T$, the number of iterations of Newton--Raphson, and $K$, the number of terms used in the approximation of the inverse of $J$, are both clearly $O(\ln(\frac{1}{\beta\eps}))$. The other parameter, which is implicit in the definition of $\fun$ is $\truncM$, which is also $O(\ln(\frac{1}{\beta\eps}))$ due to \cref{eq:truncM-bound}.

Now let us bound the time complexity of the algorithm line by line. The first line of the algorithm with a nontrivial contribution to time complexity is \Cref{line:coeffs}. We need to compute all the coefficients in the polynomials representing $\fun_a$ for $a \in [M]$ up to truncation order $\truncM$. By \Cref{thm:expectation-value3}, we know that each polynomial $p_m$ has at most $e\degree(1+e(\degree-1))^m$ monomials, and hence the total number of monomials in $\fun_a$ is at most $C$, where $C \leq \sum_{m=1}^\truncM e\degree(1+e(\degree-1))^m \leq (e\degree)^{\truncM+1}$.
By \Cref{thm:expectation-valueA}, we can enumerate these coefficients in time $O(\degree \numTerms C)$. Then by \Cref{thm:expectation-valueB}, each coefficient can be computed exactly in time $D=(8^\truncM + L)\poly(\truncM)$. Finally there are $M$ different $\fun_a$ to be computed, and hence we can write down all of $\fun$ in $O(\degree \truncM C \numTerms D)$ time.

Then in \Cref{line:fun-eval}, we can perform evaluations of $\fun(\vct{x})$ in $O(C\numTerms\truncM)$ time, since there are $C$ monomials in each $\fun_a$, and each has up to $\truncM$ variables. Now recall that $J_{ab}(\vct{x})=\partial_b \fun_a (\vct{x})$ is a sparse matrix with at most $\degree^\truncM$ nonzero entries per row or column due to \Cref{thm:expectation-value2}. We start by setting all these entries to $0$. Then we fill out the nonzero entries of column $b$ of the matrix by enumerating the monomials of $\fun_a$, and for those monomials that contain $x_b$ (and hence will contribute to $J_{ab}(\vct{x})$), adding the contribution due to this monomial to the memory location for $J_{ab}(\vct{x})$. For a given $b\in[M]$, this takes time $O(C\truncM)$, and so we can compute $J(\vct{x})$ in $O(C\numTerms\truncM)$ time. 

Finally, in \Cref{line:newton-iteration}, we need to compute the power $(I + \frac{1}{\beta}J)^k\fun$, which can be done by starting with $\fun$ and multiplying by $I+\frac{1}{\beta}J$ $k$ times, where each matrix--vector product takes time linear in the number of nonzero entries in $I + \frac{1}{\beta}J$, which is $O(\numTerms\degree^\truncM)$.
So, the total runtime is
\begin{equation}
    O\Bigg( 
     \underbrace{\degree \truncM C\numTerms D}_{\text{\cref{line:coeffs}}} + T\Big(\underbrace{C\numTerms \truncM}_{\text{\cref{line:fun-eval}}} + \underbrace{K \numTerms\degree^\truncM}_{\text{\cref{line:newton-iteration}}}\Big)\Bigg)
    = O( 
            \numTerms \degree^2 (e\degree)^{\truncM} (8^\truncM + L)\polylog(\tfrac{1}{\beta\eps})
    ).
\end{equation}
Let us examine $\truncM$ more carefully.
Let $d := \degree +1$. 
There are two asymptotically small parameters $\eps$ and $\beta$,
and one large parameter $d$.
The inverse temperature $\beta$ is at most $\beta_c = (25 e^6 d^{10})^{-1}$ by \cref{eq:beta-bound-for-newton-convergence}.
Pulling $\truncM$ from \cref{eqn:truncM-val} (and recalling that $\tau = (1+e(d-2))2ed$ from \cref{eq:tau}), we have
\begin{align}
\truncM - 1 
&= \Bigg\lceil \frac{e}{e-1}\Big(\frac{\ln( \frac{12 e^2 d^2}{\beta \eps} \ln (\frac{1}{\beta \tau}))}{\ln(\frac{1}{\beta\tau})}\Big)\Bigg\rceil - 1\nonumber\\
&\le
\frac{e}{e-1}\Big(\frac{\ln( \frac{12 e^2 d^2}{\beta \eps} \ln (\frac{1}{\beta \tau}))}{\ln(\frac{1}{\beta\tau})}\Big) \nonumber \\
&=
\frac{e}{e-1}\Big(1 + \frac{\ln(d^2\tau) + \ln(\frac{1}{\eps}) }{\ln(\frac{1}{\beta\tau})} + \frac{\ln(12 e^2 \ln(\frac{1}{\beta\tau}))}{\ln(\frac{1}{\beta\tau})}\Big)\nonumber\\
&\leq
\frac{e}{e-1}\Big(1 + \frac{\ln(d^2\tau) + \ln(\frac{1}{\eps}) }{\ln(\frac{1}{\beta_c\tau})} + \frac{\ln(12 e^2 \ln(\frac{1}{\beta_c\tau}))}{\ln(\frac{1}{\beta_c\tau})}\Big)\nonumber\\
&=
\frac{e}{e-1}\Big(\frac{\ln(\frac{d^2}{\beta_c})}{\ln(\frac{1}{\beta_c \tau})} + \frac{\ln(\frac{1}{\eps})}{\ln(\frac{1}{\beta_c \tau})}\Big) + O\Bigl(\frac{\ln\ln d}{\ln d}\Bigr). \label{eq:truncM-bound}
\end{align}
So far, we have refrained from bounding the leading-order term (apart from taking $\beta \leq \beta_c$).
We do this now to bound the runtime.
We use that $\tau \leq 2e^2d^2$ by \cref{eq:tau}, so $\frac{1}{\beta \tau} \geq \frac{1}{\beta_c\tau} \geq 12e^4d^8$.
\begin{align}
    (8ed)^{\truncM} &= \exp\Big(\ln(8ed)\Big(1 + \frac{e}{e-1}\Big(\frac{\ln(\frac{d^2}{\beta_c})}{\ln(\frac{1}{\beta_c \tau})} + \frac{\ln(\frac{1}{\eps})}{\ln(\frac{1}{\beta_c \tau})}\Big)\Big) + O(\ln\ln d)\Big) \nonumber\\
    &\leq \exp\Big(\ln(8ed)\Big(1 + \frac{e}{e-1}\Big(\frac{\ln(25e^6d^{12})}{\ln(12e^4d^8)} + \frac{\ln(\frac{1}{\eps})}{\ln(12e^4d^8)}\Big)\Big) + O(\ln\ln d)\Big) \nonumber\\
    &\leq \exp\Big(\ln(8ed)\Big(1+\frac{e}{e-1}\frac{3}{2}\Big) + \frac{e\ln(8ed)}{(e-1)\ln(12e^4d^8)}\ln\frac{1}{\eps} + O(\ln\ln d)\Big)
\intertext{One can verify that $1 + \frac{e}{e-1}\frac{3}{2} < 3.5$ and $\frac{e}{e-1}\frac{\ln(8ed)}{\ln(12e^4d^8)} \leq \frac{e\ln(16e)}{(e-1)\ln(3072e^5)} < 0.5$ for all $d \ge 2$.
Hence,}
(8ed)^\truncM &= (8ed)^{1+\frac{3e}{2(e-1)}} e^{O(\ln\ln(d))} (1/\eps)^{0.5} = O\left( d^{3.5} (1/\eps)^{0.75} \right) = O\left(\frac{\poly(\degree)}{\eps}\right).\label{eq:8edm}
\end{align}
This leads to an upper bound on the time complexity $O\Bigl(\frac{\numTerms L}{\eps} \poly(\degree, \log\frac{1}{\beta\eps})\Bigr)$
as promised in \cref{thm:newton}.

\paragraph{Correctness and error analysis.}
We begin by explaining the choice of $\truncM$ that we stated above in \cref{eqn:truncM-val}.
We want to choose a large enough $\truncM$ so that the magnitude $\abs{\fun_a( \lambda )}$ will be small (say, at most $2\beta\eps$). 
The convergence of the $\beta$-series by \cref{thm:expectation-value} implies that for all $a$,
\begin{align}
    \abs{\fun_a( \lambda )} &\le \abs{-\est_a + \angles{E_a}} + \abs{-\angles{E_a} -\beta \lambda_a + \beta^2 p_2^{(a)}(\lambda) + \cdots + \beta^\truncM p_\truncM^{(a)}(\lambda)} \nonumber\\
    &\le \beta\eps + \sum_{m > \truncM} \beta^m c_m \nonumber\\
    &= \beta\eps + 2e^2\degree(\degree+1)\frac{(\beta \tau)^{\truncM}}{(1 - \beta \tau)^2}(\truncM(1 - \beta \tau) + 1)\nonumber\\
    &\le \beta\eps + 12 e^2 (\degree+1)^2 (\beta \tau)^{\truncM}\truncM &\text{if $\beta \tau \le \tfrac 1 2$}.\label{eq:funBound}
\end{align}
To obtain an $\truncM$ such that $\|\fun(\lambda)\|_\infty \leq 2\beta\eps$, we require
\begin{align} \label{eqn:truncM-choice}
e^{-\truncM'}\truncM' \leq \frac{\beta \eps}{12 e^2(\degree+1)^2}\ln\frac{1}{\beta \tau} \text{ where } \truncM' = \truncM\log\frac{1}{\beta \tau}.
\end{align}
Using the fact that, for $0 < b < 1$, $x = \frac{e}{e-1}\ln\frac{1}{b}$ is a solution to $xe^{-x} \leq b$,
it is enough to have $\truncM$ chosen as in \cref{eqn:truncM-val}.

Then recall that in our algorithm we wanted to apply $J^{-1}$, but settled for an approximation to make it more time efficient. There is a deviation incurred from this approximation of $J^{-1}$ in each time step $t$:
\begin{align}
    \vct{e}^{(t)} &:= \Big(J(\vct{x}^{(t)})^{-1} + \frac{1}{\beta}\sum_{k=0}^{K-1} (I + \beta^{-1}J(\vct{x}^{(t)}))^k\Big)\fun(\vct{x}^{(t)}) \nonumber\\
    &= -\frac{1}{\beta}\sum_{k=K}^\infty (I + \beta^{-1}J(\vct{x}^{(t)}))^k\fun(\vct{x}^{(t)}) \\
    &= J^{-1}(\vct{x})(I + \beta^{-1}J(\vct{x}^{(t)}))^K\fun(\vct{x}^{(t)}) \nonumber
\end{align}
From \cref{lem:jac-bound}, this error decays exponentially with $K$.

We can now begin analyzing the convergence of the Newton--Raphson method.
Consider $\fun_a(s): [0,1] \to \RR$, 
where $\fun_a(s) := \fun_a(\vct{x} + s(\lambda - \vct{x}))$, 
which is coordinate $a$ of $\fun$ along the straight-line path between $\vct{x}$ and $\lambda$.
Then using Taylor's theorem, which gives us a form for the remainder term in a Taylor series expansion, there is some $s' \in [0,1]$ such that
\begin{equation}
    \fun_a(1) = \fun_a(0) + (\diff_s \fun_a)(0) + \frac{1}{2}(\diff_s^2 \fun_a)(s').
\end{equation}
Now, we use that $\diff_s = \sum_b (\lambda_b - x_b) \diff_b$ and substitute our previous definition of $\fun_a$ to get that, for $\vct{y}^{(a)} := s'\lambda + (1-s')\vct{x}$,
\begin{align}
    \fun_a(\lambda) &= \fun_a (\vct x) + \sum_b (\lambda_b - x_b) (\underbrace{\diff_b \fun_a}_{J_{ab}})(\vct x) + \frac{1}{2} \sum_{b,c} (\lambda_b-x_b)(\lambda_c - x_c) (\diff_b \diff_c \fun_a)(\vct{y}^{(a)}).
\end{align}
Using this, we will analyze how a Newton--Raphson method iteration decreases the distance to the solution~$\lambda$.
Let $\vct{x} := \vct{x}^{(t)}$, $\vct{x}' := \vct{x}^{(t+1)}$, $\vct{e}:= \vct{e}^{(t)}$, $\Delta := \vct{x} - \lambda$, and $\Delta' := \vct{x}' - \lambda$.
\begin{align}
    \abs{\Delta_d'} &= \abs{\operatorname{Proj}_{[-1,1]}[(\vct{x} - (J^{-1}\fun)(\vct{x}) + \vct{e})_d] - \lambda_d} 
    \label{eq:delta-explicit}\\
    &\leq \abs*{(\vct{x} - (J^{-1}\fun)(\vct{x}) + \vct{e})_d - \lambda_d} 
    \nonumber\\
    &= \abs*{e_d + \Delta_d - \sum_a (J(\vct{x})^{-1})_{da}(\fun(\vct{x}))_a} 
    \nonumber\\
    &= \abs*{e_d + \Delta_d - \sum_a J(\vct{x})_{da}^{-1}\Big(\fun_a(\lambda) - \sum_b (\lambda_b - x_b) J(\vct{x})_{ab} - \frac{1}{2} \sum_{b,c} (\lambda_b-x_b)(\lambda_c - x_c) [\diff_b \diff_c \fun_a](\vct{y}^{(a)})\Big)} 
    \nonumber\\
    &= \abs*{\Big[\vct{e} + \Delta -J(\vct{x})^{-1}\fun(\lambda) - J(\vct{x})^{-1}J(\vct{x})\Delta\Big]_d 
        + \frac{1}{2} \sum_{a,b,c} J(\vct{x})_{da}^{-1}\Delta_b\Delta_c [\diff_b \diff_c \fun_a](\vct{y}^{(a)})} 
    \nonumber\\
    &= \abs*{\Big[J(\vct{x})^{-1}\Big((I + \beta^{-1}J(\vct{x}))^K\fun(\vct{x}) - \fun(\lambda)\Big)\Big]_d 
        + \frac{1}{2} \sum_{a,b,c} J(\vct{x})_{da}^{-1}\Delta_b\Delta_c [\diff_b \diff_c \fun_a](\vct{y}^{(a)})} 
    \nonumber
\end{align}
We will bound each expression above in turn.
We can bound the first expression using \cref{lem:jac-bound,eq:funBound}, and that $K = \lceil\log_2(\frac{3}{\beta\eps})\rceil$:
\begin{align}
    &\abs*{\Big[J(\vct{x})^{-1}\Big((I + \beta^{-1}J(\vct{x}))^K\fun(\vct{x}) - \fun(\lambda)\Big)\Big]_d} \nonumber \\
    &\leq \iinorm{J(\vct{x})^{-1}}\Big(\iinorm{I + \beta^{-1}J(\vct{x})}^K\inorm{\fun(\vct{x})} + \inorm{\fun(\lambda)}\Big) \nonumber \\
    &\leq 2\beta^{-1}\Big(2^{-K}(2+\beta \eps) + 2\beta\eps\Big) 
    \leq 6\eps. 
\end{align}
The second expression can be bounded through an argument similar to that of \cref{lem:jac-bound},
in particular, that $\fun_a(\vct y)$ decomposes into degree-$k$ polynomials~\mbox{$p_k^{(a)}(\vct{y})$} 
that depend only on~$y_b$ where~$b$ are within $\graph$-distance~$k$ from~$a$ 
and that have a bound on the magnitude of the coefficients (given by $c_k$ defined in \cref{eq:cm}).
We have that for all~$d$,
\begin{align}
\abs*{\frac{1}{2} \sum_{a,b,c} J(\vct{x})_{da}^{-1}\Delta_b\Delta_c [\diff_b \diff_c \fun_a](\vct{y}^{(a)})}
&\leq \frac{1}{2} \iinorm{J(\vct{x})^{-1}}\max_a\abs{\sum_{b,c} \Delta_b\Delta_c [\diff_b \diff_c \fun_a](\vct{y}^{(a)})} 
\\
&\leq \frac1\beta\max_a\sum_{k \geq 0}\sum_{b,c} \abs{\Delta_b \Delta_c} \cdot \beta^k \cdot \abs{\diff_b \diff_c p_k^{(a)}(\vct y)} 
\nonumber\\
&\leq \frac1\beta\max_a \sum_{k\ge 0} \sum_{\substack{b,c : \\ \dist(b,a) \le k \\ \dist(c,a) \le k}}  \inorm{\Delta}^2 \cdot \beta^k \cdot k (k-1)c_k 
\nonumber\\
&\leq \frac 1\beta  \sum_{k\ge 0} (\degree+1)^{2k} \inorm{\Delta}^2 \cdot \beta^k \cdot k (k-1) c_k 
\nonumber \\
&= \frac {12 e^2}{\beta} \inorm{\Delta}^2 (\degree+1)^2 \frac{(\beta(\degree+1)^2 \tau)^2}{(1-\beta (\degree+1)^2 \tau)^4} \tag*{if $\beta(\degree+1)^2 \tau < 1$}
\nonumber\\
&\leq 12.5 e^2 \beta (\degree+1)^6 \tau^2 \inorm{\Delta}^2 \tag*{if $\beta\degree^2 \tau \le 1 - \sqrt[4]{\frac{12}{12.5}}$.}
\nonumber
\end{align}
These two computations, together with \cref{eq:delta-explicit}, gives us our bound on $\inorm{\Delta'}$.
\begin{align}
    \inorm{\Delta'}
    &\leq 6\eps + 12.5 e^2 \beta (\degree+1)^6 \tau^2 \inorm{\Delta}^2
\end{align}
To summarize, we have just shown that for the Newton--Raphson method iteration shown in \cref{eq:newton}, the error decays as
\begin{align}
    \inorm{\Delta_{t+1}} &\leq 6\eps + 12.5 e^2 \beta (\degree+1)^6\tau^2\inorm{\Delta_t}^2. \label{eq:ddFBound}
\end{align}
We can solve this recursion: By \cref{lem:iteration} below, 
provided that 
$75 \eps e^2\beta(\degree+1)^6 \tau^2 \leq \frac14$ and 
$\inorm{\Delta_0} \leq \frac{1}{25 e^2\beta (\degree+1)^6 \tau^2}$, 
we have that $\inorm{\vct{x}_t - \lambda} \le 18\eps$ 
after $T$ iterations where
\begin{align}
    T &= \lceil-\log_2(75e^2 (\degree+1)^6 \tau^2 \beta\eps )\rceil \nonumber\\
    &\le \lceil-\log_2(300e^6(\degree+1)^{10} \beta\eps )\rceil .\label{eq:Tbound}
\end{align}
Since $\inorm{\Delta_0} \le 1$, 
the condition is satisfied when $\beta \leq (25e^2(\degree+1)^6 \tau^2)^{-1}$ and $\eps \leq \frac{1}{12}$.
This completes the proof of \cref{thm:newton}.

\begin{lemma}\label{lem:iteration}
    Let $c,d \in \RR_{>0}$ be such that $cd \leq \frac 1 4$.
    Consider a sequence of positive real numbers $z_0, z_1, z_2,\ldots $ that satisfy for all $n \ge 0$,
\begin{equation}
    z_0 \le \frac{1}{2d}  \quad \text{and}\quad  z_{n+1} \le c + d z_n^2.
\end{equation}
Then, for all $n \ge \log_2 \frac 1 {cd} - 1$ it holds that $z_n \le 3c$.
\end{lemma}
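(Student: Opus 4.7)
The plan is to analyze the iteration $z_{n+1} \le c + d z_n^2$ by establishing an invariant and then reducing it to a linear recursion. The key observation is that the quadratic map $z \mapsto c + dz^2$ is tame as long as $z$ stays bounded by $1/(2d)$, because then $dz^2 = z \cdot (dz) \le z/2$, which converts the quadratic recursion into a linear one.

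First, I would show by induction on $n$ that $z_n \le \frac{1}{2d}$ for all $n \ge 0$. The base case is the hypothesis. For the inductive step, if $z_n \le \frac{1}{2d}$, then
\begin{equation}
z_{n+1} \le c + d z_n^2 \le c + \frac{1}{4d} \le \frac{1}{4d} + \frac{1}{4d} = \frac{1}{2d},
\end{equation}
using $c \le 1/(4d)$, which is equivalent to the hypothesis $cd \le 1/4$.

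Armed with this invariant, for every $n$ we have $d z_n^2 = z_n (d z_n) \le z_n/2$, so the recursion simplifies to
\begin{equation}
z_{n+1} \le c + \tfrac{1}{2} z_n.
\end{equation}
Unrolling this linear recursion gives $z_n \le c \sum_{k=0}^{n-1} 2^{-k} + 2^{-n} z_0 \le 2c + 2^{-n} z_0$. Since $z_0 \le \frac{1}{2d}$, we get $z_n \le 2c + \frac{1}{2^{n+1} d}$. The condition $n \ge \log_2 \frac{1}{cd} - 1$ is exactly $2^{n+1} \ge \frac{1}{cd}$, i.e.\ $\frac{1}{2^{n+1} d} \le c$, which yields $z_n \le 3c$ as claimed.

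The whole argument is elementary, and I do not anticipate any real obstacle. The only place that requires a moment of care is verifying that the $1/(2d)$ invariant really is preserved under the quadratic iteration; this is exactly where the hypothesis $cd \le 1/4$ is used, and it is also what makes the geometric contraction factor $1/2$ available in the linearized recursion.
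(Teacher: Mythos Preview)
Your proof is correct and essentially identical to the paper's: both establish the invariant $z_n \le \tfrac{1}{2d}$ (the paper does this after substituting $y_n = d z_n$, so the invariant reads $y_n \le \tfrac12$), then use it to linearize the recursion to $z_{n+1} \le c + \tfrac12 z_n$, and finally unroll to get $z_n \le 2c + \tfrac{1}{2^{n+1}d}$.
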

\begin{proof}
With $y_n := dz_n$, the recursion is
\begin{align}
    y_0 \le \frac{1}{2} \quad \text{and}\quad y_{n+1} \le cd + y_n^2.
\end{align}
Note that by induction, $y_n \leq \frac{1}{2}$ for all $n \geq 0$.
So, $\{y_n\}$ also satisfies the inequality $y_{n+1} \le cd + \frac{1}{2}y_n$.
Unrolling the iteration, we get that
\begin{align}
    dz_n = y_n \leq cd\Big(1 + \frac{1}{2} + \cdots + \frac{1}{2^{n-1}}\Big) + \frac{1}{2^n}y_0 \leq 2cd + \frac{1}{2^{n+1}}.
\end{align}
So, $z_n \leq 3c$ when $n \geq \log_2\frac{1}{cd}-1$.
\end{proof}

\section{Lower bounds}

In this section we establish the lower bounds claimed in \Cref{thm:lowerbound}, starting with the lower bound for Hamiltonian learning with $\ell_\infty$ error $\eps$. We  then build on that argument to obtain the lower bound with $\ell_2$ error $\eps$. 

\subsection{Warmup for constant \texorpdfstring{$N$}{N}}\label{sec:warmup}

As a warmup, let's establish a lower bound for $\ell_\infty$ error for Hamiltonians
on a constant number of qubits. In this case we want to show a lower bound of $\Omega(\exp(2\beta)/\beta^2\eps^2)$ samples for any $\beta$ and $\eps \in (0,1/2]$.

Consider two diagonal Hamiltonians $H_0$ and $H_1$ for a 2-qubit system (or a single qudit with local dimension $4$) expressed in terms of the Pauli matrices 
\begin{equation}
    Z \otimes I=\left(\begin{smallmatrix} +1 \\ & +1\\ & & -1\\ & & & -1\end{smallmatrix}\right), \quad
    I \otimes Z=\left(\begin{smallmatrix} +1 \\ & -1\\ & & +1\\ & & & -1\end{smallmatrix}\right), \quad       \text{and} \quad
    Z \otimes Z=\left(\begin{smallmatrix} +1 \\ & -1\\ & & -1\\ & & & +1\end{smallmatrix}\right). 
\end{equation}
For any $\eps\in(0,1/2]$, we define
\begin{align}
    H_0 = \left(-1\right) Z \otimes I + \left(-\frac{1}{2}\right) I \otimes Z + \left(-\frac{1}{2}\right) Z \otimes Z 
    &= \begin{pmatrix}
        -2 \\
        & 0 \\
        & & 1\\
        & & & 1
    \end{pmatrix}, \quad \text{and}\\ \label{eq:Hone}
    H_1 = \left(-1\right) Z \otimes I + \left(-\frac{1}{2}+{\eps}\right) I \otimes Z + \left(-\frac{1}{2}-{\eps}\right) Z \otimes Z
    &= \begin{pmatrix}
        -2 \\
        & 0 \\
        & & 1+2\eps \\
        & & & 1-2\eps
    \end{pmatrix}.
\end{align}
The coefficients of these Hamiltonians lie in $[-1,1]$, and if we learn an unknown Hamiltonian to error $<\eps/2$, then we can distinguish these two Hamiltonians. We now show that distinguishing the Gibbs states of these Hamiltonians needs $\Omega(\exp(2\beta)/\beta^2\eps^2)$ samples. 

Since the Hamiltonians are diagonal, their Gibbs states are also diagonal and are simply probability distributions. The two Gibbs states $\rho_0$ and $\rho_1$ are 
\begin{align}\label{eq:rho}
    \rho_0 = \frac{1}{Z_0}\begin{pmatrix}
        e^{2\beta} \\
        & 1 \\
        & & e^{-\beta} \\
        & & & e^{-\beta} \\
    \end{pmatrix} \quad \text{and} \quad
    \rho_1 = \frac{1}{Z_1} \begin{pmatrix}
        e^{2\beta} \\
        & 1 \\
        & & e^{-\beta+2\beta\eps} \\
        & & & e^{-\beta-2\beta\eps} \\
    \end{pmatrix},
\end{align}
where $Z_0$ and $Z_1$ are the respective partition functions (or normalization constants).

We want to lower bound the number of samples needed to distinguish the two probability distributions corresponding to $\rho_0$ and $\rho_1$, which we can call $q_0$ and $q_1$. 
The problem of distinguishing probability distributions given samples is called hypothesis testing, and its complexity is well understood.

One way to lower bound the number of samples needed is via the KL divergence between these distributions, which is defined as follows for two distributions $p$ and $q$:
\begin{equation}
    \D(p\;\|\;q)=\sum_j p_j \log \left(\frac{p_j}{q_j}\right).
\end{equation}

\begin{lemma}\label{lem:DKL}
    For the probability distributions $q_0$ and $q_1$ corresponding to $\rho_0$ and $\rho_1$ in \cref{eq:rho}, we have $\D(q_1 \;\|\; q_0) \leq 8\beta^2 \eps^2e^{-3\beta+2\beta\eps}$. When $\eps \leq 1/2$, we have $\D(q_1 \;\|\; q_0) \leq 8\beta^2 \eps^2 e^{-2\beta}$.
\end{lemma}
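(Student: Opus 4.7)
The plan is to directly expand the KL divergence using the very explicit form of $q_0$ and $q_1$. Writing the four diagonal entries column by column, the ratio $q_1(j)/q_0(j)$ equals $Z_0/Z_1$ for $j \in \{1,2\}$, equals $e^{2\beta\eps} Z_0/Z_1$ for $j=3$, and equals $e^{-2\beta\eps} Z_0/Z_1$ for $j=4$. Plugging into the definition of $\D$ and using $\sum_j q_1(j) = 1$, all the $\log(Z_0/Z_1)$ contributions combine, leaving
\begin{equation}
\D(q_1 \| q_0) = \log(Z_0/Z_1) + 2\beta\eps\bigl(q_1(3) - q_1(4)\bigr).
\end{equation}

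Next I would simplify each of the two pieces. For the second piece, $q_1(3) - q_1(4) = (e^{-\beta+2\beta\eps} - e^{-\beta-2\beta\eps})/Z_1 = 2e^{-\beta}\sinh(2\beta\eps)/Z_1$. For the first piece, observe that $Z_1 = e^{2\beta} + 1 + 2e^{-\beta}\cosh(2\beta\eps) \ge Z_0$, since $\cosh \ge 1$, so $\log(Z_0/Z_1) \le 0$. Discarding this nonpositive term and using the trivial bound $Z_1 \ge e^{2\beta}$ on the remaining factor $1/Z_1$ gives
\begin{equation}
\D(q_1 \| q_0) \le \frac{4\beta\eps\, e^{-\beta}\sinh(2\beta\eps)}{Z_1} \le 4\beta\eps\, e^{-3\beta}\sinh(2\beta\eps).
\end{equation}

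The final step is the elementary estimate $\sinh(x) \le x e^{x}$ for $x \ge 0$, which follows from term-by-term comparison of the Taylor series of $\sinh(x)$ and $xe^x$ (or from integrating $\cosh(t) \le e^t$ from $0$ to $x$). Applied with $x = 2\beta\eps$, this yields $\sinh(2\beta\eps) \le 2\beta\eps\, e^{2\beta\eps}$, and substituting gives the claimed bound $\D(q_1\|q_0) \le 8\beta^2\eps^2 e^{-3\beta+2\beta\eps}$. The ``when $\eps \le 1/2$'' version is then immediate: $-3\beta + 2\beta\eps \le -3\beta + \beta = -2\beta$.

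I do not expect any real obstacle here; the main thing to be careful about is keeping track of which quantity gets divided by $Z_0$ vs $Z_1$ when expanding $\D$, and choosing the right elementary bound on $\sinh$ so that the exponential factor lands with exponent $-3\beta + 2\beta\eps$ rather than something looser like $-3\beta + 4\beta\eps$.
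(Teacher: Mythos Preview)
Your proof is correct and follows essentially the same direct-computation approach as the paper. Your execution is in fact a bit cleaner: where the paper bounds $\log(Z_0/Z_1)$ via the inequality $\log x \ge 1 - 1/x$ and carries the resulting term through further algebra before discarding a nonpositive piece, you observe immediately that $Z_1 \ge Z_0$ (since $\cosh(2\beta\eps) \ge 1$) and drop the log term at the outset, after which the bound $\sinh(x) \le x e^{x}$ (equivalent to the paper's $1 - e^{-x} \le x$) finishes the job.
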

\begin{proof}This follows from a straightforward calculation.
    \begin{align}
        \D(q_1 \;\|\; q_0) 
        &= \frac{e^{2\beta}\log\frac{e^{2\beta}}{e^{2\beta}} + 1\log\frac{1}{1} + e^{-\beta+2\beta\eps}\log\frac{e^{-\beta+2\beta\eps}}{e^{-\beta}} + e^{-\beta-2\beta\eps}\log\frac{e^{-\beta-2\beta\eps}}{e^{-\beta}}}{Z_1} + \log\frac{Z_0}{Z_1} \nonumber\\
        &= \frac{e^{-\beta+2\beta\eps}2\beta\eps - e^{-\beta-2\beta\eps}2\beta\eps}{Z_1} + \log\frac{Z_0}{Z_1} \nonumber\\
        &= \frac{2\beta\eps e^{-\beta+2\beta\eps}(1 - e^{-4\beta\eps})}{e^{2\beta} + 1 + e^{-\beta+2\beta\eps} + e^{-\beta-2\beta\eps}} - \log\frac{Z_1}{Z_0} \nonumber\\
        &= \frac{2\beta\eps e^{-\beta+2\beta\eps}(1 - e^{-4\beta\eps})}{e^{2\beta} + 1 + e^{-\beta+2\beta\eps} + e^{-\beta-2\beta\eps}} 
        - \log\frac{e^{2\beta} + 1 + e^{-\beta-2\beta\eps} + e^{-\beta + 2\beta\eps}}{e^{2\beta} + 1 + 2e^{-\beta}}.
    \intertext{Now using the inequality $\log(x) \geq 1- 1/x = (x-1)/x$, which holds for $x>0$, we get}
        &\leq \frac{2\beta\eps e^{-\beta+2\beta\eps}(1 - e^{-4\beta\eps})}{e^{2\beta} + 1 + e^{-\beta+2\beta\eps} + e^{-\beta-2\beta\eps}} 
        - \frac{e^{-\beta-2\beta\eps} + e^{-\beta + 2\beta\eps} - 2e^{-\beta}}{e^{2\beta} + 1 + e^{-\beta-2\beta\eps} + e^{-\beta + 2\beta\eps}}. \\
    \intertext{The denominators in this expression are $\geq e^{2\beta}$, so we can continue}
        &\leq e^{-2\beta}\Big(2\beta\eps e^{-\beta+2\beta\eps}(1 - e^{-4\beta\eps}) - (e^{-\beta-2\beta\eps} + e^{-\beta + 2\beta\eps} - 2e^{-\beta})\Big) \nonumber\\
        &= e^{-3\beta+2\beta\eps}\Big(2\beta\eps(1 - e^{-4\beta\eps}) - (1 - e^{-2\beta\eps})^2\Big) \nonumber\\
        &\leq e^{-3\beta+2\beta\eps}2\beta\eps(1 - e^{-4\beta\eps}) \nonumber\\
        &= e^{-3\beta+2\beta\eps}2\beta\eps(1-e^{-2\beta\eps})(1 + e^{-2\beta\eps}) \nonumber\\
        &\leq e^{-3\beta+2\beta\eps}4\beta\eps(1-e^{-2\beta\eps})
        \leq  e^{-3\beta+2\beta\eps}(4\beta\eps)(2\beta\eps) \nonumber\\
        &= 8\beta^2 \eps^2e^{-3\beta+2\beta\eps},
    \end{align}
    where the last inequality used $1-e^{-x} \leq x$, which holds for $x>0$.
\end{proof}

The number of samples needed to distinguish the two probability distributions is lower bounded by the inverse of the KL divergence between the two, as we make precise in the next section, which gives us the desired lower bound for constant $N$.

\subsection{Lower bound for \texorpdfstring{$\ell_\infty$}{ell infty} error}

To prove the general lower bound for non-constant $N$, we will need Fano's lemma, and specifically we use the version in \cite[Cor.~2.6]{Tsy09}:
\begin{lemma}[Fano's lemma]
    For any $N \geq 2$, let $P_0,P_1,\ldots,P_N$ be probability distributions that satisfy
    \begin{align}
        \frac{1}{N+1}\sum_{j=1}^N \D(P_j\;\|\;P_0) \leq \alpha
    \end{align}
    for some $\alpha \in (0, \log N)$.
    Then if $p_\mathrm{error}$ denotes the minimax error of the hypothesis testing problem, or the worst-case error of distinguishing the different distributions by the best strategy, we have 
    \begin{align}
        p_\mathrm{error}
        \geq \frac{\log(N+1) - \log(2) - \alpha}{\log(N)}
        \geq 1 - \frac{\log(2) + \alpha}{\log(N)}.
    \end{align}
\end{lemma}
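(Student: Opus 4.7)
The plan is to prove this via the standard reduction to Fano's inequality from information theory. First I set up the hypothesis testing problem probabilistically: let $V$ be uniformly distributed on $\{0, 1, \ldots, N\}$, let $X$ be drawn from $P_V$ conditional on $V$, and let $\hat V = \hat V(X)$ be an arbitrary (possibly randomized) estimator. The minimax error $p_\mathrm{error}$ is lower bounded by the Bayes error $\Pr[\hat V \neq V]$ under this uniform prior (one can only do better against the worst distribution than against the average one), so it suffices to lower bound $\Pr[\hat V \neq V]$.

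The key step is to upper bound the mutual information $I(V; X)$. Writing $\bar P = \frac{1}{N+1}\sum_{j=0}^N P_j$ for the mixture, we have the identity $I(V;X) = \frac{1}{N+1}\sum_{j=0}^N \D(P_j \| \bar P)$, together with the variational inequality that for any distribution $Q$, $\sum_j \pi_j \D(P_j \| Q) \geq \sum_j \pi_j \D(P_j \| \bar P)$ (the mixture $\bar P$ minimizes the average KL divergence, by a direct computation using $\log(\bar P / Q)$). Plugging in $Q = P_0$ gives
\[
    I(V; X) \leq \frac{1}{N+1}\sum_{j=0}^N \D(P_j \| P_0) = \frac{1}{N+1}\sum_{j=1}^N \D(P_j \| P_0) \leq \alpha,
\]
since $\D(P_0 \| P_0) = 0$.

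Next I apply Fano's inequality to the Markov chain $V \to X \to \hat V$: since $V$ is uniform on $N+1$ values and, conditional on the event $\{\hat V \neq V\}$, the variable $V$ takes at most $N$ distinct values, we have
\[
    H(V \mid \hat V) \leq H_2(p_\mathrm{error}) + p_\mathrm{error} \log N \leq \log 2 + p_\mathrm{error} \log N,
\]
where $H_2$ denotes binary entropy. Combining with $H(V) = \log(N+1)$ and the data processing inequality $I(V; \hat V) \leq I(V; X) \leq \alpha$ yields
\[
    \log(N+1) - \log 2 - p_\mathrm{error} \log N \leq H(V) - H(V \mid \hat V) = I(V; \hat V) \leq \alpha.
\]
Rearranging gives the first claimed inequality, and the second follows from $\log(N+1) \geq \log N$.

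There is no real obstacle here beyond citing Fano's inequality correctly and keeping careful track of the $N$ versus $N+1$ factors (the $\log N$ in the denominator comes from the alphabet size of $V$ on the event that $\hat V$ is wrong, not from the total alphabet size). The one substantive choice is using $P_0$ as the reference distribution in the variational bound on $I(V;X)$, which is exactly what matches the form of the hypothesis assumed in the lemma.
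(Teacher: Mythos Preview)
Your proof is correct and follows the standard textbook derivation of this bound (uniform prior, the variational bound $I(V;X)\le \frac{1}{N+1}\sum_j \D(P_j\|P_0)$, then Fano's inequality). The paper itself does not prove this lemma at all; it simply quotes it from \cite[Cor.~2.6]{Tsy09}, so there is no ``paper's own proof'' to compare against --- your argument is exactly the kind of proof one finds in Tsybakov's book, and the bookkeeping with $N$ versus $N+1$ is handled correctly.
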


We're now ready to establish a more precise version of \Cref{thm:lowerbound} for $\ell_\infty$ error $\eps$.

\begin{theorem}\label{thm:lowerboundinfty}
    For any $\eps \in (0,1/2]$, $\beta>0$, $\delta>0$, and $\numQubits$, there exists a 2-local Hamiltonian on $2\numQubits$ qubits such that the sample complexity of learning its coefficients to $\ell_\infty$ error $\eps$ with probability at least $1-\delta$ is $\Omega \left( \frac{\exp(2\beta)}{\beta^2\eps^2} \log \left(\frac{\numQubits}{\delta}\right) \right)$.
\end{theorem}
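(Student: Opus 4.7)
\textbf{Proof proposal for \Cref{thm:lowerboundinfty}.}

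The plan is to amplify the $2$-qubit construction from \Cref{sec:warmup} by taking $N$ independent copies in parallel, and then apply Fano's lemma together with a standard two-point bound to obtain, respectively, the $\log N$ and $\log(1/\delta)$ contributions. Concretely, I would partition the $2N$ qubits into $N$ disjoint pairs and define $N+1$ Hamiltonians $H^{(0)},H^{(1)},\ldots,H^{(N)}$ on $2N$ qubits as follows. Let $H^{(0)}$ be the sum, over all $N$ pairs, of a copy of the $H_0$ from \cref{eq:Hone}. For $j \in \{1,\ldots,N\}$, let $H^{(j)}$ be obtained from $H^{(0)}$ by replacing the Hamiltonian acting on pair $j$ with $H_1$. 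Each $H^{(j)}$ is $2$-local, has coefficients in $[-1,1]$, and any two distinct $H^{(j)},H^{(j')}$ differ in some coefficient by exactly $\eps$; thus any algorithm achieving $\ell_\infty$ error $< \eps/2$ can distinguish the $N+1$ hypotheses, and therefore the sample complexity of our problem lower bounds the sample complexity of this hypothesis testing task.

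Next I would use the tensor structure of the Gibbs state. Because each $H^{(j)}$ is a sum of pairwise-disjoint Hamiltonians, its Gibbs state factorizes as a tensor product of $2$-qubit Gibbs states, and only one factor differs between $H^{(0)}$ and $H^{(j)}$. Since KL divergence is additive under product distributions, $\D(P_j \,\|\, P_0) = \D(q_1 \,\|\, q_0) \le 8\beta^2\eps^2 e^{-2\beta}$ by \Cref{lem:DKL}. If an algorithm uses $S$ samples, then $S$ copies of the Gibbs state produce a product distribution, whose pairwise KL divergence is bounded by $\alpha := 8 S \beta^2\eps^2 e^{-2\beta}$.

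Now apply Fano's lemma: the minimax error probability satisfies $p_{\text{error}} \ge 1 - (\log 2 + \alpha)/\log N$. For this to be at most $\delta$ we need $\alpha \ge (1-\delta)\log N - \log 2$, yielding
\begin{equation}
S \;=\; \Omega\!\left(\frac{e^{2\beta}\log N}{\beta^2\eps^2}\right).
\end{equation}
To pick up the $\log(1/\delta)$ factor, I would run a separate two-point argument between $P_0$ and $P_1$ (the single-pair case of the warmup): distinguishing two product distributions $P_0^{\otimes S}$ and $P_1^{\otimes S}$ with total-variation success $\geq 1-\delta$ requires $\D(P_1^{\otimes S}\,\|\,P_0^{\otimes S}) = S \cdot \D(q_1\|q_0) = \Omega(\log(1/\delta))$ (e.g.\ by Pinsker combined with the standard hypothesis-testing lower bound $p_{\text{error}} \ge \tfrac12(1-\mathrm{TV})$, or directly by Le~Cam). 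This gives $S = \Omega(e^{2\beta}\log(1/\delta)/(\beta^2\eps^2))$, and taking the maximum of the two lower bounds yields the claimed $\Omega(e^{2\beta}\log(N/\delta)/(\beta^2\eps^2))$.

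The main technical obstacle is bookkeeping rather than a deep difficulty: verifying that the Fano argument applies cleanly with $S$ samples (KL divergence tensorizes and sums), and combining the two regimes (small $N$ large $\delta$ vs.\ large $N$ small $\delta$) so that the answer reads as a single $\log(N/\delta)$ instead of a sum. The KL estimate itself is already proved in \Cref{lem:DKL}, and the tensor-factorization of the Gibbs state of a disjoint-support Hamiltonian is immediate, so no further calculation is needed.
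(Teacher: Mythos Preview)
Your proposal is essentially correct and, for the $\log N$ part, matches the paper's argument exactly: same $N+1$ hypotheses built from disjoint $2$-qubit pairs, same tensorization of the Gibbs state and additivity of KL, same invocation of Fano together with \Cref{lem:DKL}.

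Where you diverge from the paper is in obtaining the $\log(1/\delta)$ dependence. The paper does \emph{not} run a separate two-point argument; instead it uses a reduction: given an algorithm that succeeds with probability $1-\delta$ on $N$-qubit instances using $T$ samples, it builds an algorithm for the $(N/3\delta)$-qubit instance with constant success probability by splitting into $1/(3\delta)$ blocks of size $N$, running the assumed learner on each block, and union-bounding the failure. Since one sample of the big instance supplies one sample for every block simultaneously, the sample cost stays $T$, and the already-proved constant-probability lower bound on the big instance forces $T = \Omega\!\bigl(\tfrac{e^{2\beta}}{\beta^2\eps^2}\log(N/\delta)\bigr)$. This is slick because it reuses Fano and avoids any new inequality.

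Your two-point route also works, but the justification you name is not the right one: Pinsker gives $\mathrm{TV}\le\sqrt{\D/2}$, which combined with $p_{\mathrm{error}}\ge\tfrac12(1-\mathrm{TV})$ only yields $\D\ge 2(1-2\delta)^2$, not $\D=\Omega(\log(1/\delta))$. What you actually need is the Bretagnolle--Huber inequality $1-\mathrm{TV}(P,Q)\ge\tfrac12 e^{-\D(P\|Q)}$, which then gives $p_{\mathrm{error}}\ge\tfrac14 e^{-S\D(q_1\|q_0)}$ and hence $S=\Omega\!\bigl(e^{2\beta}\log(1/\delta)/(\beta^2\eps^2)\bigr)$. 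With that correction your argument is complete, and combining the two lower bounds via a maximum is a perfectly valid way to get $\Omega\!\bigl(\tfrac{e^{2\beta}}{\beta^2\eps^2}\log(N/\delta)\bigr)$.
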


\begin{proof}
We divide our $2N$ qubits into $N$ pairs and consider Hamiltonians that are either $H_0$ or $H_1$, as defined in \Cref{sec:warmup}, on each pair. We consider $N+1$ possible Hamiltonians, corresponding to all pairs having Hamiltonian $H_0$, or all but one pair having Hamiltonian $H_0$ and one pair of qubits having Hamiltonian $H_1$. So the potential Gibbs states produced are of the form $\rho_0 \otimes \cdots \otimes \rho_0$ or $\rho_0 \otimes \ldots \otimes \rho_0 \otimes  \rho_1 \otimes  \rho_0 \otimes  \ldots \otimes  \rho_0$ where $\rho_1$ is the $i$th copy for $i \in [N]$. As noted, learning the Hamiltonian to $\ell_\infty$ error $\eps/2$ allows us to distinguish all these distributions. We will show that these distributions are hard to distinguish unless we have enough samples.

Consider the problem of distinguishing between $N+1$ distributions $P_0,P_1,\ldots,P_N$, where each $P_i$ is $S$ independent copies of a distribution $p_i$ over $N$ qubits. These distributions $p_i$ correspond to the $2N$-bit probability distributions that we get from the density matrix that has $\rho_0$ on all qubits and $\rho_1$ on the $i$th qubit. Since these are diagonal density matrices, we'll just think of them as probability distributions over $2SN$ bits. 

To employ Fano's lemma, we need to bound $\D(P_j\;\|\;P_0)$. Since each $P_i$ is simply $S$ copies of a distribution $p_i$, we have $\D(P_j\;\|\;P_0) = S \D(p_j\;\|\; p_0)$ due to the chain rule for KL divergence. Any $p_j$ with $j\neq 0$ and $p_0$ only differ at one site, where one distribution is $q_1$ and the other is $q_0$, so by the chain rule again we have 
$\D(p_j\;\|\; p_0) = \D(q_1\;\|\; q_0)$, which we have already computed in \Cref{lem:DKL}. Thus we have that 
\begin{equation}
    \D(P_j\;\|\;P_0) = S \D(p_j\;\|\; p_0) = S \D(q_1\;\|\; q_0) \leq 8S \beta^2 \eps^2 e^{-2\beta}.
\end{equation}
Then we can take $\alpha$ to be this value and apply Fano's lemma to get
\begin{align}
    p_\mathrm{error} &\geq 1 - \frac{\log(2) + \alpha}{\log(N)} \nonumber\\
    &= 1 - \frac{\log(2) +  8S \beta^2 \eps^2 e^{-2\beta}}{\log(N)}.
\end{align}
This error can be a small constant only if $S = \Omega\left(\frac{e^{2\beta}\log(N)}{\beta^2\eps^2}\right)$. This gives us the lower bound for constant $\delta$.

To get the $\delta$ dependence, we show a reduction to this case. Assume there is an algorithm that can solve the Hamiltonian learning problem with error $\delta$ on the above instance on $N$ qubits using $T$ samples. Let's use the same algorithm to solve the hard instance we constructed above on $N/(3\delta)$ qubits with probability $2/3$. For this problem we already have a lower bound of $\Omega \left( \frac{\exp(2\beta)}{\beta^2\eps^2} \log \left(\frac{\numQubits}{\delta}\right) \right)$, which we get by replacing $N$ by $N/(3\delta)$ in our previous lower bound.

We can split this problem up into $1/(3\delta)$ instances of size $N$, and apply the assumed algorithm that solves $N$-size instances with error $\delta$. This algorithm needs $T$ samples of the each of the $1/(3\delta)$ $N$-qubit Hamiltonians, but each sample of the $N/(3\delta)$-qubit Hamiltonian provides one sample each for the $N$-qubit Hamiltonians. So the sample complexity of our new algorithm remains $T$. Finally, this algorithm learns all $1/(3\delta)$ $N$-qubit Hamiltonians with error probability at most $\delta$ per instance. So by the union bound, it correctly learns all $1/(3\delta)$ instances with error at most $1/3$. Thus our assumed algorithm solves the Hamiltonian learning problem on $N/(3\delta)$ qubits and hence must use $\Omega \left( \frac{\exp(2\beta)}{\beta^2\eps^2} \log \left(\frac{\numQubits}{\delta}\right) \right)$ samples.
\end{proof}

\begin{remark}
The lower bound above even applies to a slightly more general learning setting where
we can choose different $\beta$ (inverse temperature) for different samples.
This scenario may arise in a physical situation where one wishes 
to examine the temperature dependence of some observable's expectation value to learn the Hamiltonian.
In this case, the probability distributions~$P_j$ that we will distinguish 
is a product of $p_j$ at possibly different temperatures.
The KL divergence can be upper bounded similarly, 
and in the application of Fano's lemma we can take the maximum of the upper bounds on the KL divergence.
If the temperatures are chosen nonadaptively,
i.e., $\beta$ are chosen beforehand and the samples are prepared for us accordingly,
then the sample complexity is $\Omega( \min_\beta \frac{e^{2\beta}}{\beta^2 \eps^2} \log \numQubits )$
for a constant probability of success,
where the minimum is taken over $\beta$ that are used in the samples. 
\end{remark}

\subsection{Lower bound for \texorpdfstring{$\ell_2$}{ell 2} error}

Our lower bound for $\ell_2$ error $\eps'$ builds on the previous construction. Let us use $\eps'$ to denote the $\ell_2$ error and reserve $\eps$ to be the parameter that appears in the definition of $H_1$ in \cref{eq:Hone}.

\begin{theorem}\label{thm:lowerboundtwo}
    For any $\beta > 0$, $\numQubits$, and $\eps' \in (0,0.01 \sqrt{N}]$, there exists a 2-local Hamiltonian on $2\numQubits$ qubits such that the sample complexity of learning its coefficients to $\ell_2$ error $\eps'$ with probability $\geq \frac{2}{3}$ is $\Omega \left( \frac{\exp(2\beta)}{\beta^2\eps'^2} \numQubits \right)$.
\end{theorem}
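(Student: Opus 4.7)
The plan is to bootstrap the $\ell_\infty$ lower bound of \cref{thm:lowerboundinfty} to an $\ell_2$ bound using a good binary error-correcting code. The heuristic gain is that by restricting attention to a family of Hamiltonians whose coefficient vectors are pairwise well-separated in $\ell_2$ (like codewords of a code with linear minimum distance), we can afford to shrink the per-pair signal strength $\eps$ by a factor of $\sqrt{N}$, which reduces the KL divergence between nearby Gibbs states without reducing the log-size of our hypothesis class.

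Concretely, I would partition $2N$ qubits into $N$ disjoint pairs and invoke a binary code $C \subseteq \{0,1\}^N$ of positive rate $R = 1 - H(\delta) = \Omega(1)$ and positive relative distance $\delta \in (0, 1/2)$, which exists by the Gilbert--Varshamov bound; concretely take $\delta = 1/10$. For each $s \in C$, define $H_s$ on $2N$ qubits by placing the warmup Hamiltonian $H_{s_i}$ from \cref{eq:Hone} on the $i$-th pair, where $H_0, H_1$ share a common parameter $\eps > 0$ to be chosen. Since $H_0$ and $H_1$ differ in exactly two coefficients (those of $I \otimes Z$ and $Z \otimes Z$) by $\pm\eps$, the induced coefficient vectors satisfy
\begin{equation}
    \|\lambda_s - \lambda_{s'}\|_2 = \eps\sqrt{2\, d_H(s,s')} \ge \eps\sqrt{2\delta N} \quad\text{for all distinct } s, s' \in C.
\end{equation}
Choosing $\eps := 2\eps'/\sqrt{\delta N}$, the right-hand side exceeds $2\eps'$, so an $\ell_2$-learner with error $\eps'$ can be post-processed (by decoding to the nearest codeword) into a hypothesis tester that identifies $s$ exactly. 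The hypothesis $\eps' \le 0.01\sqrt{N}$ ensures $\eps = 2\eps'/\sqrt{\delta N} \le 0.02/\sqrt{\delta} < 1/2$, so the warmup KL estimate $\D(q_1 \| q_0) \le 8\beta^2\eps^2 e^{-2\beta}$ from \cref{lem:DKL} applies.

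Next, I would apply Fano's lemma to the family of $S$-fold product distributions $P_s := \rho_s^{\otimes S}$ for $s \in C$, using $P_0 := \rho_{0^N}^{\otimes S}$ (the all-zeros codeword, which we can assume without loss of generality lies in $C$) as the reference. By the chain rule for KL divergence across both samples and qubit pairs,
\begin{equation}
    \D(P_s \,\|\, P_0) = S \cdot d_H(s,0) \cdot \D(q_1 \,\|\, q_0) \le 8 S N \beta^2 \eps^2 e^{-2\beta} = \frac{32 S \beta^2 \eps'^2 e^{-2\beta}}{\delta},
\end{equation}
using $d_H(s,0) \le N$ and the substitution for $\eps$. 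Hence the average divergence $\alpha$ in Fano's lemma is $O(S\beta^2\eps'^2 e^{-2\beta})$. Since $\log|C| \ge RN = \Omega(N)$, Fano's lemma forces the minimax error probability above $1/3$ unless $\alpha = \Omega(N)$, which in turn forces
\begin{equation}
    S = \Omega\!\left(\frac{e^{2\beta} N}{\beta^2 \eps'^2}\right),
\end{equation}
as claimed.

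The main points that need care are not conceptually difficult but must be verified explicitly: (i) checking that the chosen $\delta$ and the theorem's constraint $\eps' \le 0.01\sqrt{N}$ are compatible with $\eps \le 1/2$ so that \cref{lem:DKL} applies; (ii) confirming that $|C| \ge 2^{RN}$ is large enough ($\ge 3$) for Fano to be invoked, which holds for all sufficiently large $N$ (for small $N$ the claim follows from the $\ell_\infty$ bound since $\|\cdot\|_\infty \le \|\cdot\|_2$); and (iii) ensuring the code contains the all-zeros codeword, which holds whenever we take $C$ to be a linear code from Gilbert--Varshamov. None of these presents a serious obstacle, so the argument reduces cleanly to combining the warmup KL calculation with a counting of codewords.
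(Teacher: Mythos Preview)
Your proposal is correct and takes essentially the same approach as the paper: both use a good binary code of relative distance $\delta = 1/10$ to index a family of $2N$-qubit Hamiltonians, choose $\eps \asymp \eps'/\sqrt{N}$ so that $\ell_2$-learning to error $\eps'$ decodes the codeword, and then apply Fano's lemma with the KL bound from \cref{lem:DKL} and $\log|C| = \Omega(N)$. The only cosmetic difference is that the paper decodes using just one of the two differing coefficients per pair, whereas you use both (hence your extra $\sqrt{2}$), but this changes nothing of substance.
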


\begin{proof}
To get this result for $\ell_2$ error $\eps'$, we consider a different collection of Hamiltonians. Consider an error correcting code $C$ over $N$ bits that encodes $\Omega(N)$ logical bits and has code distance at least $N/10$.\footnote{In other words, let $C$ be a set of $2^{\Omega(N)}$ length-$N$ bitstrings such that any two elements of $C$ differ in $N/10$ bits.} We know such codes exist that achieve the Hamming bound and have size $|C| = \Theta\left(\frac{2^N}{\sum_{t=0}^{0.05N-1} \binom{N}{k}}\right) \gtrsim \frac{2^{cN}}{\sqrt{N}} = 2^{\Omega(N)}$.

Consider a Hamiltonian on $2N$ qubits that is specified by a codeword $x\in \{0,1\}^N$. We divide the $2N$ qubits into pairs again and each pair will have Hamiltonian either $H_0$ or $H_1$ as before. Recall that $H_1$ depends on a parameter $\eps$, which will be different from $\eps'$ and will be chosen later. 

The Hamiltonian for the first pair of qubits is $H_{x_1}$, for the next pair is $H_{x_2}$ and so on. So just as before, the Gibbs state will be $\rho_{x_1} \otimes \rho_{x_2} \otimes \cdots \otimes \rho_{x_N}$, and as before, these are diagonal states, so the resulting probability distributions will be $q_{x_1} \otimes q_{x_2} \otimes \cdots \otimes q_{x_N}$. 

Just like before, we want to show that identifying the Hamiltonian (with probability $\geq 2/3$), which is equivalent to identifying the codeword $x$ from which the Hamiltonian was constructed, requires many samples.
We claim that if we learn the Hamiltonian to $\ell_2$ error $\eps' = 0.01 \sqrt{N} \eps$, then we can exactly identify the string $x$ (with probability $\geq 2/3$). This step converts learning with $\ell_2$ error to exact identification and this conversion dictates the value of $\eps$ in our definition of $H_1$.

Consider the unknown Hamiltonian on the first pair of qubits, $H_{x_1}$. This has two unknown coefficients, which are $-1/2+\eps x_0$ and $-1/2-\eps x_0$. Let's only consider the problem of learning the first of these coefficients for all our Hamiltonians $H_{x_i}$. Now if we have learned the coefficients to $\ell_2$ error $\eps'$, it means we have a string $\lambda_i$ that satisfies $\sqrt{\sum_i (\lambda_i + 1/2 - \eps x_0)^2} \leq \eps'$. By setting $y_i = \eps(\lambda_i + 1/2)$, this means we have a string $y \in \mathbb{R}^N$ that satisfies $\sqrt{\sum_i (y_i-x_i)^2} \leq \eps'/\eps$. 

We now use the property that $x$ is a codeword of an error correcting code with large distance, so we can identify $x$ given a close enough $y$. We know that any two codewords are at least $N/10$ apart in Hamming distance. This means any two codewords are at least $\sqrt{N/10}$ apart in $\ell_2$ distance. Hence if we have a point in $\mathbb{R}^N$ (not just on the Boolean hypercube) that is $\ell_2$ distance strictly less than $\sqrt{N/10}/2$ from a codeword $x$, it can be uniquely decoded to $x$. So, if we have a string $y \in \mathbb{R}^N$ such that $\sqrt{\sum_i (y_i-x_i)^2} \leq 0.01 \sqrt{N}$, that will suffice. Thus we can choose $\eps$ to satisfy $\eps' = 0.01 \sqrt{N} \eps$.

Now that we know that solving the Hamiltonian learning task allows us to exactly distinguish this set of Hamiltonians, let's show that distinguishing the Gibbs states of this set of Hamiltonians requires many samples using Fano's lemma.

To employ Fano's lemma, we need to bound the pairwise KL divergences again. We now consider $2^{\Omega(N)}$ probability distributions $p_x$, each corresponding to the Gibbs state of the Hamiltonian constructed from a codeword $x \in C$. Without loss of generality let us assume that $x=0^N$ is part of the code, and let $p_0$ refer to the distribution corresponding to this Hamiltonian. As before, we let $P_x$ be $S$ copies of $p_x$.
For any codeword $x \neq 0^N$, $\D(P_x\;\|\;P_0) = S \D(p_x\;\|\;p_0) \leq S N \D(q_1\;\|\;q_0) \leq 8 SN \beta^2 \eps^2 e^{-2\beta}$ using the chain rule and \cref{lem:DKL}.
So we can choose the parameter $\alpha$ in Fano's lemma to be $8 SN \beta^2 \eps^2 e^{-2\beta} = O(S \beta^2 \eps'^2 e^{-2\beta})$.
Applying Fano's inequality, we get
\begin{align}
    p_\mathrm{error}  = 1 - \frac{\log(2) + O(S(e^{-2\beta}\beta^2\eps'^2))}{\log(2^{\Omega(N)})},
\end{align}
which can be a small constant only if $S =\Omega\left(\frac{\exp(2\beta) N}{\beta^2\eps'^2}\right)$.
\end{proof}

\section{Discussion}

In this paper, we have addressed the Hamiltonian learning problem in a high-temperature regime.
We have analyzed an algorithm to show that it has optimal sample complexity and time complexity.
We were able to claim time optimality because our time complexity is simply linear in the sample size,
the number of qubits in the total of all samples used in the algorithm.
The critical temperature above which our algorithm is guaranteed to work
depends only on the degree of the dual interaction graph,
which we have treated as a constant in the optimality claims for sample and time complexity.

Although our algorithm is optimal for any fixed $\degree$,
it might be possible to enlarge the temperature domain where our method works.
The critical temperature to guarantee the convergence of the Newton--Raphson method
is higher than that to ensure the convergence of the $\beta$-series expansion of $\langle E_a \rangle$;
the former is $O(\degree^{10})$ (\cref{thm:newton}) while the latter is $O(\degree^2)$ (\cref{thm:expectation-value}).
This rather large discrepancy occurred when we used the ``band-diagonal'' property of the Jacobian of $\fun$,
and it will require finer understanding of these correlations
to improve our bounds in terms of $\degree$.
It is also feasible to extend our algorithm beyond low-intersection Hamiltonians to local Hamiltonians, where $\degree$ need not be constant, since cluster expansion works in the more general setting where one-spin energy is bounded.
However, the number of monomials still scales exponentially in $\degree$, so even writing down the truncated Taylor series expansion could be computationally expensive.
\todo{Check this statement; since a reviewer asked, it might be worth asserting that, say, the strong convexity bound likely still holds in this slightly more general setting studied in \cite{kkb20})?}

The problem of finding an efficient learning algorithm in the low-temperature regime remains completely open.
Our high-temperature expansion does not converge in general for large $\beta$
since there are systems that undergo phase transitions as we lower the temperature,
where the partition function is not analytically continued from the high-temperature domain.
In fact, an efficient algorithm for all temperatures, if it exists, 
should not attempt to evaluate partition functions 
since low-temperature partition functions are generally (at least) NP-hard to compute.
The classical polynomial-time algorithms avoid evaluating partition functions using conditional independence (the Markov property), but this does not hold in general for quantum noncommuting Hamiltonians.

\section*{Acknowledgements}

R.K.\ and E.T.\ thank Marcus Silva for early discussions about this problem. R.K.\ thanks Vamsi Pritham Pingali for many helpful discussions about this problem and multivariable calculus. E.T.\ thanks Anurag Anshu for the question about strong convexity of the log-partition function and Adam Klivans for discussions about the state of the art in learning classical Hamiltonians. We also thank Hsin-Yuan Huang for raising the question of learning a Hamiltonian from its real-time evolution.

\appendix

\section{Learning Hamiltonians from real time dynamics}
\label{sec:realtime}

Suppose we are given a blackbox that implements unitary time evolution
\begin{align}
U = e^{-i t H}
\end{align}
governed by a fixed, time-independent, unknown Hamiltonian $H = \{ (a, E_a, \lambda_a) : a \in [M] \}$.
We assume that the evolution time~$t$ is known to us,
and the Hamiltonian follows the same normalization as in the main text:
$E_a$ are distinct Pauli matrices and $\lambda_a \in [-1,1]$ for all~$a$.
The blackbox converts any input state represented by a density matrix~$\rho$ to $U \rho U^\dagger$.
Now, the learning problem is to estimate~$\lambda_a$ to additive accuracy~$\varepsilon$
with as few uses of blackbox~$U$ as possible.

We consider a scenario where~$t$ is smaller than some constant $t_c$ that only depends on
the structure of Hamiltonian terms~$E_a$ but not on the coefficients~$\lambda_a$.
The learning algorithm and its analysis will be very similar to that in the main text,
so we will be brief.
We restate the theorem we will prove.

\algodynamics*

\begin{remark}\label{rem:tlinear}
In what follows below, we prove \cref{thm:algo-dynamics} with $t$ replaced by $t^2$.
We can improve this dependence from $1/t^2$ to $1/t$ by reducing to a setting where $t$ is constant: by applying $U$ $n = \lfloor t_c / t \rfloor$ times, we can produce a black box for the unitary $V = U^n = \exp(-iHt\lfloor t_c/t\rfloor)$.
Learning parameters from $V$ is the same problem as learning parameters from $U$, except the parameter $t$ becomes $t \lfloor t_c / t \rfloor = \Theta(t_c)$, which is constant (determined only by $\degree$).
So, we can use the algorithm below with the unitary $V$, requiring $O(\frac{1}{\eps^2}\log\frac{\numQubits}{\delta})$ applications of $V$, and therefore $O(\frac{1}{t\eps^2}\log\frac{\numQubits}{\delta})$ applications of $U$.
The time complexity is also inflated by $n = \Theta(1/t)$ in a similar fashion.

Note that the complexity in the time parameter is optimal;
two time-evolution operators $I$ and $e^{-it Z}$ on one qubit differ by $O(t)$ 
in operator norm and hence also in completely bounded (diamond) norm as quantum channels.
\end{remark}

\subsection{Series expansion of time-evolved operators}

Similarly to \cref{thm:expectation-value} for the $\beta$-series expansion of $\Tr( E_a e^{-\beta H} ) / \Tr e^{-\beta H}$, in this section we prove properties about the $t$-series expansion of~$U P U^\dagger$ where~$P$ is a single-qubit Pauli operator.
The relevance of this quantity to the learning problem will be evident in the next subsection.
In the following theorem, the pink text indicates where it differs from \cref{thm:expectation-value}; morally, the same properties are proven, just for a different series.
All of the quantitative bounds are at least as strong as those in \cref{thm:expectation-value}, and though we prove results for a matrix-valued polynomial, they are indeed comparable when considering their trace against the operator $Q$, as defined in $\fun(Q, P)$.

\begin{theorem}\label{thm:real-time}
    Consider a Hamiltonian $\{(a,E_a,\lambda_a):a\in[\numTerms]\}$. 
    Then, for every single-qubit Pauli operator\todo{ewin: generalize?} $P$ and $L$-qubit Pauli operator $Q$, we have a Taylor series expansion
    \begin{align}
        \textdiff{UPU^\dagger}
        &\textdiff{= \sum_{m = 1}^\infty t^mq_m(\lambda_1,\ldots,\lambda_M)} \label{eq:real-time} \\
        \textdiff{\fun(Q,P)} &\textdiff{:= \frac 1 \fulldimension \Tr( Q UPU^\dagger) = \frac{1}{\fulldimension} \sum_{m=1}^\infty \Tr(Q q_m(\lambda))}
    \end{align}
    where equality holds whenever the series converges absolutely.
    For any $m \in \ZZ_{>0}$, the following hold:
    \begin{enumerate}[ref=\thetheorem (\arabic*)]
        \item $q_m \in \textdiff{\CC^{\fulldimension \times \fulldimension}}[\lambda_1,\ldots,\lambda_\numTerms]$ is a degree $m$ homogeneous \textdiff{matrix-valued} polynomial in the Hamiltonian term coefficients.\label{thm:real-time1}
        \item Let~$\graph(P)$ denote the dual interaction graph among operators $\{E_1,\ldots,E_\numTerms, P\}$,
        i.e., $\graph(P)$ is $\graph$ with an extra node~$p$ and an extra edge $(p,a)$ if and only if $\Supp(E_a) \cap \Supp(P) \neq \varnothing$.
        Then $q_m$ involves $\lambda_a$ only if the distance between $p$ and $a$ on $\textdiff{\graph(P)}$, $\dist_{\graph(P)}(p,a)$, is at most $m$.\label{thm:real-time2}
        \item $q_m$ consists of at most $\textdiff{\max(L, \degree)e\degree(1+e(\degree-1))^{m-1}} \leq e\degree(1+e(\degree-1))^{m}$ monomials.\label{thm:real-time3}
        \item The coefficient matrix in front of any monomial of $q_m$ has \textdiff{spectral norm} at most $\textdiff{2^m}$ in magnitude.\label{thm:real-time4}
    \end{enumerate}
    Suppose further that every $E_a$ is a tensor product of Pauli matrices, supported on at most $L$ qubits.
    Then, after $O(L\numTerms\degree\log\degree)$ pre-processing time (see \cref{rm:ham-repr}), the following are true for every $m \in \ZZ_{> 0}$.
    \begin{enumerate}[label=\Alph*.,ref=\thetheorem (\Alph*)]
    \item The list of monomials that appear in $q_m$ can be enumerated 
    in time $O(m \degree C)$, where $C$ is the number of monomials (so, in particular, in time $O(m \degree^2 (1+e(\degree - 1))^m)$). \label{thm:real-timeA}
    \item \textdiff{The truncated series of $\fun(Q,P)$, $\frac{1}{\fulldimension} \sum_{\ell=1}^m \Tr(Q q_\ell(\lambda))$, can be computed exactly as a rational polynomial in $C(4^m + L)\poly(m)$ time.}\todo{ewin: todo give a more specific runtime.} \label{thm:real-timeB}
    \end{enumerate}
\end{theorem}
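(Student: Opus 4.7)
The plan is to derive the series expansion of $UPU^\dagger$ directly from the Taylor expansion of the matrix exponential, which bypasses the connected-cluster reformulation that complicated \cref{sec:cluster}: since no logarithm appears here, arbitrary (not necessarily connected) monomials may a priori show up, but the nested-commutator structure itself enforces the locality we need. The identity
\begin{equation}
    e^{-itH}Pe^{itH} = \sum_{m \ge 0} \frac{(-it)^m}{m!}\,\mathrm{ad}_H^m(P), \qquad \mathrm{ad}_H(X):=[H,X],
\end{equation}
converges absolutely for all $t \in \CC$ since $\|\mathrm{ad}_H^m(P)\| \le (2\numTerms)^m$. Expanding $H=\sum_a \lambda_a E_a$ inside each commutator yields
\begin{equation}
    q_m(\lambda) = \frac{(-i)^m}{m!}\sum_{a_1,\ldots,a_m} \lambda_{a_1}\cdots\lambda_{a_m}\,[E_{a_1},[E_{a_2},\ldots,[E_{a_m},P]\cdots]],
\end{equation}
a manifestly homogeneous degree-$m$ matrix-valued polynomial, settling \cref{thm:real-time1}.

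Locality (\cref{thm:real-time2}) and the coefficient bound (\cref{thm:real-time4}) both follow from elementary commutator manipulations. Inducting from the innermost commutator outward, the support of $[E_{a_1},\ldots,[E_{a_m},P]\cdots]$ is contained in $\Supp(P) \cup \bigcup_j \Supp(E_{a_j})$, and the whole nested commutator vanishes unless at each resolution step the newly added operator overlaps the current partial support; hence only tuples corresponding to walks of length $m$ in $\graph(P)$ starting at $p$ contribute. For the coefficient bound, the submultiplicativity $\|[A,B]\|\le 2\|A\|\|B\|$ together with $\|E_a\|,\|P\|\le 1$ gives $\|[E_{a_1},\ldots,[E_{a_m},P]\cdots]\| \le 2^m$; gathering the $m!/\cluster V!$ orderings of a multiset $\cluster V$ and dividing by $m!$ yields a coefficient matrix of operator norm at most $2^m/\cluster V! \le 2^m$. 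The count in \cref{thm:real-time3} then reduces to counting connected rooted clusters at $p$ in $\graph(P)$: stripping the root $p$, whose degree in $\graph(P)$ is at most $\max(L,\degree)$, and applying \cref{statement:count-clusters} to $\graph$ at the first step out of $p$ delivers the stated bound.

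For the algorithmic statements, \cref{thm:real-timeA} is immediate by invoking \cref{alg:cluster-enumeration} on $\graph(P)$ rooted at $p$. For \cref{thm:real-timeB}, the key observation is that a nested commutator of Pauli operators is itself (up to a $\pm1,\pm i$ phase) a single Pauli, storable in $O(Lm)$-bit symplectic form; we can thus symbolically propagate a breadth-first search over walks of length $\le m$ from $p$ in $\graph(P)$, accumulating phases at each monomial $\lambda^{\cluster V}$. The contraction $\Tr(Q\cdot\text{Pauli})/\fulldimension$ is nonzero only when the Pauli equals $Q$ up to phase, and equals $\pm 1$ or $\pm i$ when it does, so no matrix arithmetic is ever required. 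Amortizing shared prefixes across walks (so that each nested commutator is formed from a previously computed one in $O(L)$ symplectic operations) and combining with the cluster-count bound from \cref{thm:real-time3} then gives the claimed $C(4^m+L)\poly(m)$ runtime. The main obstacle I expect is this amortization argument: getting per-monomial cost independent of the number of orderings requires carefully merging walks that produce the same cluster, rather than iterating over clusters and orderings separately --- the analytic parts (the series, locality, and operator-norm bounds) are essentially bookkeeping once the nested-commutator expansion is written down.
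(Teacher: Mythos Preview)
Your treatment of parts (1)--(4) and (A) matches the paper's essentially line for line: the identity $UPU^\dagger=\sum_{m\ge 0}\frac{(-it)^m}{m!}[H,P]_m$, the vanishing of a nested commutator $[E_{a_1},[\ldots,[E_{a_m},P]\ldots]]$ whenever some $E_{a_j}$ is disjoint from $\Supp(P)\cup\bigcup_{k>j}\Supp(E_{a_k})$, the $2^m$ norm bound via $\|[A,B]\|\le 2\|A\|\|B\|$, and the cluster count by invoking \cref{statement:count-clusters} on $\graph(P)$ after peeling off the root~$p$. (One terminological quibble: the nonvanishing condition is that each $E_{a_j}$ overlap the \emph{running} support $\Supp(P)\cup\bigcup_{k>j}\Supp(E_{a_k})$, which is weaker than being a walk in $\graph(P)$ in the usual sense; this does not affect your conclusions for (2)--(4), since either condition forces connectivity of $\cluster V\cup\{p\}$.)

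The gap is in (B), and it is precisely the obstacle you flagged but did not resolve. A monomial $\lambda^{\cluster V}$ with $|\cluster V|=m$ is hit by up to $m!/\cluster V!$ ordered tuples, and different orderings of the same cluster generally produce \emph{different} Paulis, so there is no single phase to ``accumulate at each monomial $\lambda^{\cluster V}$.'' Prefix-sharing over ordered tuples does not save you: the tree of valid partial tuples can have $\Theta(m!)$ leaves per cluster (e.g.\ when $\cluster V$ induces a star or near-clique), and nothing in your argument bounds the number of orderings by $4^m\poly(m)$ rather than $m!$. So the step ``combining with the cluster-count bound from \cref{thm:real-time3} then gives the claimed $C(4^m+L)\poly(m)$ runtime'' does not follow.

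The paper avoids enumerating orderings altogether. It maintains $[H,P]_n$ as $\sum_{\cluster V}\lambda^{\cluster V}X_{\cluster V}$, where each $X_{\cluster V}$ is an \emph{integer matrix} in a faithful representation on at most $n{+}1$ qubits, and passes to $[H,P]_{n+1}$ by computing $[E_a,X_{\cluster V}]$ for each neighboring $a$ and merging by monomial; each such commutator costs $O(4^n\poly(n))$. The key is that $X_{\cluster V}$ already encodes the sum over all orderings of $\cluster V$, so the linearity of $X\mapsto[H,X]$ collapses the combinatorial blowup into a single dense $2^{n+1}$-dimensional object. Your single-Pauli idea can be rescued by tracking, for each monomial, a linear combination of Paulis in faithful representation (at most $4^{n+1}$ of them)---but once you do that, you have reproduced the paper's bookkeeping in the Pauli basis, and the faithful representation is essential to cap the number of Paulis at $4^{O(m)}$ rather than $4^{O(mL)}$.
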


\noindent To understand $UPU^\dagger$, we recall the well-known formula for square matrices $A$ and $B$,
\begin{align}
	e^A B e^{-A} &= \sum_{n=0}^\infty \frac{[A,B]_n}{n!} \\
	\text{where } [A,B]_k &= 
	\begin{cases}
		B &(k = 0)\\
		A[A,B]_{k-1} - [A,B]_{k-1} A & (k \ge 1).
	\end{cases} \nonumber
\end{align}
Since the nested commutator $[A,B]_n$ has norm upper bounded by $2^n \norm{A}^n \norm{B}$,
which grows only exponentially with~$n$, 
this series always converges absolutely for any finite dimensional matrices over complex numbers.
Applying it to our case, we have
\begin{align}
	U P U^\dagger 
	&=
	\sum_{n=0}^\infty \frac{(-it)^n}{n!} [H,P]_n
	=
	P - it [H,P] - \frac{t^2}{2} [H,[H,P]] + \cdots \label{eqn:real-time-series}\\
	&=
	\sum_{\cluster V} \frac{\lambda^{\cluster V}}{\cluster V!} \mdiff_{\cluster V} (U P U^\dagger)
\end{align}
where in the second line we re-express the $t$-series as a multivariate Taylor series in~$\lambda_a$.
This is our series: \cref{thm:real-time1} follows from \cref{eqn:real-time-series} upon taking $q_m(\lambda) := \frac{(-i)^m}{m!} [H(\lambda),P]_m$.

We now examine a cluster derivative with respect to $\cluster V$, which has total weight $n = \abs{\cluster V}$.
Let us enumerate all elements of~$\cluster V$ as $a_1,a_2,\ldots,a_n$;
this is a list of nodes of~$\graph$ with no particular order 
and the elements $a_j$ are repeated as many times as their multiplicities.
In this context, the cluster derivative $\mdiff_{\cluster V} (U P U^\dagger) = [\diff_{a_1}\cdots\diff_{a_n} UPU^\dagger]|_{\lambda = (0,\ldots,0)}$ is a constant matrix which comes from evaluating a derivative at the origin of the $\lambda$-space $[-1,1]^\numTerms$.
\begin{align}
\mdiff_{\cluster V} (UPU^\dagger)
&=
\mdiff_{\cluster V} \frac{(-it)^n}{n!} [\underbrace{H,[H,\cdots[H}_n,P] \cdots ]\, ] \label{eq:diffVnc}\\
&=
\frac{(-it)^n}{n!}  \sum_{\sigma \in S_n} [\diff_{a_{\sigma(1)}} H, [\diff_{a_{\sigma(1)}} H, \cdots[ \diff_{a_{\sigma(n)}} H, P]\cdots]\,] \tag*{by the Leibniz rule}
\nonumber\\
&=
\frac{(-it)^n}{n!}  \sum_{\sigma \in S_n} [E_{a_{\sigma(1)}}, [E_{a_{\sigma(1)}}, \cdots[ E_{a_{\sigma(n)}}, P]\cdots]\,]
\nonumber
\end{align}
where $S_n$ is the permutation group on $\{1,2,\ldots,n\}$.
From \Cref{eq:diffVnc}, the rest of \cref{thm:real-time} will follow.
\begin{lemma}
For any cluster $\cluster V$ on $\graph$,
if $\cluster V \sqcup \{(p,1)\}$ is disconnected on $\graph(P)$, then $\mdiff_{\cluster V} (UPU^\dagger) = 0$.
\end{lemma}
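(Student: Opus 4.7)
The plan is to reduce to a factorization identity for $e^{-itH}$ under the commuting structure implied by disconnectedness, analogous to the argument behind \cref{statement:log-derivative-connected}.

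First, I would observe that $\mdiff_{\cluster V}$ evaluates derivatives at the origin of the $\lambda$-space, so for the purpose of computing $\mdiff_{\cluster V}(UPU^\dagger)$ I may set $\lambda_b = 0$ for every $b \notin \Supp \cluster V$. The disconnectedness hypothesis yields a partition $\Supp \cluster V = V_1 \sqcup V_2$ such that $V_2$ is nonempty and no edge of $\graph(P)$ connects $V_2$ to $V_1 \cup \{p\}$; equivalently, for every $a \in V_2$, $\Supp(E_a)$ is disjoint from $\Supp(P)$ and from $\Supp(E_b)$ for each $b \in V_1$. Define $H_1 = \sum_{a \in V_1}\lambda_a E_a$ and $H_2 = \sum_{a \in V_2}\lambda_a E_a$, so that (after zeroing out the extraneous $\lambda_b$'s) $H = H_1 + H_2$, with $[H_1,H_2] = 0$ and $[H_2,P] = 0$.

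Then since $H_1$ and $H_2$ commute, $e^{-itH} = e^{-itH_1}e^{-itH_2}$, and since $H_2$ commutes with $P$, the $H_2$-factors cancel in the conjugation:
\begin{align}
    UPU^\dagger = e^{-itH_1}e^{-itH_2}\, P\, e^{itH_2}e^{itH_1} = e^{-itH_1} P e^{itH_1},
\end{align}
which has no dependence on $\{\lambda_a\}_{a \in V_2}$. Since $V_2$ is nonempty, the cluster derivative $\mdiff_{\cluster V}$ includes at least one factor $\diff/\diff\lambda_a$ for $a \in V_2$, which annihilates this expression. Hence $\mdiff_{\cluster V}(UPU^\dagger) = 0$.

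The only subtlety I anticipate is justifying that I can ``first zero out $\lambda_b$ for $b\notin\Supp\cluster V$ and then differentiate'': this is immediate because those derivatives are taken only with respect to variables in $\Supp\cluster V$, which commute with the substitution $\lambda_b \mapsto 0$ for $b \notin \Supp\cluster V$, and the final evaluation at the origin is unaffected. Alternatively, one could give a direct combinatorial proof from \cref{eq:diffVnc}: for every permutation $\sigma \in S_n$, let $j$ be the largest index with $a_{\sigma(j)} \in V_2$; then $E_{a_{\sigma(j)}}$ commutes with $P$ and with every $E_{a_{\sigma(k)}}$ for $k>j$ (which all lie in $V_1$), so the commutator at depth $j$ vanishes and hence so does the full nested commutator. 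Either formulation yields the claim.
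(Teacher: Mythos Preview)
Your proof is correct. Your primary argument (factorize $e^{-itH}=e^{-itH_1}e^{-itH_2}$ and cancel the $H_2$-factors against $P$) is a genuinely different route from the paper's, which works directly from the nested-commutator formula \cref{eq:diffVnc}: for each permutation $\sigma$, the paper locates an index $k$ at which $a_{\sigma(k)}$ is disconnected in $\graph(P)$ from everything that follows it (including $p$), so $E_{a_{\sigma(k)}}$ commutes with the inner commutator and kills that term. Your alternative sketch at the end is essentially this argument, packaged a bit more cleanly via the fixed partition $V_1\sqcup V_2$.

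The factorization route has the virtue of being conceptually uniform with \cref{statement:log-derivative-connected} and avoids touching the series for $UPU^\dagger$ at all; the paper's term-by-term commutator argument has the virtue of living entirely inside the explicit expansion \cref{eq:diffVnc}, so no analytic manipulation (zeroing variables, interchanging limits) needs justification. Both are short and neither dominates the other.
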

\begin{proof}
Consider a term in \cref{eq:diffVnc}, which we can label as $[E_{a_{1}}, [E_{a_{2}}, \cdots[ E_{a_{n}}, P]\cdots]\,]$ without loss of generality.
If $\cluster V \sqcup \{(p,1)\}$ is disconnected, then there exists an $k \in [n]$ such that $a_k$ is disconnected from all of $a_{k+1},\ldots, a_n, p$ (otherwise, every $a_k$ would have a path to $p$ by strong induction, making the cluster connected).
Consequently, $E_{a_k}$ commutes with the intermediate commutator $C_{k+1} = [E_{a_{k+1}}, \cdots[ E_{a_n}, P]\cdots]$, which is supported on $R = \Supp(P) \cup \bigcup_{j=k+1}^n \Supp(E_{a_{j}})$.
This means that the next $C_k = [E_{a_{k}}, C_k]$ is zero and so the whole term $[E_{a_{1}}, [E_{a_{2}}, \cdots[ E_{a_{n}}, P]\cdots]\,]$ is zero.
This argument applies to every term, so the whole sum, and the cluster derivative, must also be zero.
\end{proof}

By this lemma, \cref{thm:real-time2} follows immediately, since $\lambda_a$ is present in $q_m$, then there must be a cluster $\cluster V$ of size $m$ such that $a \in \cluster V$ and $\cluster V \sqcup \{(p,1)\}$ is connected.
This implies that the distance between $a$ and $p$ is at most $m$.
Similarly, the number of monomials of $q_m$ can be bounded by the number of weight-$m$ connected clusters in $\graph$ neighboring $p$ in $\graph(P)$.
By \cref{statement:count-clusters}, this can be bounded by $\max(L,\degree+1)e \degree (1+e(\degree-1))^{m-1}$, where the additional factor of $\max(L,\degree+1)$ comes from needing to count clusters that start at any of the terms adjacent to $p$.
This gives \cref{thm:real-time3}.
The lemma below gives \cref{thm:real-time4}.

\begin{lemma}
For any cluster $\cluster V$ with $\abs{\cluster V} = n$, we have 
$\|\mdiff_{\cluster V} (UPU^\dagger)\| \le 2^n \abs{t}^n$.
\end{lemma}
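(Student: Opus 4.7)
The plan is to apply the explicit formula for the cluster derivative already derived in \cref{eq:diffVnc}, namely
\begin{align}
\mdiff_{\cluster V} (UPU^\dagger)
= \frac{(-it)^n}{n!} \sum_{\sigma \in S_n} [E_{a_{\sigma(1)}}, [E_{a_{\sigma(2)}}, \cdots [E_{a_{\sigma(n)}}, P] \cdots ]\,],
\end{align}
and then bound each nested commutator by brute force using the triangle inequality together with the fact that every $E_a$ and $P$ is a Pauli matrix, hence has operator norm exactly $1$.

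The key observation is that for any two operators $A,B$, we have $\|[A,B]\| \le 2\|A\|\|B\|$. Applied inductively to the innermost commutator first, one obtains $\|[E_{a_{\sigma(1)}},[E_{a_{\sigma(2)}},\ldots,[E_{a_{\sigma(n)}},P]\ldots]]\| \le 2^n \|E_{a_{\sigma(1)}}\|\cdots\|E_{a_{\sigma(n)}}\|\|P\| \le 2^n$. This bound is uniform over all $n!$ permutations $\sigma \in S_n$, so summing gives at most $n!\, 2^n$. The $\tfrac{1}{n!}$ prefactor cancels the combinatorial count, leaving $\|\mdiff_{\cluster V}(UPU^\dagger)\| \le |t|^n 2^n$, as claimed.

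There is essentially no obstacle here; the only mildly subtle point is making sure the normalization $\|E_a\|\le 1$ is actually used (which is guaranteed by \cref{defn:hamiltonian}) and that the bound $\|[A,B]\|\le 2\|A\|\|B\|$ is applied in the correct order (innermost first), since the outer operators do not a priori commute with the intermediate commutator.
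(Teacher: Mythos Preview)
Your proposal is correct and is exactly the paper's approach: the paper's one-line proof simply notes that each nested commutator in \cref{eq:diffVnc} has norm at most $2^n$, and you have spelled out the details of that bound and the cancellation of $n!$ against the sum over permutations.
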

\begin{proof}
The norm of a nested commutator in the last line of \cref{eq:diffVnc} is at most $2^n$.
\end{proof}

Finally, for the time complexity results, note the same algorithm for computing clusters works in this setting, giving \cref{thm:real-timeA}.
To compute the series $q_m$, one could use the same approach as \cref{sec:computingClusterDerivatives}, but we take a simpler and faster approach: we have an explicit form for the series, \cref{eqn:real-time-series}, so all we need to do is compute the commutators $[H, P]_n$ iteratively, for $n$ from $1$ to $m$.
We can maintain $[H, P]_n$ as a sum over clusters of monomials $\lambda^{\cluster V}$ with corresponding integer matrices (where each integer is bounded by $2^nn!$ by \cref{thm:real-time4}), of which faithful representations can be maintained as done in \cref{sec:computingClusterDerivatives}.
For each of these integer matrices $X$, one can compute the corresponding commutator $[H, X]$ in $O(4^n\poly(n))$ time, giving the matrices for the next commutator $[H, P]_{n+1}$.
This gives the specified runtime.

\subsection{Learning algorithm}

Our learning algorithm in the ``real-time dynamics'' setting will be essentially the same as that of learning from the ``Gibbs state'' setting.
For each node $a$ of $\graph$, 
choose $P_a$ to be any single-qubit Pauli that anticommutes with $E_a$, 
and let $Q_a = i[P_a,E_a] = 2 i P_a E_a$.
Define $\fun_a = \fun(Q_a,P_a)$.
Our learning algorithm consists of two parts.
\begin{enumerate}
\item 
Find estimates $\tilde \fun_a$ such that $\abs*{\tilde \fun_a - \fun_a} \le t\varepsilon$ for all $a \in [\numTerms]$.
\item
Approximately invert the function $\{\lambda_a\} \mapsto \{\fun_a\}$ given $\tilde\fun_a$.
\end{enumerate}

\begin{lemma}\label{lem:real-stepone}
    Consider a Hamiltonian $\{(a,E_a,\lambda_a): a \in [\numTerms]\}$ on $\numQubits$ qubits.
    We can find estimates $\tilde \fun_a$ such that $\abs*{\tilde \fun_a - \fun_a} \le t\varepsilon$ for all $a \in [\numTerms]$, with probability at least $1-\delta$, using only $O(\frac{\degree}{t^2\eps^2}\log(\frac{\numTerms}{\delta}))$ applications of $U$ and with time complexity $O(\frac{\numQubits\degree}{t^2\eps^2}\log(\frac{\numTerms}{\delta}))$.
\end{lemma}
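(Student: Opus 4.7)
The plan is to follow the template of \Cref{lem:stepone}: color the dual interaction graph $\graph$ to parallelize measurement across a color class, apply Hoeffding's inequality per term, and union-bound over the $\numTerms$ coefficients. The new ingredient is a random-sign scheme needed because $\fun_a = \tfrac{1}{\fulldimension}\Tr(Q_a U P_a U^\dagger)$ is not directly a measurement expectation on a canonical state, unlike the Gibbs-state case in which $\langle E_a \rangle$ is simply the expectation of $E_a$ against $\rho$.

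The single-term building block is as follows: because $Q_a = 2iP_aE_a$ is traceless, Hermitian, and has operator norm $2$, the product state $\rho_a^+ := (I + P_a)/\fulldimension$ --- a $+1$-eigenstate of $P_a$ on its qubit $i_a \in \Supp(E_a)$, maximally mixed elsewhere --- satisfies $\Tr(Q_a U \rho_a^+ U^\dagger) = \tfrac{1}{\fulldimension}\Tr(Q_a) + \tfrac{1}{\fulldimension}\Tr(Q_a U P_a U^\dagger) = \fun_a$, so preparing $\rho_a^+$, applying $U$, and measuring $Q_a$ qubit-by-qubit in the appropriate Pauli bases on $\Supp(E_a)$ yields a $[-2,2]$-bounded unbiased estimator of $\fun_a$ from one use of $U$. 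To run such estimators in parallel, I will greedily color $\graph$ with $\degree+1$ colors so that within each color class $C$ the supports $\{\Supp(E_a) : a \in C\}$ are pairwise disjoint; then the $\{P_a\}_{a \in C}$ act on distinct qubits and the $\{Q_a\}_{a \in C}$ have disjoint support and can be measured simultaneously. For each $C$, I will run $T = O(\log(\numTerms/\delta)/(t\eps)^2)$ independent rounds; in each round I sample independent signs $r = (r_a)_{a \in C} \in \{\pm 1\}^C$, prepare $\rho_C^r := 2^{-\numQubits}\prod_{a \in C}(I + r_a P_a)$, apply $U$ once, simultaneously measure each $Q_a$ to get $q_a \in \{\pm 2\}$, and record $r_a q_a$; the final estimate $\tilde\fun_a$ is the average of the $T$ recorded values.

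The main (but brief) substantive step is to verify that $r_a q_a$ is an unbiased estimator of $\fun_a$. Expanding $\rho_C^r = 2^{-\numQubits}\sum_{S \subseteq C} r_S P_S$ with $r_S := \prod_{a' \in S} r_{a'}$ and $P_S := \prod_{a' \in S} P_{a'}$, and noting that $\mathbb{E}_r[r_a r_S] = \mathbf{1}[S = \{a\}]$ because any other $S$ leaves some $r_{a'}$ to an odd power, one obtains
\begin{equation*}
\mathbb{E}\,[\,r_a q_a\,] = \tfrac{1}{2^{\numQubits}}\sum_{S \subseteq C}\mathbb{E}_r[r_a r_S]\,\Tr(Q_a U P_S U^\dagger) = \tfrac{1}{\fulldimension}\Tr(Q_a U P_a U^\dagger) = \fun_a,
\end{equation*}
with $r_a q_a \in [-2,2]$. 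Hoeffding then gives $|\tilde\fun_a - \fun_a| \le t\eps$ with failure probability $\le \delta/\numTerms$ from $T = O(\log(\numTerms/\delta)/(t\eps)^2)$ rounds, and a union bound over the $\numTerms$ terms gives overall success probability $\ge 1-\delta$. Summed over the $\degree+1$ color classes, this uses $O(\degree\log(\numTerms/\delta)/(t\eps)^2)$ applications of $U$; each round does $O(\numQubits)$ classical work (state preparation, measurement, and bookkeeping), yielding the claimed time complexity $O(\numQubits\degree\log(\numTerms/\delta)/(t\eps)^2)$. Everything except the sign-cancellation step is routine concentration.
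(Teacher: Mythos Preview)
Your proof is correct and takes a genuinely different route from the paper. The paper prepares, for every qubit independently, a uniformly random eigenstate of a uniformly random single-qubit Pauli (one of six product states), applies $U$ once, measures the relevant operator on $\Supp(E_a)$, and then \emph{conditions} in software on the event (probability $1/6$) that the qubit supporting $P_a$ happened to be prepared in the $+1$-eigenstate of $P_a$; the uniform randomness on every other qubit averages to $I/2$, so the conditional expectation isolates $\tfrac{1}{\fulldimension}\Tr(\cdot\, U P_a U^\dagger)$. Your scheme instead \emph{targets} the preparation per color class: you fix the designated qubit $i_a$ for each $a\in C$ ahead of time, place the $r_a$-eigenstate of $P_a$ there (random sign only), leave the remaining qubits maximally mixed, and then use the identity $\E_r[r_a r_S]=\mathbf 1[S=\{a\}]$ to kill the cross terms $\Tr(Q_a U P_S U^\dagger)$ in the product-state expansion.

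Both arguments yield the same $O(\degree\,\log(\numTerms/\delta)/(t\eps)^2)$ application count via the same $(\degree+1)$-coloring and Hoeffding-plus-union-bound template. Your estimator is slightly tighter in constants (it is always active and bounded in $[-2,2]$, rather than being zeroed out $5/6$ of the time and rescaled by $12$), and your write-up makes explicit why the random signs are needed: without them, the naive product state $\prod_{a\in C}(I+P_a)$ would produce unwanted cross terms in the estimator, since $\Tr(Q_a U P_S U^\dagger)$ need not vanish for $\abs{S}\ge 2$. The paper's approach, on the other hand, has the minor conceptual advantage that the \emph{state preparation} is oblivious to the color class (only the measurement depends on it). Either argument proves the lemma.
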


\begin{proof}
Recall that in \cref{lem:stepone}, we argued that, from one copy of $\rho$, it's possible to generate a bounded random variable $Y_a \in [-1,1]$ that is an unbiased esimator of $\Tr(E_a \rho)$.
Moreover, for a set of terms $S \subset [\numTerms]$, it is possible to generate $Z_a$'s for all such $a \in S$ from one copy of $\rho$, provided the $E_a$'s are non-overlapping, or in other words, provided $S$ is an independent set in the dual interaction graph $\graph$.
We will use this again here; the main challenge is that for each $E_a$ we wish to measure against a different $P_a$, so it's not immediately clear how to use one application of $U$ to produce estimators for multiple different terms.
We resolve this by thinking of $\rho$ as a distribution over states, and then conditioning on this distribution to measure expectations over what are effectively different mixed states.

Consider the procedure of sampling a string $s \sim \{0,1,+,-,i,-i\}^\numQubits$ uniformly at random, and then preparing the state $\rho = |s_1\rangle\langle s_1| \otimes \cdots \otimes |s_\numQubits\rangle\langle s_{\numQubits}|$.
Notice that if we discard our initial string $s$, thereby averaging over $s$, then $\rho$ is the maximally mixed state; further, if we discard the entire initial string apart from one qubit $i$, then $\rho = \frac I2 \otimes \cdots \otimes \frac I2 \otimes |s_i\rangle\langle s_i| \otimes \frac I2 \otimes \cdots \otimes \frac I2$.
Note that, for $s = 0,1,+,-,i,-i$, $2|s\rangle \langle s| - I$ is a Pauli matrix $Z, -Z, X, -X, Y, -Y$, respectively.

Suppose we apply $U$ to $\rho$, measure it on the support of $E_a$ to get the unbiased estimator $Y_a$ of $\Tr(E_a U \rho U^\dagger)$, and define the following random variable.
\begin{align*}
    Z_a &= \begin{cases}
        Y_a & \text{if } s_{\Supp(P_a)} \text{ satisfies } 2|s_{\Supp(P_a)}\rangle \langle s_{\Supp(P_a)}| - I = P_a \\
        0 & \text{otherwise}
    \end{cases}
\intertext{Here, we abuse notation by using $P_a$ to refer both to the 1-qubit Pauli and the $n$-qubit tensor of that Pauli with the identity matrix.
The random variable $Z_a$ is bounded in $[-1,1]$ because $Y_a$ is, and furthermore,}
    \E[Z_a] &= \Pr_s\Big[2|s_{\Supp(P_a)}\rangle \langle s_{\Supp(P_a)}| - I = P_a\Big] \E_s\Big[Y_a \mid 2|s_{\Supp(P_a)}\rangle \langle s_{\Supp(P_a)}| - I = P_a\Big] \\
    &= \frac16\Tr\Big(E_a U \Big(\E[\rho \mid 2|s_{\Supp(P_a)}\rangle \langle s_{\Supp(P_a)}| - I = P_a]\Big) U^\dagger \Big) \\
    &= \frac16\Tr\Big(E_a U \Big(\frac I2 \otimes \cdots \otimes \frac I2 \otimes \frac{I + P_a}{2} \otimes \frac I2 \otimes \cdots \otimes \frac I2 \Big) U^\dagger \Big) \\
    &= \frac{1}{12}\Tr\Big(E_a U P_a U^\dagger \Big)
\end{align*}
So, $12 Z_a$ is an unbiased estimator for $\fun_a$, and the rest of the result follows exactly like it did in \cref{lem:stepone}: one can use one copy of $\rho$ to get multiple estimators $Z_a$, provided their corresponding terms do not overlap.
Thus, in $\degree + 1$ rounds of $O(\frac{1}{t^2\eps^2}\log\frac{\numTerms}{\delta})$ applications of $U$ each, one can get $O(\frac{1}{t^2\eps^2}\log\frac{\numTerms}{\delta})$ copies of $Z_a$ for every $a \in [\numTerms]$.
By Chernoff bound, each rescaled average $12 \bar{Z}_a$ will then satisfy $\abs*{12\bar{Z}_a - \fun_a} \leq \eps$ with probability $\geq 1-\delta/\numTerms$, and so they all satisfy $\abs*{12\bar{Z}_a - \fun_a} \leq \eps$ with probability $\geq 1-\delta$.
The time complexity is the same, since the only change to the procedure is doing one additional $O(1)$-time check per sample $Z_a$.
\end{proof}

Finally, for the second, classical part of the algorithm, note that operators $P_a$ and $Q_a$ are chosen so that the leading term of $\fun_a$ is a known constant multiple of $\lambda_a$:
\begin{align}
	\fun(Q,P) 
	&= \frac{\Tr}{\fulldimension}\left( Q P - i t Q[H,P] + \cdots \right)\label{eq:funQP}\\
	&= t \frac{\Tr}{\fulldimension} \left( [P, E_a] \sum_b \lambda_b [E_b, P] + \cdots \right)\nonumber\\
	&= 4t \lambda_a + \cdots  \nonumber
\end{align}
where the last line uses the orthonormality of the Hamiltonian terms.
This observation implies that the Jacobian is ``band-diagonal'', and suffices, along with \cref{thm:real-time}, for the full analysis of the Newton--Raphson method in \cref{sec:newton-raphson} to go through identically.
The only difference here is that $\beta$ is replaced with $-4t$, so this part takes time $O(\numTerms L / \eps)\poly(\degree \log(1/t\eps))$.
The time complexity of the quantum part dominates.

\begin{remark}
Since the series expansion in \cref{eq:funQP} is only shown to converge for $t < t_c$
where $t_c = 1/\poly(\degree)$,
we can only claim that our algorithm works for small enough $t$.
In the learning problem from Gibbs states, 
the analogous condition $\beta \le 1/\poly(\degree)$
is due to the fact that our approach cannot handle arbitrarily low temperature;
the sample complexity result~\cite{AAKS21} shows that learning is feasible for all temperatures, at least in an information-theoretic sense.
In contrast, in the learning problem from real-time evolution,
it is fundamental that we have to restrict the evolution time to be smaller
than some constant set by $\degree$;
for a certain long time, the learning is simply impossible.
Consider a Hamiltonian $H = -\lambda I + \lambda(I+Z_1)(I+Z_2)\cdots (I+Z_n)$ 
on $n$ qubits where $Z_j$ is the Pauli $Z$ on qubit $j$.
This Hamiltonian is the sum of all nonidentity products of $Z$'s with a uniform coefficient $\lambda \in [-1,1]$,
and obeys our normalization conditions for Hamiltonians.
The intersection degree $\degree$ is exponentially large in $n$.
The eigenspectrum of $H$ consists of just two values, $(2^n-1)\lambda$ and $-\lambda$.
Hence, $e^{-it H} \propto I$ if $t = 2\pi / 2^n\lambda = \Theta((\degree \lambda)^{-1})$.
Since $\lambda$ is unknown, we conclude that 
no general algorithm can determine $\lambda$ unless we restrict $t$ to be smaller than $1/\poly(\degree)$.
\end{remark}

\section{Algorithm for parameter learning of Markov random fields} \label{sec:mrfs}

In this section, we will prove a folklore result by giving a simple algorithm for parameter learning of Markov random fields.
By the Hammersley--Clifford theorem, a Markov random field $\mathcal{D}$ over $\{-1,+1\}^{\numQubits}$ can be written as
\begin{align*}
    \Pr_{Z \sim \mathcal{D}}[Z = z] \propto \exp(\sum_{S \subset [\numQubits]} \psi_S(z_S))
\end{align*}
for some functions $\psi_S: \mathbb{R}^{\abs{S}} \to \mathbb{R}$, where $z_S =(z_i)_{i \in S}$.
Typically, the sum is restricted to be over $S$ with size at most some constant.
By writing every $\psi_S$ as a sum of products of variables, this expression becomes
\begin{align*}
    \Pr_{Z \sim \mathcal{D}}[Z = z] \propto \exp(\sum_{S \subset [\numQubits]} \lambda_S z^S),
\end{align*}
where $z^S := \prod_{i \in S} z_i$.
For the parameter learning problem, we assume we already know the structure of the MRF, so suppose we are given a hypergraph $G = (V=[\numQubits], E)$ on $\numQubits$ vertices such that
\begin{align*}
    \Pr_{Z \sim \mathcal{D}}[Z = z] \propto \exp(-\beta\sum_{S \in E} \lambda_S z^S).
\end{align*}
Here, $-\beta$ is a rescaling factor so that we can assume without loss of generality that $\lambda_S \in [-1, 1]$ for all $S \in E$.
We will interpret the $\lambda$ parameters as a vector in $[-1,1]^{\abs{E}}$ and $\beta \in (0,\infty)$.
This is precisely a Gibbs state of a classical Hamiltonian, following the definitions given in \cref{sec:notation}.
Further, this is the setting where each term is is a product of Paulis, since each $z^S$ is a product of Pauli $Z$ operators.

For a vertex $i \in [\numQubits]$, let $E_i = \{S \in E \mid i \in S\}$ be the set of hyperedges containing $i$ and let $N_i = (\cup_{S \in E_i} S) \setminus \{i\}$ be the neighborhood of $i$.
Our algorithm will depend on two parameters: maximum degree $d := \max_{i \in [\numQubits]} \abs{E_i}$ and an ``average order'' parameter $L := \max_{i \in [\numQubits]} \frac{1}{d}\abs{N_i \cup \{i\}}$.
We do not consider the empty graph, so that $d, Ld \geq 1$.

We will need the following lemma.

\begin{lemma}[Version of Lemma 2.1, \cite{Bresler2015}] \label{ising-unbiased}
    For any node $u \in V$, subset $S \subset V$, and configuration $x_S \in \{\pm 1\}^{|S|}$,
    \[
        \min_{b \in \{-1,+1\}}\Pr[X_u = b \mid X_S = x_S]
        \geq \frac{1}{\exp(2\beta d) + 1}
        \geq \frac{1}{2}\exp(-2\beta d).
    \]
\end{lemma}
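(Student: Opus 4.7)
The key observation is that flipping a single spin $X_u$ changes the energy $\sum_{T \in E} \lambda_T y^T$ in a controlled way, since only terms $T \in E_u$ (the hyperedges containing $u$) are affected, and each such term contributes a change of at most $2|\lambda_T| \le 2$ in absolute value. So the total energy change from flipping $X_u$ is at most $2d$ in magnitude, meaning the change to the Gibbs exponent is at most $2\beta d$.

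Concretely, I would write
\[
\Pr[X_u = b \mid X_S = x_S] = \frac{\sum_{y \in \{\pm 1\}^N}\iverson{y_u = b,\, y_S = x_S}\exp(-\beta \sum_{T \in E} \lambda_T y^T)}{\sum_{y \in \{\pm 1\}^N}\iverson{y_S = x_S}\exp(-\beta \sum_{T \in E} \lambda_T y^T)},
\]
and form the ratio $r := \Pr[X_u = +1 \mid X_S = x_S] / \Pr[X_u = -1 \mid X_S = x_S]$. Pairing up terms in the numerator and denominator sums by the values of the remaining spins $y_{V \setminus \{u\}}$ (consistent with $y_S = x_S$), each paired ratio of Gibbs weights equals $\exp(-\beta \Delta(y'))$ for some $\Delta(y')$ satisfying $|\Delta(y')| \le 2d$ by the observation above. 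Since this holds for every paired ratio, the ratio of sums $r$ also lies in $[\exp(-2\beta d), \exp(2\beta d)]$.

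Finally, letting $p := \Pr[X_u = +1 \mid X_S = x_S]$, the constraint $p/(1-p) \in [\exp(-2\beta d), \exp(2\beta d)]$ immediately yields $\min(p, 1-p) \ge 1/(1 + \exp(2\beta d))$ by elementary algebra, and the trailing inequality $1/(1 + \exp(2\beta d)) \ge \tfrac{1}{2}\exp(-2\beta d)$ is immediate from $1 + \exp(2\beta d) \le 2\exp(2\beta d)$. This is a purely elementary argument with no real obstacle; the only care needed is in the accounting of the energy change, which uses both the degree bound $|E_u| \le d$ and the coefficient bound $|\lambda_T| \le 1$.
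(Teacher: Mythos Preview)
Your proof is correct. Both arguments rest on the same observation—that flipping $X_u$ changes the Gibbs exponent by at most $2\beta d$—but the mechanics differ. The paper first computes $\bigl|\E[X_u \mid X_{V\setminus\{u\}}]\bigr|$ exactly (a clean two-point calculation), then invokes the tower property of conditional expectation and Jensen's inequality to pass from full conditioning to the partial conditioning on $X_S$, and finally converts the expectation bound to a probability bound via $\min(p,1-p)=\tfrac12(1-|\E X_u|)$. Your route is more direct: by pairing configurations that differ only at $u$ and using the elementary mediant-type inequality (if $a_i/b_i\in[c,C]$ for all $i$ with $a_i,b_i>0$, then $\sum a_i/\sum b_i\in[c,C]$), you bound the probability ratio without ever passing through full conditioning or Jensen. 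The paper's detour isolates a reusable intermediate quantity, but for this lemma alone your argument is shorter.
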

\begin{proof}
    First, using the Markov property
    \begin{align*}
        \abs{\E[X_u \mid X_{V \setminus \{u\}} = x_{V \setminus \{u\}}]}
        &= \abs{\frac{\sum_{x_u} x_u\exp(-\beta(\sum_{S \in E} \lambda_S x^S))}{\sum_{x_u} \exp(-\beta(\sum_{S \in E} \lambda_S x^S))}} \\
        &= \abs{\frac{\sum_{x_u} x_u\exp(-\beta(\sum_{S \in E_u} \lambda_S x^{S \setminus \{u\}}x_u))}{\sum_{x_u} \exp(-\beta(\sum_{S \in E_u} \lambda_S x^{S \setminus \{u\}}x_u))}} \\
        &= \abs{\frac{\exp(-\beta(\sum_{S \in E_u} \lambda_S x^{S \setminus \{u\}})) - \exp(\beta(\sum_{S \in E_u} \lambda_S x^{S \setminus \{u\}}))}{\exp(-\beta(\sum_{S \in E_u} \lambda_S x^{S \setminus \{u\}})) + \exp(\beta(\sum_{S \in E_u} \lambda_S x^{S \setminus \{u\}}))}} \\
        &= \frac{
        \exp(\beta\abs*{\sum_{S \in E_u} \lambda_S x^{S \setminus \{u\}}}) -
        \exp(-\beta\abs*{\sum_{S \in E_u} \lambda_S x^{S \setminus \{u\}}})}{
        \exp(\beta\abs*{\sum_{S \in E_u} \lambda_S x^{S \setminus \{u\}}}) +
        \exp(-\beta\abs*{\sum_{S \in E_u} \lambda_S x^{S \setminus \{u\}}})} \\
        &= 1 - \frac{2}{\exp(2\beta\abs*{\sum_{S \in E_u} \lambda_S x^{S \setminus \{u\}}}) + 1}
    \end{align*}
    Further, by the tower property of conditional expectation and Jensen's inequality,
    \begin{align*}
        &|\E[X_u \mid X_S = x_S]| \\
        &=|\E[\E[X_u \mid X_{V \setminus \{u\}} = x_{V \setminus \{u\}}] \mid X_S = x_S]| \\
        &\leq\E[|\E[X_u \mid X_{V \setminus \{u\}} = x_{V \setminus \{u\}}]| \mid X_S = x_S] \\
        &=\E\Big[ 1 - \frac{2}{\exp(2\beta\abs*{\sum_{S \in E_u} \lambda_S x^{S \setminus \{u\}}}) + 1} \mid X_S = x_S\Big] \\
        &= 1 - \frac{2}{\exp(2\beta\abs*{\sum_{S \in E_u} \lambda_S x^{S \setminus \{u\}}}) + 1} \\
        &\leq 1 - \frac{2}{\exp(2\beta d) + 1}
    \end{align*}
    For a $\{\pm 1\}$-valued random variable $X$, $\min\{\Pr[X = 1], \Pr[X = -1]\} = \frac12(1 - |\E[X]|)$, so
    \begin{equation*}
        \min_{b \in \{-1, +1\}}\{\Pr[X_u = b \mid X_S = x_S]\}
        = \frac12(1 - |\E[X_u \mid X_S = x_S]|)
        \geq \frac{1}{\exp(2\beta d) + 1}. \qedhere
    \end{equation*}
\end{proof}

Define the sigmoid function $\sigma(x) := \frac{e^x}{1+e^x}$.
We will need the following fact about the sigmoid:
\begin{lemma}[Claim~4.2, \cite{km17}] \label{sigmoid-lipschitz}
    For all $x, y$, $|\sigma(x) - \sigma(y)| \geq \exp(-\abs{x}-3)\min(1, \abs{x-y})$.
\end{lemma}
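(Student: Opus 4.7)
The plan is to reduce to the Mean Value Theorem after computing a clean lower bound on $\sigma'$. First, I would note that $\sigma'(x) = \frac{e^x}{(1+e^x)^2} = \frac{1}{e^x + 2 + e^{-x}}$, which is manifestly symmetric in $x \mapsto -x$, so it depends only on $\abs{x}$. Since $e^{-\abs{x}} \leq 1$, the denominator satisfies $e^{\abs{x}} + 2 + e^{-\abs{x}} \leq 4e^{\abs{x}}$, giving the bound
\begin{equation}
    \sigma'(x) \geq \tfrac{1}{4} e^{-\abs{x}}.
\end{equation}

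For the easy case $\abs{x-y} \leq 1$, the MVT produces some $\xi$ between $x$ and $y$ with $\abs{\sigma(x)-\sigma(y)} = \sigma'(\xi)\abs{x-y}$. Since $\abs{\xi} \leq \abs{x} + 1$, the derivative bound yields
\begin{equation}
    \abs{\sigma(x)-\sigma(y)} \geq \tfrac{1}{4} e^{-\abs{x}-1}\abs{x-y} \geq e^{-\abs{x}-3}\abs{x-y},
\end{equation}
using $\ln 4 < 2$. In this case, $\min(1,\abs{x-y}) = \abs{x-y}$, so we are done.

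For $\abs{x-y} > 1$, the strategy is to use monotonicity of $\sigma$ to shrink the gap down to exactly $1$ and then reapply the previous argument. Specifically, assuming WLOG $y > x$, monotonicity gives $\abs{\sigma(y) - \sigma(x)} \geq \sigma(x+1) - \sigma(x)$, and another application of MVT (plus the derivative bound on the interval $[x,x+1]$) produces $\sigma(x+1) - \sigma(x) \geq \tfrac{1}{4}e^{-\abs{x}-1} \geq e^{-\abs{x}-3}$. Since $\min(1,\abs{x-y}) = 1$ in this case, this completes the proof. There is no real obstacle here: the only subtlety is shepherding the constants through the inequality $\tfrac{1}{4} \geq e^{-2}$ so that the bound can be written cleanly as $e^{-\abs{x}-3}$ rather than something messier.
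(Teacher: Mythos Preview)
Your proof is correct. The paper does not provide its own proof of this lemma; it simply cites Claim~4.2 of~\cite{km17}. So there is nothing to compare against, and your Mean Value Theorem argument with the derivative bound $\sigma'(x) \geq \tfrac{1}{4}e^{-\abs{x}}$ is a clean, self-contained verification.

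One small remark: your ``WLOG $y > x$'' in the second case is not quite free, since the claimed inequality is asymmetric in $x$ and $y$ (the right-hand side involves $\abs{x}$, not $\abs{y}$). If instead $y < x$, you should shrink to the interval $[x-1,x]$ rather than $[x,x+1]$; the MVT then produces $\xi \in (x-1,x)$, for which $\abs{\xi} \le \abs{x}+1$ still holds, and the rest goes through identically. This is a cosmetic fix, not a real gap.
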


\begin{theorem} \label{thm:mrfs}
    Fix $\eps \in (0,1)$.
    Given samples from the MRF $X^{(1)},\ldots,X^{(T)}$ with $T = \Theta(\exp(8\beta Ld^2 + 2Ld)\frac{1}{\beta^2\eps^2}\log\frac{\numQubits}{\delta})$, we can compute an estimate~$\hat{\lambda}$ such that $\|\hat{\lambda} - \lambda\|_\infty \leq \eps$ with probability $\geq 1-\delta$.
    The algorithm takes $O(T\numQubits(Ld^2+d2^d))$ time.
\end{theorem}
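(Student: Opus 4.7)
The plan is standard \emph{node-wise parameter estimation} exploiting the fact that, because the MRF is classical, the conditional distribution of $X_i$ given its neighborhood is a one-dimensional logistic. For each vertex $i$ and $y \in \{\pm 1\}^{N_i}$,
\[
  p_i(y) := \Pr[X_i = 1 \mid X_{N_i} = y] = \sigma(h_i(y)), \qquad h_i(y) := -2\beta\sum_{S \in E_i} \lambda_S\, y^{S\setminus\{i\}}.
\]
The characters $y \mapsto y^{S\setminus\{i\}}$ are distinct elements of the Walsh basis of $\{\pm 1\}^{N_i}$ (since distinct $S \in E_i$ give distinct subsets $S\setminus\{i\} \subseteq N_i$), so character orthogonality gives the Fourier inversion
\[
  \lambda_S = -\frac{1}{2\beta\cdot 2^{|N_i|}}\sum_{y\in\{\pm 1\}^{N_i}} y^{S\setminus\{i\}}\,h_i(y).
\]
The algorithm is then: estimate each $p_i(y)$ empirically, apply $\sigma^{-1}$ to obtain $\hat h_i(y)$, and plug into this formula to recover $\hat\lambda_S$.

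First I would lower bound $\Pr[X_{N_i}=y]$ by iterating \Cref{ising-unbiased} one vertex at a time along an enumeration of $N_i$, which yields
\[
  \Pr[X_{N_i}=y] \ \geq\ \bigl(\tfrac{1}{2}e^{-2\beta d}\bigr)^{|N_i|}\ \geq\ q:=2^{-Ld}e^{-2\beta Ld^2},
\]
using $|N_i| \leq Ld$. For each pair $(i,y)$, let $\hat p_i(y)$ be the empirical conditional probability computed from the samples and then clipped to $[\tfrac14 e^{-2\beta d},\,1-\tfrac14 e^{-2\beta d}]$; clipping can only decrease the error since \Cref{ising-unbiased} places the true value strictly in this range. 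A Chernoff bound on the number of samples with $X_{N_i}=y$ (concentrated near $Tq$), combined with a Chernoff bound on the conditional empirical and a union bound over the $\leq \numQubits\cdot 2^{Ld}$ pairs $(i,y)$, yields $|\hat p_i(y)-p_i(y)|\leq \gamma$ with probability $\geq 1-\delta$ provided $T=\Theta\bigl(\tfrac{2^{Ld}e^{2\beta Ld^2}}{\gamma^2}\log\tfrac{\numQubits}{\delta}\bigr)$.

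Next I would translate probability error into logit error using \Cref{sigmoid-lipschitz}. Since $|h_i(y)|\leq 2\beta d$ and $\hat p_i$ is clipped, the lemma (used in its contrapositive form) gives $|\hat h_i(y)-h_i(y)|\leq e^{2\beta d+3}\gamma$ whenever $\gamma\leq 1$. Plugging $\hat h_i$ into the Fourier inversion formula then yields
\[
  |\hat\lambda_S-\lambda_S| \ \leq\ \frac{1}{2\beta}\max_y|\hat h_i(y)-h_i(y)| \ \leq\ \frac{e^{2\beta d+3}}{2\beta}\,\gamma.
\]
Setting $\gamma=\Theta(\beta\eps\,e^{-2\beta d})$ achieves $\|\hat\lambda-\lambda\|_\infty\leq\eps$ and gives total sample complexity
\[
  T=\Theta\!\left(\frac{e^{4\beta d+2\beta Ld^2}\cdot 2^{Ld}}{\beta^2\eps^2}\log\tfrac{\numQubits}{\delta}\right),
\]
which is bounded by the claimed $\Theta\bigl(\tfrac{e^{8\beta Ld^2+2Ld}}{\beta^2\eps^2}\log\tfrac{\numQubits}{\delta}\bigr)$ using $4\beta d\leq 6\beta Ld^2$ (from $Ld\geq 1$) and $2^{Ld}\leq e^{Ld}$. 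The time complexity follows from computing each $\hat p_i(y)$ via a single pass over the samples, hashing $X^{(t)}_{N_i}$ in $O(Ld)$ time per node per sample, and then performing each Fourier inversion in $O(d\cdot 2^{Ld})$ time per node.

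The main obstacle to watch is the sigmoid inversion step: $\sigma^{-1}$ has unbounded derivative near $0$ and $1$, so a naive plug-in of the empirical $\hat p_i(y)$ could blow up the error arbitrarily. The saving grace is \Cref{ising-unbiased}, which forces the true $p_i(y)$ to lie in $[\tfrac12 e^{-2\beta d},\,1-\tfrac12 e^{-2\beta d}]$; the clipping step carries this quantitative bound to $\hat p_i(y)$, making $\sigma^{-1}$ Lipschitz with the controlled constant that drives the error accounting. Once this is established, the remainder is a routine propagation of Chernoff and union bounds through the Fourier inversion.
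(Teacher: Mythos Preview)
Your approach is correct and follows the same high-level template as the paper: estimate conditional probabilities $\Pr[X_i=1\mid X_{N_i}=y]$, invert the sigmoid using \Cref{sigmoid-lipschitz} (with \Cref{ising-unbiased} keeping you away from the endpoints), and then linearly combine the resulting logits to isolate each $\lambda_S$. The sample-complexity accounting matches.

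The substantive difference is in how you isolate $\lambda_S$. You use full Fourier inversion over $\{\pm 1\}^{N_i}$, which requires estimating the conditional for all $2^{|N_i|}\le 2^{Ld}$ neighborhood configurations. The paper instead observes that for a fixed $S\ni v$, each competing term $T\in E_v\setminus\{S\}$ can be killed by averaging over a \emph{single} coordinate: pick one vertex in $T\setminus S$ (put it in $N_{\mathrm{out}}$) or, if $T\subsetneq S$, one vertex in $S\setminus T$ (put it in $N_{\mathrm{in}}$). Since $|E_v|\le d$, this yields $|N_{\mathrm{in}}\cup N_{\mathrm{out}}|\le d-1$, and averaging over only those coordinates (with the rest of $N_v$ pinned to $+1$) already isolates $\lambda_S$. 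So the paper needs only $O(2^d)$ configurations per term rather than $2^{Ld}$ per node.

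For the sample complexity this distinction is essentially irrelevant---both approaches condition on a full $N_v$-configuration, so the same $p_{\min}\ge 2^{-Ld}e^{-2\beta Ld^2}$ governs the Chernoff bound, and the extra $2^{Ld}$ versus $2^d$ only appears inside a logarithm. For the time complexity, however, your post-processing cost is $O(Nd\cdot 2^{Ld})$, not $O(TNd\cdot 2^d)$ as the theorem states. You can still recover the claimed bound by noting that $T$ already contains an $e^{2Ld}\ge 2^{Ld}$ factor, so $Nd\cdot 2^{Ld}=O(TNd\cdot 2^d)$; but you did not make this argument, so as written your time-complexity claim does not match the theorem. The paper's $N_{\mathrm{in}}/N_{\mathrm{out}}$ trick gives the $2^d$ dependence directly, without leaning on the size of $T$.

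One minor correction: in your sigmoid-inversion step, the condition needed is $\gamma<e^{-2\beta d-3}$ (so that \Cref{sigmoid-lipschitz} forces $|\hat h-h|<1$ and the $\min$ resolves to $|\hat h-h|$), not merely $\gamma\le 1$. Your choice $\gamma=\Theta(\beta\eps e^{-2\beta d})$ satisfies this for suitable constants, so the argument goes through, but the stated hypothesis is too weak.
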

For low-intersection Hamiltonians (as defined in the introduction), $L = O(1)$ and $d \leq \degree + 1 = O(1)$.
For constant $L$ and $d$, the sample complexity and time complexity of learning a classical Hamiltonian to $\ell_\infty$ error $\eps$ become
\[
    \frac{\exp(O(\beta))}{\beta^2\eps^2}\log\frac{\numQubits}{\delta}
    \text{ and }
    \frac{\exp(O(\beta))}{\beta^2\eps^2}\numQubits\log\frac{\numQubits}{\delta},
\]
respectively.

\begin{proof}
    First, fix a particular $v \in [n]$, and consider conditioning on its neighbors $N_v$.
    The distribution on $x_v$ after conditioning is
    \begin{align*}
        \Pr[X_v = x_v \mid X_{N_v} = x_{N_v}]
        = \sigma(2\beta \sum_{S \in E_v} \lambda_S x^S).
    \end{align*}
    We now show that it suffices to be able to estimate such conditional probabilities, for a particular setting of $X_{N(v)}$.
    Let $q_x^{(v)}$ be the argument inside the $\sigma$ above, so
    \begin{align*}
        q_x^{(v)} &:=
        \sigma^{-1}(\Pr[X_v = x_v \mid X_{N_v} = x_{N_v}])
        = 2\beta \sum_{S \in E_v} \lambda_S x^{S}.
    \end{align*}
    Note that $q_x^{(v)}$ only depends on those $x_u$ where $u \in N_v$.
    We will argue below that we can get an estimate of our conditional probability $\sigma(q_x^{(v)})$ to $\exp(-\abs{q_x^{(v)}}-3)\min(0.5, 2\beta\eps)$ error.
    We will invert $\sigma$ on this estimate to get an estimate $\hat{q}_x^{(v)}$ for $q_x^{(v)}$, so we denote the original estimate to be $\sigma(\hat{q}_x^{(v)})$.
    By \cref{sigmoid-lipschitz},
    \begin{align*}
        \min(1, \abs{q_x^{(v)} - \hat{q}_x^{(v)}})
        &\leq \exp(\abs{q_x^{(v)}}+3)\abs{\sigma(q_x^{(v)}) - \sigma(\hat{q}_x^{(v)})}
        \leq \min(0.5, 2\beta\eps) \\
        \text{ so }
        \abs{q_x^{(v)} - \hat{q}_x^{(v)}}
        &\leq \min(0.5, 2\beta\eps)
        \leq 2\beta\eps
    \end{align*}
    So $\hat{q}_x^{(v)}$ is an estimate for $q_x^{(v)}$ up to additive $2\beta\eps$ error.
    Now, we show that we can use these $\hat{q}_x^{(v)}$'s to get a good estimate of the parameters $\{\lambda_S\}_{S \in E}$.

    Suppose we want to know $\lambda_S$.
    Pick $v \in S$ and consider all $T \in E_v$ that are not $S$.
    If $T \nsubseteq S$, then choose some vertex $u \in T \setminus S$ and place it in a set $N_{\text{out}}$.
    Otherwise, $T \subsetneq S$, so we choose some vertex $u \in S \setminus T$ and place it in a set $N_{\text{in}}$.
    Note that $N_{\text{in}} \subset S$ and $N_{\text{out}} \subset [\numQubits] \setminus S$, so they are disjoint.\footnote{For some intuition, in the conventional setting where vertices are on a lattice and a term $S$ is a connected piece of the lattice, one can think of taking $N_{\text{in}}$ to be $S$ and $N_{\text{out}}$ to be the neighborhood of $S$ (or the boundary of $S^c$).}
    We can get an estimate for $\lambda_S$ by averaging $q_z^{(v)}$ over the coordinates of $N_{\text{in}}$ and $N_{\text{out}}$ in a particular way; for a set of indices $I \subset [n]$, let $P_I = \{z \in \{\pm 1\}^\numQubits \mid z_i = 1 \text{ for all } i \not\in I\}$ be a slice of the Hamming cube.
    \begin{multline*}
        \frac{1}{2\beta} \E_{z \sim P_{N_{\text{in}} \cup N_{\text{out}}}} [z^{N_{\text{in}}} q_z^{(v)}]
        = \Big[\E_{z_{N_{\text{in}}}} \E_{z_{N_{\text{out}}}} \Big[z^{N_{\text{in}}} \sum_{T \in E_v} \lambda_T z^T\Big]\Big]_{z = \vec{1}} \\
        = \Big[\E_{z_{N_{\text{in}}}} \Big[\sum_{\substack{T \in E_v \\ T \cap N_{\text{out}} = \varnothing}} \lambda_T z^{N_{\text{in}}} z^T\Big]\Big]_{z = \vec{1}}
        = \Big[\sum_{\substack{T \in E_v \\ T \cap N_{\text{out}} = \varnothing \\ T^c \cap N_{\text{in}} = \varnothing}} \lambda_T z^{T \setminus N_{\text{in}}}\Big]_{z = \vec{1}}
        = \Big[\lambda_S z^{S \setminus N_{\text{in}}}\Big]_{z = \vec{1}}
        = \lambda_S
    \end{multline*}
    Suppose we have an estimate of $q_z^{(v)}$, $\hat{q}_z^{(v)}$, to $2\beta \eps$ error.
    Then
    \begin{align*}
        \abs{\frac{1}{2\beta}\E_{z \sim P_{N_{\text{in}} \cup N_{\text{out}}}} z^{N_{\text{in}}} \hat{q}_z^{(v)} - \lambda_S}
        &= \frac{1}{2\beta}\abs{\E_{z \sim P_{N_{\text{in}} \cup N_{\text{out}}}} z^{N_{\text{in}}} (\hat{q}_z^{(v)} - q_z^{(v)})} \\
        &\leq \frac{1}{2\beta}\E_{z \sim P_{N_{\text{in}} \cup N_{\text{out}}}} \abs*{z^{N_{\text{in}}}} \abs*{\hat{q}_z^{(v)} - q_z^{(v)}} \\
        &\leq \frac{1}{2\beta}\E_{z \sim P_{N_{\text{in}} \cup N_{\text{out}}}} \abs*{z^{N_{\text{in}}}} 2\beta\eps \\
        &= \eps
    \end{align*}
    Note that $\abs{P_{N_{\text{in}} \cup N_{\text{out}}}} = 2^{\abs{N_{\text{in}}} + \abs{N_{\text{out}}}} \leq 2^d$.
    So, now we just need to show how to estimate $\sigma(q_z^{(v)}) = \Pr[X_v = z_v \mid X_{N_v} = z_{N_v}]$ to $\exp(-\abs*{q_x^{(v)}}-3)\min(0.5, 2\beta\eps)$ error for $\leq 2^{d+1}$ choices of $z_v, z_{N_v}$, over all $\numTerms$ choices of $S$.
    Since $\abs*{q_x^{(v)}} \leq 2\beta d$ always, it suffices to estimate to $\Theta(\exp(-2\beta d)\min(0.5, \beta\eps))$ error.

    Recall that estimating the probability $p$ of an event occurring to $\alpha$ relative error with probability $\geq 1-\delta$ requires $\Theta(\frac{1}{p\alpha^2}\log\frac{1}{\delta})$ samples.
    (The estimator we use is simply the empirical probability of the event over the samples, and the proof follows from a Chernoff bound).
    So, if we pull $\Theta(\frac{1}{p_{\min}\alpha^2}\log\frac{\numTerms 2^{d+2}}{\delta})$ samples of our Markov random field, then we can get an estimate to any particular choice of $\Pr[X_v = z_v \text{ and } X_{N_v} = z_{N_v}]$ and $\Pr[X_{N_v} = z_{N_v}]$ to $\alpha$ relative error that is correct with probability $\geq 1 - \frac{\delta}{\numTerms 2^{d+2}}$, provided that $p_{\min}$ is chosen to be smaller than this probability.
    By union bound, we can get an estimate to all the probabilities we would need to compute the $C$ conditional probabilities with probability $\geq 1-\delta$ using this number of samples, provided $p_{\min}$ is smaller than \emph{all} the probabilities we wish to estimate.
    Also recall that by \cref{ising-unbiased},
    \begin{align*}
        \Pr[X_{S} = x_{S}] = \prod_{i=1}^{\abs{S}} \Pr[X_{u_i} = x_{u_i} \mid X_{u_j} = x_{u_j} \text{ for } j < i]
        \geq 2^{-\abs{S}}\exp(-2\beta \abs{S}d).
    \end{align*}
    Since we want to compute these probabilities for $S = N_v$ or $S = N_v \cup \{v\}$, we can take $p_{\min} = \exp(-2\beta(Ld)d - Ld)$.

    With these estimates, we can get good estimates to the conditional probabilities, assuming $\alpha$ is sufficiently small:
    \begin{align*}
        \frac{\Pr[X_v = z_v \text{ and } X_{N_v} = z_{N_v}](1 \pm \alpha)}{\Pr[X_{N_v} = z_{N_v}](1 \pm \alpha)}
        &\in \frac{\Pr[X_v = z_v \text{ and } X_{N_v} = z_{N_v}]}{\Pr[X_{N_v} = z_{N_v}]}(1 \pm O(\alpha)) \\
        &= \Pr[X_v = z_v \mid X_{N_v} = z_{N_v}](1 \pm O(\alpha))
    \end{align*}
    We set $\alpha = \Theta(\exp(-2\beta d)\min(0.5, \beta\eps)) < 0.5$ to conclude.
    The number of samples we need is
    \begin{align*}
        T &= \Theta\Big(\frac{\exp(2\beta(Ld)d + Ld)}{(\exp(-2\beta d)\min(1, \beta\eps))^2}\log\frac{\numTerms 2^{d+2}}{\delta}\Big) \\
        \intertext{Using that $\frac{1}{\min(1, \beta^2\eps^2)} = \frac{1}{\beta^2\eps^2}\max(1, \beta\eps)^2 \leq \frac{1}{\beta^2\eps^2}\exp(2\beta\eps)$, this is}
        &= \Theta\Big(d\exp(2\beta(Ld)d + Ld + 4\beta d + 2\beta\eps)\frac{1}{\beta^2\eps^2}\log\frac{\numTerms}{\delta}\Big).
    \end{align*}
    The bound in the theorem statement comes from simplifying and performing rough bounds on the above expression.
    \todo[inline]{ET: The time complexity needs the most verification, I'm pretty confident about everything else.}
    We can run this algorithm in time $O(T \numTerms(Ld+2^d))$, since for each term, we can compute the $2^d$ empirical conditional probabilities for it by taking $Ld$ time per sample to sort them into the various (disjoint) events.
    Classifying each sample only requires looking at the bits corresponding to $v$ and its neighbors, so checking all takes $O(T \numTerms Ld)$ time, and we need to do this for all terms.
    The result in the statement comes from taking $\numTerms \leq \numQubits d$.
\end{proof}

\addcontentsline{toc}{section}{References}
\bibliography{main}
\bibliographystyle{alphaurl}
\end{document}